\let\oldr@@t\r@@t
\def\r@@t#1#2{%
\setbox0=\hbox{$\oldr@@t#1{#2\,}$}\dimen0=\ht0
\advance\dimen0-0.2\ht0
\setbox2=\hbox{\vrule height\ht0 depth -\dimen0}%
{\box0\lower0.4pt\box2}}
\LetLtxMacro{\oldsqrt}{\sqrt}
\renewcommand*{\sqrt}[2][\ ]{\oldsqrt[#1]{#2}}
\newcommand{\pder}[2]{
    \ifthenelse{\isempty{#1}}
        {\frac{\partial}{\partial#2}}
        {\frac{\partial#1}{\partial#2}}
}
\newcommand{\der}[2]{
    {\ifthenelse{\isempty{#1}}
        {\frac{\mathrm{d}}{\mathrm{d}#2}}
        {\frac{\mathrm{d}#1}{\mathrm{d}#2}}
    }
}
\let\temp\epsilon
\let\epsilon\varepsilon
\let\varepsilon\temp
\let\temp\phi
\let\phi\varphi
\let\varphi\phi
\newcommand{\E}{\mathcal{E}}
\renewcommand{\k}{\widehat{k}}
\newcommand{\TT}{\text{TT}}
\newcommand{\norm}[1]{\lVert#1\rVert}
\newcommand{\hnorm}[2]{\norm{#1}_{H^{#2}}}
\newcommand{\lnorm}[2]{\norm{#1}_{L^{#2}}}
\newcommand{\Id}{\operatorname{Id}}
\newcommand{\tr}{\operatorname{tr}_g}
\newtheorem{theorem}{Theorem}[section]
\newtheorem{corollary}{Corollary}[theorem]
\newtheorem{lemma}[theorem]{Lemma}
\newtheorem{proposition}[theorem]{Proposition}
\newtheorem*{remark}{Remark}
\title{On the Global Well-Posedness of the Einstein-Yang-Mills System}
\author[1,2]{Petar Griggs\thanks{\ttfamily pmg912@college.harard.edu}}
\author[1,3]{Puskar Mondal\thanks{\ttfamily puskar\_mondal@fas.harvard.edu}}
\affil[1]{Department of Mathematics, Harvard University}
\affil[2]{Department of Physics, Harvard University}
\affil[3]{Center of Mathematical Sciences and Applications, Harvard University}
\begin{document}

\maketitle

\begin{abstract}
\noindent 
In this paper, we present a partial result on the global well-posedness of the Cauchy problem for the Einstein-Yang-Mills system in the constant mean extrinsic curvature spatial harmonic and generalized Coulomb gauges as introduced in \cite{Mondal2021}. We give a small-data global existence theorem for a family of $n+1$ dimensional spacetimes with $n\geq4$, utilizing energy arguments presented in \cite{Andersson2009}. We observe that these energy arguments will fail for $n=3$ due to the conformal invariance of the $3+1$ Yang-Mills equations and present a gauge-covaraiant formulation of the Einstein-Yang-Mills system in $3+1$ dimensions to show that an energy argument cannot be used to prove the global well-posedness result, regardless of the choice of gauge.
\end{abstract}

\section{Introduction}
One of the fundamental questions in the study of hyperbolic partial differential equations is the issue of long \textit{time} existence. In gauge theories these issues are made complex due to the fact that the choice of gauge plays an important role in the existence problem. As an example, the spacetime harmonic was utilized to prove the local well-posedness of the vacuum Einstein's equations by \cite{ChoquetBruhat1995}. This gauge was believed to be unstable for large time. However, \cite{Lindblad2010} later proved the global existence of the Minkowski space in spacetime harmonic gauge for small data perturbations, suggesting otherwise. Another familiar example is the use of the Lorentz gauge in the Maxwell theory. Use of these gauges cast the equations of motion into a hyperbolic system whose local well-posedness theory follows using standard methods. Long time existence issues however are substantially more complicated in the case of non-linear equations and require careful attention to the detail of the non-linearities present. In several cases choice of gauge plays an important role, in the sense that while one gauge develops coordinate singularity in finite time, other gauges may be able to exhaust the entire time interval. In addition to these gauge choices which are naturally adapted to equations in fully spacetime covariant form, there are other gauges suitable for studying the equations on a spacetime $M$ of the product type $\mathbb{R}\times\Sigma$. One of the most convenient choices is the constant mean curvature spatial harmonic gauge (CMCSH gauge) introduced by Andersson and Moncrief \cite{Andersson2003} to study the vacuum Einstein equations. Application of CMCSH gauge turns Einstein's equations into a coupled elliptic-hyperbolic system, whose local well-posedness was proven in \cite{Andersson2003}. In the context of Yang-Mills theory, the Coulomb gauge is a natural analog of the spatial harmonic gauge, and it was used by Klainerman and Machedon \cite{Klainerman1995} to study the local and global existence problems of the Yang-Mills equations on Minkowski spacetime. 

In addition to the fundamental studies presented in the previous paragraph, numerous studies in the literature deal with local and global existence problems of the Einstein and Yang-Mills equations. However, we will only describe the studies that are relevant to our problem. Firstly, we mention the work of Lindblad and Rodnianski \cite{Lindblad2010}, which used the spacetime harmonic gauge to obtain a global stability result of Minkowski space. LeFloch and Ma \cite{LeFloch2016} used the same gauge to establish a small data global existence result of the Einstein-Klein-Gordon system. In addition to the stability problem of Minkowski space, spacetime harmonic gauge is also used in the context of the cosmological stability problem by \cite{Rodnianski2009,Speck2012}. Beyond these examples, there are numerous other studies, e.g. \cite{Lindblad2017,Ettinger2017,Ames2017}, that utilize this particular choice of gauge. Andersson and Moncrief \cite{Andersson2009} proved an asymptotic stability result of the Milne universe utilizing the CMCSH gauge. Later, Andersson and Fajman used the CMCSH gauge to prove a small data global existence result for the Einstein-Vlasov system on a Milne spacetime \cite{Andersson2020}. In addition, \cite{Wang2014} obtained a rough data local well-posedness result for the vacuum Einstein equations, \cite{Fajman2016} proved a local well-posedness result of the Einstein-Vlasov system, \cite{Fajman2021B} studied the stability of the Milne universe in the presence of a perfect fluid, and \cite{Fajman2021} studied the asymptotic stability of the Milne universe coupled to a Klein-Gordon field, among others. There are numerous studies where CMCSH gauge is used to study the vacuum gravity problem or gravity coupled to matter fields, for example \cite{Moncrief2019,Mondal2020,Mondal2021C}. A substantial amount of study has also been done in the Yang-Mills sector. Eardley and Moncrief \cite{Eardley1982A,Eardley1982B} established the global existence of Yang-Mills fields on a Minkowski background in temporal gauge. Later, \cite{Chrusciel1997} extended the global existence result to globally hyperbolic curved spacetimes in the same gauge. As mentioned earlier, using Coulomb gauge, Klainerman and Machedon \cite{Klainerman1995} established a global existence result in energy norm. Tao \cite{Tao2003} subsequently presented a below-energy-norm local existence theorem for the Yang-Mills equations on Minkowski space in temporal gauge. There are countless other studies which are less relevant in the current context.

Our current article is motivated by the desire to extend the result of \cite{Andersson2009} to include a non-trivial Yang-Mills source. We employ a generalized Coulomb (GC) gauge to cast the Yang-Mills equations into a coupled elliptic-hyperbolic system, mirroring the gravitational sector which is also reduced to a coupled elliptic-hyperbolic system by the application of the CMCSH gauge. A local well-posedness of this coupled elliptic hyperbolic system in CMCSHGC gauge was established in \cite{Mondal2021}. However, we find that the energy argument presented in \cite{Andersson2009} for pure gravity fails for the Einstein-Yang-Mills system in $3+1$ dimensions. In particular, the decay of Yang-Mills fields are \textit{weak} in an appropriate sense and as such fail to overcome the non-linearities present through a straightforward energy argument. This is tied to the conformal invariance of Yang-Mills equations in $3+1$ dimensions: the Milne universe is conformal to a cylinder spacetime that exhibits no decay, and so it is natural that Yang-Mills fields would not have an uniform decay on the $3+1$ Milne background. This problem does not occur in the presence of Maxwell sources, since the later is linear \cite{Branding2019} and uniform decay for the Maxwell field is not required to close the argument. However, in $n+1$ with $n\geq 4$ dimensions, a modified energy type argument works and we obtain the desired asymptotic behaviour that leads to small data global existence result. The result essentially depends on the spectrum of a Lichnerowicz-type Laplacian (acting on $2-$tensors) as well as that of a Hodge-type Laplacian (acting on $1-$forms). The additional dependence on the spectrum of a Hodge-type Laplacian originates from the Yang-Mills sector.

The failure of the energy method in $3+1$ dimensions necessitates applications of sophisticated analytical techniques such as light-cone estimates \cite{Moncrief2022,Vazquez2022}. We believe that the $3+1$ Milne model is stable under coupled Einstein-Yang-Mills perturbations and we are unable to prove it at present using energy estimates alone. We note that Minkowski space is stable under coupled Einstein-Yang-Mills perturbations \cite{Mondal2022B}, and the proof required sophisticated double null energy estimates; such a technique fails in the current context due to non-trivial topology of the spacetime. We intend to address this problem in the future.

The structure of the article is as follows. In sections \ref{prelim} and \ref{rescaling}, we present a brief overview of the background spacetimes we are studying and the construction of the Yang-Mills theory as in \cite{Mondal2021}. We also discuss the gauge-fixing and re-scaling that we must do to study the dynamics over time of the geometric quantities that describe the spatial structure of the spacetime. We describe the small-data scenario of the Einstein-Yang-Mills system.

In section \ref{elliptic}, we give a series of estimates for the gauge variables, while in section \ref{energies} we construct a energy analogous to that presented in \cite{Andersson2009} and find leading-order estimates on the decay of the energies. The decay of the energy gives us a global existence result for the Cauchy initial value problem of the Einstein-Yang-Mills problem. Finally, in section \ref{gauge-covariant}, we present a gauge covariant formulation of the Einstein-Yang-Mills in $3+1$ dimensions to demonstrate that an energy argument of the form used for $n+1$, with $n\geq4$, dimensions will fail.

\section{Preliminaries}\label{prelim}
Here we will briefly review the setup, terminology, and notation that we will use throughout the rest of this work. Take an integer $n>3$. Let $M$ be a globally hyperbolic $n+1$ dimensional spacetime manifold. As $M$ is globally hyperbolic, we may foliate the spacetime by closed, spatial Cauchy hypersurfaces $\Sigma$, such that $M$ is decomposed as $(0,\infty)\times\Sigma$. Let $\mathcal{M}_\Sigma$ denote the space of Riemannian metrics on a constant time hypersurface $\Sigma$, and let $\mathcal{E}_\alpha\subset\mathcal{M}_\Sigma$ be the subset of negative Einstein metrics with Einstein constant $-\alpha$, i.e., the set of all metrics $\gamma$ that satisfy
\begin{equation*}
\mathrm{Ric}[\gamma]=-\alpha\gamma.
\end{equation*}

\noindent  We will restrict our interest in this paper to metrics in the space $\mathcal{E}_\alpha$, where we may re-scale the metric such that $\alpha=\frac{n-1}{n^2}$. If we take a $\gamma\in\mathcal{E}_\alpha$, we then have a \textit{hyperbolic} metric $\tilde{\gamma}$ on $M$ given by
\begin{equation}\label{M-background-metric}
\tilde{\gamma}=-\frac{n^2}{t^4}\mathrm{d}t\otimes\mathrm{d}t+\frac{1}{t^2}{\gamma}_{ij}\:\mathrm{d}x^i\otimes\mathrm{d}x^j
\end{equation}

\noindent for $t\in(0,\infty)$. Observe that $\tilde{\gamma}$ also has vanishing Ricci, and indeed Riemann, curvature and hence is a solution of the vacuum Einstein's equations. This solution will serve as the background geometry in our later analysis, and we will aim to study whether small perturbations to both the geometry and matter source decay to yield again a metric of this form.

Throughout the paper, we will use $g$ to denote the dynamical metric that satisfies Einstein's equations in the ADM formalism. We will denote by $\gamma$ a $C^\infty$ background metric that satisfies the so-called \textit{shadow-metric condition} with respect to $g$; see \ref{shadow-section} for a discussion of the conditions we impose on $\gamma$. Note that we will always have $\gamma\in\mathcal{E}_\alpha$.

We view tensors as sections of vector bundles, and we will frequently make use of index notation when working with tensors. We let Greek indices ($\mu,\nu,\dots$) denote tensors on the spacetime manifold $M$, and let Latin indices from the middle of the alphabet ($i,j,\dots$) denote tensors on the spatial hypersurface $\Sigma$.

Now, if $u$ and $v$ are sections of a rank-$(m,n)$ tensor bundle over $\Sigma$, we define an inner product as
\begin{equation}
\langle u,v\rangle=\int_\Sigma\gamma_{i_1j_1}\dots\gamma_{i_nj_n}\gamma^{k_1\ell_1}\dots\gamma^{k_m\ell_m}u^{i_1\dots i_n}{}_{k_1\dots k_m}v^{j_1\dots j_n}{}_{\ell_1\dots\ell_m}\:\mu_g,
\end{equation}

\noindent where $\mu_g$ is the volume form $\sqrt{\det g}\mathrm{d}x^1\wedge\dots\wedge\mathrm{d}x^n$. We will occasionally abuse this notation and use $\mu_g$ to simply denote $\sqrt{\det g}$, though it will be clear from context when we do or do not include the measure $\mathrm{d}x^1\wedge\dots\wedge\mathrm{d}x^n$.

We will denote the covariant derivative with respect to the metrics $g$ and $\gamma$ as $\nabla$ and $\nabla[\gamma]$, respectively. We will use this type of notation when referring to other metric-derived quantities, such as the Riemann curvature $R$ and $R[\gamma]$, but will often still explicitly specify the Christoffel symbols as $\Gamma[g]$ and $\Gamma[\gamma]$. We will occasionally use $\nabla[\gamma]^i$ to denote $i$-times repeated covariant derivatives.

\noindent For derivatives of the rank-$(m,n)$ tensors $u$ and $v$, we define the inner product
\begin{equation}\label{inner-prod-der}
\langle\nabla[\gamma]u,\nabla[\gamma]v\rangle=\int_\Sigma\gamma_{i_1j_1}\dots\gamma_{i_nj_n}\gamma^{k_1\ell_1}\dots\gamma^{k_m\ell_m}g^{rs}(\nabla[\gamma]_ru)^{i_1\dots i_n}{}_{k_1\dots k_m}(\nabla[\gamma]_sv)^{j_1\dots j_n}{}_{\ell_1\dots\ell_m}\:\mu_g.
\end{equation}

\noindent The inner product for higher derivatives of $u$ and $v$ is defined analogously. Now, define the twisted rough Laplacian as the second-order differential operator
\begin{equation}
\Delta_g^\gamma=-\frac{1}{\sqrt{\mu_g}}\nabla[\gamma]_i(g^{ij}\mu_g\nabla[\gamma]_j).
\end{equation}

\noindent We note that the twisted Laplacian is self-adjoint with respect to the inner product \eqref{inner-prod-der}. We define the usual covariant Laplacian as $\Delta_g=-g^{ij}\nabla_i\nabla_j$, such that $\Delta_g$ has non-negative spectrum. From $\Delta_g^\gamma$, we can define two additional differential operators which will be key to the later analysis. First, let $\mathcal{L}_{g,\gamma}$ denote the second-order differential operator that acts on two-forms $h$ as
\begin{equation}
\mathcal{L}_{g,\gamma}h_{ij}=\Delta^\gamma_gh_{ij}-2g^{mk}g^{n\ell}R[\gamma]_{imjn}h_{k\ell}.
\end{equation}

\noindent Similarly, let $\mathfrak{L}_{g,\gamma}$ denote the second-order differential operator that acts on one-forms $\omega$ as
\begin{equation}
\mathfrak{L}_{g,\gamma}\omega_i=\Delta^\gamma_g\omega_i-g^{j\ell}R[\gamma]^m{}_{\ell ij}\omega_m.
\end{equation}

\noindent We note that $\mathfrak{L}_{\gamma,\gamma}=\Delta_\gamma-\text{Ric}[\gamma]$ is related to the Hodge Laplacian $\Delta_H\equiv(\mathrm{d}\mathrm{d}^\star+\mathrm{d}^\star\mathrm{d})$ by
\begin{equation*}
\mathfrak{L}_{\gamma,\gamma}\omega_i=\Delta_H\omega_i-2\mathrm{Ric}[\gamma]^m{}_{i}\omega_m
\end{equation*}

\noindent and, like the Hodge Laplacian, will also have non-negative spectrum as $\mathrm{Ric}[\gamma]$ is negative-definite. For $g$ near $\gamma$, we will then also have the spectrum of $\mathfrak{L}_{g,\gamma}$ is non-negative. We will also assume that $\mathcal{L}_{g,\gamma}$ has non-negative spectrum for $g$ near $\gamma$, motivated by the fact that there is no known compact, negative Einstein manifold for which $\mathcal{L}_{\gamma,\gamma}$ has negative eigenvalues \cite{Andersson2009}. The assumption of non-negativity of these operators spectra will be key in the later energy arguments.

For the Yang-Mills theory, we recall the setup from \cite{Mondal2021}. Let $P$ be a principle $G$-bundle over $M$ where $G$ is any compact, semi-simple Lie group of dimension $d_G$. Let $\mathfrak{g}$ denote the Lie algebra of $G$ and take $V$ to be a real $d_V$-dimensional representation of $\mathfrak{g}$. Because $G$ is compact, we may find a positive-definite adjoint-invariant metric on $\mathfrak{g}$. We take $\{(\chi_A)^a{}_b\}_{A=1}^{d_G}$ to be a basis of $\mathfrak{g}$ in a $d_V$-dimensional representation $V$ for which
\begin{equation*}
-\operatorname{Tr}(\chi_A\chi_B)=(\chi_A)^a{}_b(\chi_B)^b{}_a=\delta_{AB},
\end{equation*}

\noindent where $a,b\in\{1,\dots,d_V\}$ denote indices at the level of the representation $V$. We will reserve Latin indices from the beginning of the alphabet ($a,b,\dots$) for the indices of the representation $V$. Occasionally we will suppress these indices and simply write the Lie algebra-valued $k$-form $A^a{}_b$ as $A$, where $A$ is then understood to be a $d_V\times d_V$ real matrix. With this choice of metric, we may define a gauge invariant inner product on $\mathfrak{g}$. With $A$ and $B$ both Lie algebra-valued $(m,n)$-tensors, we have an inner product given by
\begin{equation}
\langle A,B\rangle=\int_\Sigma\gamma_{i_1j_1}\dots\gamma_{i_nj_n}\gamma^{k_1\ell_1}\dots\gamma^{k_m\ell_m}A^a{}_b{}^{i_1\dots i_n}{}_{k_1\dots k_m}B^a{}_b{}^{j_1\dots j_n}{}_{\ell_1\dots\ell_m}\:\mu_g.
\end{equation}

\noindent Note that $A$ and $B$ must themselves be gauge covariant objects for the inner product $\langle A,B\rangle$ to be gauge invariant. We may also extend this definition to the derivatives $\nabla[\gamma]A$ and $\nabla[\gamma]B$ in a fashion identical to \eqref{inner-prod-der}.

Finally, we will have a connection $1$-form on $P$, which for the chosen basis and representation we will denote as $\tilde{A}^a{}_b=\tilde{A}^a{}_{b\mu}\mathrm{d}x^\mu$. From $\tilde{A}$, we may define the Yang-Mills field strength as
\begin{equation}
\tilde{F}^a{}_{b\mu\nu}=\partial_\mu\tilde{A}^a{}_{b\nu}-\partial_\nu\tilde{A}^a{}_{b\mu}+\big[\tilde{A},\tilde{A}\big]^a{}_{b\mu\nu},
\end{equation}

\noindent where $[\bullet,\bullet]$ is the Lie bracket of $\mathfrak{g}$ and corresponds to the usual matrix commutator in the representation $V$. Note that we fix the Yang-Mills coupling constant as unity since we are working at a classical level thus do not allow running of the coupling constant. We also note that $\tilde{A}$ is not gauge covariant, though we may construct a gauge covariant quantity by subtracting from $\tilde{A}$ another $\mathfrak{g}$-valued $1$-form. We also emphasize that we will view $\tilde{A}$ as a perturbation of the background geometry, which is a solution to the vacuum Einstein's equations.

Using the various inner products defined above, we may define the $L^2(\Sigma)$ norm of a tensor $v$ by $\lnorm{v}{2}=\langle v,v\rangle^\frac{1}{2}$, where the appropriate inner product is used based on the particular type of object that $v$ is. The $L^2(\Sigma)$ norm will be especially important, as we will conduct our analysis in the Sobolev spaces $H^s$, where we fix $s>\frac{n}{2}+1$. The choice of such an $s$ ensures that we have the embedding $H^{s-1}(\Sigma)\hookrightarrow L^\infty(\Sigma)$, where the $L^\infty$ norm of a generic Lie algebra-value rank-$(m,n)$ tensor $v$ is given by
\begin{equation*}
\lnorm{v}{\infty}=\sup_{\Sigma}\big(\gamma_{i_1j_1}\dots\gamma_{i_nj_n}\gamma^{k_1\ell_1}\dots\gamma^{k_m\ell_m}v^a{}_b{}^{i_1\dots i_n}{}_{k_1\dots k_m}v^b{}_a{}^{j_1\dots j_n}{}_{\ell_1\dots\ell_m}\big)^\frac{1}{2}
\end{equation*}

\noindent and is defined analogously for the other types of objects we have discussed.

\subsection{Gauge Fixing of Evolution Equations}
In this section we will discuss how we construct the Einstein-Yang-Mills system and the gauge choices that we must make to study the evolution of the geometric variables. This is largely an overview of the construction of the Einstein-Yang-Mills system presented in \cite{Mondal2021}.

As we work in the ADM formalism, we foliate $M$ by spacelike hypersurfaces $\Sigma_t$ and decompose the vector field $\partial_t$ on the manifold $M$ into components orthogonal to and parallel to $\Sigma_t$, writing
\begin{equation}
\partial_t=N\mathbf{n}+X.
\end{equation}

\noindent Here the shift vector field $X=X^i\partial_{x_i}$ is a section of the tangent bundle $T\Sigma_t$, while the lapse function $N$ controls the normal distance between two constant $t$ hypersurfaces. From this decomposition, we may write Einstein's equations in the form of evolution equations for the Riemannian metric $g$ and second fundamental form $k$ of $\Sigma_t$. We denote by $\tau=\tr(k)$ the trace of the second fundamental form. The quantity $\tau$ is the mean extrinsic curvature of $\Sigma_t$ as a surface embedded in $M$, and will, under suitable assumptions, provide us with a natural time function. We note that, because the Milne universe model that we are considering is expanding, we have $\tau<0$.

As we will consider a matter source in Einstein's equations given by a Yang-Mills stress-energy tensor, we will also wish to decompose the connection $1$-form $\tilde{A}$ and field strength $\tilde{F}$ into components parallel to and orthogonal to the hypersurface $\Sigma_t$. We do so by writing
\begin{align}
\tilde{A}^a{}_b=A^a{}_b-\tilde{g}(\tilde{A}^a{}_b,\mathbf{n})\mathbf{n}, \\
\tilde{F}^a{}_b=\E^a{}_b\otimes\mathbf{n}-\mathbf{n}\otimes\E^a{}_b+F^a{}_b.
\end{align}

\noindent Here $A^a{}_b$ and $\E^a{}_b$ are Lie algebra-valued one-forms, with $\E^a{}_{bi}$ the components of the electric field. Meanwhile $F^a{}_{bij}=\partial_iA^a{}_{bj}-\partial_jA^a{}_{bi}+[A,A]^a{}_{bij}$ is a Lie algebra-valued two form on $\Sigma_t$ related to the magnetic field. The value $\tilde{g}(\tilde{A}^a{}_b,\partial_t)$ is denoted as $A^a{}_{b0}$. After performing this splitting, we find a system of evolution and constraint equations for the quantities $(k,g,\E,A)$; see equations (26)-(31) in \cite{Mondal2021}. However, to obtain a hyperbolic system of evolution equations, we need to make an appropriate choice of gauge. In particular, the system of evolution and constraint equations leave us free to make choices of $N$, $X$, and a quantity related to $A^a{}_{b0}$.

We will fix the gauge variables $N$ and $X$ that arise from the ADM decomposition using the \textit{constant mean extrinsic curvature spatial harmonic} (CMCSH) gauge. We first set the mean extrinsic curvature $\tau$ to be constant on each leaf in the foliation of $M$, thus enabling $\tau$ to serve as a time function. To make this choice of time function, we must know that the spacetime $M$ admits a constant mean extrinsic curvature hypersurface. As discussed in \cite{Bartnik1988,Rendall1996,Andersson1999,Galloway2018}, not all spacetimes will admit such a hypersurface. However the background metric $\tilde{g}$ will, and so we will assume that the perturbed spacetime we work with does as well. In particular, we will take $t=\tau$. Explicitly computing $\pder{\tau}{t}$ from the definition $\tau=\tr(k)$ then yields an elliptic equation for the lapse function $N$; see Eq. (34) of \cite{Mondal2021}. To fix the shift vector field $X$, we utilize the spatial harmonic gauge, which is the requirement that the tension field vanishes, i.e.,
\begin{equation}
V^k=g^{ij}\big(\Gamma[g]^k{}_{ij}-\Gamma[\gamma]^k{}_{ij}\big)=0.
\end{equation}

\noindent Taking the time derivative of $V^k$ in the spatial harmonic gauge and imposing the constraint on $\nabla\k$ obtained from the Codazzi equation, we may obtain an elliptic equation for $X$; see Eq. (37) of \cite{Mondal2021}. We note that in the spatial harmonic gauge, the twisted rough Laplacian $\Delta^\gamma_g$ will take the simplified form 
\begin{equation}
\Delta^\gamma_g=-g^{mn}\nabla[\gamma]_m\nabla[\gamma]_n+V^m\nabla[\gamma]_m=-g^{mn}\nabla[\gamma]_m\nabla[\gamma]_n.
\end{equation}

\noindent Finally, we must fix a Yang-Mills gauge variable. We define $\phi\in\mathfrak{g}$ to be $\phi^a{}_b=A^a{}_{b0}-A^a{}_{bi}X^i$ to be this gauge variable. We do not directly use the temporal coordinate of the connection $1$-form $A_0$ as it would require a higher regularity than $\phi$ and thus the later energy arguments would not close. To obtain a constraint on $\phi$, we adopt the \textit{generalized Coulomb gauge}, given by the condition
\begin{equation}\label{gen-col-gauge}
g^{ij}\nabla^{\widehat{A}}[\gamma]_i(A_j-\widehat{A}_j)\coloneqq g^{ij}\nabla[\gamma]_i(A_j-\widehat{A}_j)+g^{ij}\big[\widehat{A}_i,(A_j-\widehat{A}_j)\big]=0.
\end{equation}

\noindent Here $\nabla^{\widehat{A}}[\gamma]$ is the gauge covariant derivative with respect to the smooth background connection $1$-form $\widehat{A}$. Taking the time derivative of \eqref{gen-col-gauge} and imposing the constraint on $\nabla\E$ obtained from Gauss' law then yields an elliptic equation on $\phi$. The full set of evolution and constraint equations is given in Section 3 of \cite{Mondal2021}, though we do not present them here as we will, in Section \ref{rescale-eqns-section}, give a scaled form of the equations; see \eqref{A-evol}--\eqref{phi-constraint}.

\subsubsection{Shadow Metric Condition}\label{shadow-section}
If $\gamma_0\in\mathcal{E}_{(n-1)/n^2}$ is the background metric near which we perturb to get the dynamical metric $g$, we will not, in general, have that $g$ also lies in $\mathcal{E}_{(n-1)/n^2}$. However, as we wish to study the stability of $M$, we will need to study whether the dyanmical metric $g$ converges to an element of $\mathcal{E}_{(n-1)/n^2}$. For this, we present the shadow metric condition, introduced in \cite{Andersson2009}.

To begin, we must introduce the concept of the deformation space of a fixed metric $\gamma_0\in\mathcal{E}_{(n-1)/n^2}$. First, define the set $\mathcal{S}_{\gamma_0}$ to those metrics $g\in\mathcal{M}_\Sigma$ such that the identity map $(M,g)\mapsto(M,\gamma_0)$ is harmonic. This is equivalent to the tension field of $g$ and $\gamma_0$ vanishing, i.e.,
\begin{equation*}
\mathcal{S}_{\gamma_0}=\big\{g\in\mathcal{M}_\Sigma\:|\:g^{ij}\big(\Gamma[g]^k_{ij}-\Gamma[\gamma_0]^k{}_{ij})=0\big\}.
\end{equation*}

\noindent Hence $\mathcal{S}_{\gamma_0}$ will be the set of metrics that satisfy the CMCSH gauge condition. We have that $\mathcal{S}_{\gamma_0}$ defined in this way is both a submanifold of the space $\mathcal{M}_\Sigma$ close to $\gamma$ and a slice for the diffeomorphism group, that it is invariant under isometries of $\gamma_0$; c.f. \cite{Andersson2009} and \cite{Besse1987}. With $\mathcal{S}_{\gamma_0}$, we may define the \textit{deformation space} of $\gamma_0$ as follows: let $\mathcal{V}\subset\mathcal{E}_{(n-1)/n^2}$ be the connected component of $\gamma_0$ and $\mathcal{S}_{\gamma_0}$ the slice defined above. The deformation space of $\gamma_0$ is then defined to be the set $\mathcal{N}=\mathcal{V}\cap\mathcal{S}_{\gamma_0}$. Throughout the remainder of this paper we will use $\mathcal{N}$ to refer to the deformation space of a fixed $\gamma_0\in\E_{(n-1)/n^2}$ and $\gamma$ to refer to a general element of $\mathcal{N}$.

The deformation space of $\gamma_0$ is thus the set of background metrics which we expect the dynamical metric $g$ of Einstein's equations to converge to; namely they are the negative Einstein metrics that preserve the CMCSH gauge condition. Note that by Theorem 5.26 of \cite{Besse1987} we may find a set of coordinates such that $\gamma\in\mathcal{E}_{(n-1)/n^2}$ is smooth. It is thus reasonable for us to assume that $\mathcal{N}\subset\mathcal{M}_\Sigma$ has a $C^\infty$ topology so that we may smoothly evolve metrics within the space.

By the discussion in Section 2.3 of \cite{Andersson2009}, we have that the formal tangent space $T_\gamma\mathcal{N}$ for some $\gamma\in\mathcal{N}$ is given by
\begin{equation}\label{N-tan-space}
T_\gamma\mathcal{N}=\{h\in\mathrm{Sym}^2(\Sigma)\:|\:h=h^{\TT\parallel}+L_{Y^\parallel}\gamma\},
\end{equation}

\noindent where $\mathrm{Sym}^2(\Sigma)$ is the set of symmetric, rank-$2$ covariant tensors on $\Sigma$, $h^{\TT\parallel}$ is a transverse-traceless tensor with respect to $\gamma$, and $Y^{\parallel}$ is a vector field that solves
\begin{equation}\label{Y-constraint}
(DV)_j(L_{Y^\parallel}\gamma)^{ij}=h^{\TT\parallel mn}\big(\Gamma[\gamma]^i{}_{mn}-\Gamma[\gamma_0]^i{}_{mn}\big),
\end{equation}

\noindent with $DV$ the Fr\'echet derivative of the tension field $V^k=\gamma^{ij}(\Gamma[\gamma]^k{}_{ij}-\Gamma[\gamma_0]^k{}_{ij})$ taken with respect to $\gamma$. It turns out that $\mathcal{N}$ is a finite-dimensional submanifold of $\mathcal{M}_\Sigma$ \cite{Besse1987}, and so we let $\{q^\alpha\}_{\alpha=1}^{\dim\mathcal{N}}$ be a set of local coordinates of $\mathcal{N}$. If $\gamma\in\mathcal{N}$, then we may construct a basis for the formal tangent space $T_\gamma\mathcal{N}$ as $\big\{\pder{\gamma}{q^\alpha}\big\}_{\alpha=1}^{\dim\mathcal{N}}$. For an arbitrary Riemannian metric $g\in\mathcal{M}_\Sigma$, we can define the $L^2$ projection with respect to the metric $\gamma$ as
\begin{equation}
\bigg\langle g,\pder{\gamma}{q^\alpha}\bigg\rangle_\gamma=-\int_\Sigma\gamma^{im}\gamma^{jn}g_{ij}\pder{\gamma_{mn}}{q^\alpha}\:\mu_\gamma.
\end{equation}

\noindent With this inner product, we can state the shadow metric condition as follows: if $g\in\mathcal{M}$ and $\gamma\in\mathcal{N}$, then $\gamma$ is a shadow metric of $g$ if
\begin{equation}\label{shadow-metric-cond}
\bigg\langle g-\gamma,\pder{\gamma}{q^\alpha}\bigg\rangle_\gamma=0
\end{equation}

\noindent for all $\alpha\in\{1,\dots,\dim\mathcal{N}\}$. By Lemma 4.3 of \cite{Andersson2009}, we have that this shadow metric condition will always be uniquely satisfied by some $\gamma\in\mathcal{N}$ for $g\in\mathcal{M}_\Sigma$ in a sufficiently small neighborhood of $\gamma_0$.

Geometrically we may view the shadow metric condition as being that $u\coloneqq g-\gamma$ is orthogonal to the deformation space $\mathcal{N}$, or equivalently that $\gamma$ is the projection of $g$ onto $\mathcal{N}$. Indeed, the evolution of $g$ will be entirely determined by the horizontal evolution of $\gamma$ and the orthogonal evolution of $u$. As $\gamma$ will evolve within the deformation space $\mathcal{N}$, then to see whether $g$ converges to a point in $\mathcal{N}$ it will suffice to study whether the orthogonal component $u$ vanishes. In light of the view of $\gamma$ as the projection of $g$, we note that the shadow metric condition \eqref{shadow-metric-cond} will give us an estimate of the time evolution $\partial_T\gamma$ of the form
\begin{equation}\label{dT-gamma-est}
\norm{\partial_T\gamma}\leq C\lnorm{\partial_Tg}{2}\leq C\hnorm{\partial_Tg}{s-1}
\end{equation}

\noindent if $g$ lies in a suitably small neighborhood of $\gamma_0$. Here $C>0$ is a constant depending only on the background geometry, and the norm of $\partial_T\gamma$ is arbitrary as $\gamma$, and hence $\partial_T\gamma$, lies in a finite-dimensional vector space. We also note that, if $\mathcal{L}_{\gamma_0,\gamma_0}$ has strictly positive spectrum, then $\mathcal{N}=\{\gamma_0\}$ \cite{Andersson2009}. This will not, however, affect our analysis as we may still trivially bound $\partial_T\gamma=0$ by $C\hnorm{\partial_Tg}{s-1}\geq0$.

Finally, we remark that one may also wish to define a similar condition for the analogous quantity in the Yang-Mills theory, namely the connection one-form $A$. However, unlike the deformation space of metrics, the deformation space of connection one-forms, which we denote $\mathcal{A}$, is not in general well-understood. In fact, as even the topology of $\mathcal{A}$ is unknown, we cannot hope to perform an analogous decomposition to the dynamics of $A$ and study whether $A$ will converge to a non-vacuum solution of the Einstein's equations. Since the vanishing of the Yang-Mills curvature form $F$ will yield a solution to the Einstein-Yang-Mills system, we will perturb the connection one-form $A$ about the fixed flat background connection $\hat{A}\equiv0$ throughout the rest of this work.

\section{Rescaled Einstein's Equations}\label{rescaling}
\subsection{Rescaled Elliptic and Constraint Equations}\label{rescale-eqns-section}
We follow \cite{Andersson2009} in rescaling the system of evolution and constraint equations in the CMCSH and generalized Coulomb gauges to obtain a system of equations describing the behavior of dimensionless quantities. However, unlike the vacuum Einstein equations studied by Andersson and Moncrief, we will find that the Yang-Mills matter source blocks the rescaled system from being fully autonomous. The rescalling will, however, ensure that all factors of $\tau$ only appear alongside the rescaled Yang-Mills variables $\E$ and $F=\mathrm{d}A+[A,A]$. This is a consequence of the Yang-Mills matter field not being scale invariant in $n+1$, with $n\geq 4$, dimensions, and physically can be thought of as the hyperbolic structure of the spacetime dispersing the matter source and causing the Yang-Mills field strength to vanish as $\tau\rightarrow0$. We note, however, that in $3+1$ dimensions, the Yang-Mills equations are conformally invariant.

As in \cite{Andersson2009}, we will take $\tau=\tr(k)$ to have dimensions of $(\text{length})^{-1}$ and the spatial coordinates $x^i$ to be dimensionless. For the quantities derived from the geometry of $\Sigma$, namely $(k,g,N,X)$, we obtain the scaling
\begin{alignat*}{2}
[g]&\sim(\text{length})^2, &\quad\quad [k]&\sim(\text{length}), \\
[N]&\sim(\text{length})^2, & [X]&\sim(\text{length}).
\end{alignat*}

\noindent Because $k$ has an implicit dependence on $\tau$ as its trace, we will want to instead study the evolution of the traceless part of $k$, which we will denote $\k$. Explicitly, we let $\k=k-(\tau/n)g$, such that $\tr(\k)=0$. Observe that we still have $[\k]\sim(\text{length})$.

To determine the scaling of the Yang-Mills variables $(\E,A,\phi)$, we impose the condition that the gauge covariant derivative $\nabla^A[g]=\nabla[g]+[A,\bullet]$ should remain scale invariant. This then gives us the scaling
\begin{equation*}
[A_i]\sim(\text{length})^0, \quad\quad [\E]\sim(\text{length}), \quad\quad [\phi]\sim(\text{length})^{-1}.
\end{equation*}

\noindent We also note that $[F]=[A_i]\sim(\text{length})^0$. Knowing how each of the quantities scales with length, we can redefine the variables as
\begin{gather*}
g\rightarrow\tau^2g, \quad\quad\k\rightarrow\tau^{-1}\k, \quad\quad N\rightarrow\tau^2N, \quad\quad X\rightarrow\tau^{-1}X, \\
A_i\rightarrow A_i, \quad\quad \E\rightarrow\tau\E, \quad\quad \phi\rightarrow\tau^{-1}\phi,
\end{gather*}

\noindent where the new $(\k,g,\E,A,N,X,\phi)$ are all dimensionless quantities. We will finally remove the explicit $\tau$ dependence by defining the dimensionless time coordinate $T$ by
\begin{equation*}
T=-\ln\bigg(\frac{\tau}{\tau_0}\bigg),
\end{equation*}

\noindent where $\tau_0\in(-\infty,0)$ is some arbitrary reference time. This allows us to write in the rescaled Einstein's equations $\tau\partial_\tau=-\partial_T$ and $\tau=\tau_0e^{-T}$. We also note that, because $\tau\in(-\infty,0)$, we have that $T\in\mathbb{R}$ and in particular as we evolve forward in time such that $\tau\rightarrow0$, then $T\rightarrow\infty$.

Substituting the dimensionless variables $(\k,g,\E,A,N,X,\phi,T)$ into the evolution and constraint equations given in eqns. (46)--(51) of \cite{Mondal2021}, we find the system of scale-invariant evolution and constraint equations to be as follows. The scale-invariant evolution equations are given by
\begin{align}
\partial_TA^a{}_{bi}&=-N\E^a{}_{bi}-\partial_i\phi^a{}_b-[A_i,\phi]^a{}_b-X^j\nabla[\gamma]_jA^a{}_{bi}-A^a{}_{bj}\nabla[\gamma]_iX^j \label{A-evol} \\
\partial_T\E^a{}_{bi}&=-\bigg(\frac{(n-2)N}{n}-1\bigg)\E^a{}_{bi}+(\nabla_jN)F^a{}_{bi}{}^j+N(\nabla_jF^a{}_{bi}{}^j-\nabla_iC^a{}_b) \notag\\
&\quad-N[F_i{}^j,A_j]^a{}_b-[\E_i,\phi]^a{}_b-X^k\nabla_k\E^a{}_{bi}-\E^a{}_{bk}\nabla_iX^k+2N\k_i^j\E^a{}_{bj} \label{E-evol} \\
\partial_Tg_{ij}&=-\frac{2}{n}(n-N)g_{ij}+2N\k_{ij}-L_Xg_{ij} \label{g-evol} \\
\partial_T\k_{ij}&=-\bigg(\frac{(n-2)N}{n}+1\bigg)\k_{ij}-\frac{N-1}{n}g_{ij}+\nabla_i\nabla_jN-N\big\{R_{ij}-\alpha_{ij}-2\k_{ik}\k_j{}^k\big\} \notag\\
&\quad-\tau_0^2e^{-2T}N\bigg\{\E^a{}_{bi}\E^b{}_{aj}-g^{k\ell}F^a{}_{bik}F^b{}_{aj\ell}-\frac{1}{n-1}g^{k\ell}\E^a{}_{bk}\E^b{}_{a\ell}g_{ij} \notag\\
&\quad+\frac{1}{2(n-1)}g^{km}g^{\ell n}F^a{}_{bk\ell}F^b{}_{amn}g_{ij}\bigg\}-L_X\k_{ij}. \label{k-evol}
\end{align}

\noindent Here $\alpha_{ij}$ and $C^a{}_b$ has been defined as
\begin{align*}
\alpha_{ij}&=\frac{1}{2}(\nabla_iV_j+\nabla_jV_i) \\
C^a{}_b&=g^{ij}\nabla^{\widehat{A}}[\gamma]_i(A^a{}_{bj}-\widehat{A}^a{}_{bj})
\end{align*}

\noindent and must vanish for the evolution equations to be hyperbolic, as is the case in the chosen gauges; c.f. Section 3 of \cite{Mondal2021}. We note that the first term in \eqref{E-evol} will, to leading order in the small data context we work in, vanish for $n=3$. This ultimately will prevent the subsequent energy arguments from closing. While this is a feature of the gauge and scaling choices made earlier, in section \ref{gauge-covariant}, we give an gauge-covariant argument as to why an energy estimate will not be sufficient to obtain the desired global existence result in $n=3$ spatial dimensions.

Meanwhile, the elliptic equations for the gauge variables take the form
\begin{align}
\Delta_gN&+\bigg(|\k|^2+\frac{1}{n}+\frac{n-2}{n-1}\tau_0^2e^{-2T}g^{k\ell}\E^a{}_{bk}\E^b{}_{a\ell}+\frac{\tau_0^2}{2(n-1)}e^{-2T}g^{k\ell}g^{mn}F^a{}_{bkm}F^b{}_{a\ell n}\bigg)N=1 \label{N-ell} \\
\Delta_gX^i&-R^i{}_jX^j+L_XV^i=-2\k^{ij}\nabla_jN+\bigg(1-\frac{2}{n}\bigg)\nabla^iN-2\tau_0^2e^{-2T}Ng^{ik}g^{\ell j}F^a{}_{bk\ell}\E^b{}_{aj} \notag\\
&\qquad\qquad\qquad\qquad\quad\,+\bigg(2N\k^{jk}+\frac{2N}{n}g^{jk}-2\nabla^jX^k\bigg)\big(\Gamma[g]^i{}_{jk}-\Gamma[\gamma]^i{}_{jk}\big)+g^{jk}\partial_T\Gamma[\gamma]^i{}_{jk} \label{X-constraint} \\
&\!\!\!\!\!\!\!\!\!\!\!\!\!\!\!g^{ij}\nabla[\gamma]^A_i\nabla[\gamma]^A_j\phi^a{}_b+g^{ij}[\widehat{A}_i-A_i,\nabla^A_j\phi]^a{}_b+g^{ij}\nabla_iN\E^a{}_{bj}-Ng^{ij}[A_i,\E_j]^a{}_b \notag\\
&\quad+g^{ij}[\widehat{A}_i,N\E_j]^a{}_b+g^{ij}\big(\Gamma[g]^k{}_{ij}-\Gamma[\gamma]^k{}_{ij}\big)N\E^a{}_{bk}-X^k\nabla[\gamma]_kg^{ij}\nabla[\gamma]_i\big(A^a{}_{bj}-\widehat{A}^a{}_{bj}\big) \notag\\
&\quad-X^k\nabla[\gamma]_kg^{ij}[\widehat{A}_i,A_j-\widehat{A}_j]^a{}_b+g^{ij}\nabla[\gamma]_iX^k\nabla[\gamma]_kA^a{}_{bj}+g^{ij}\nabla[\gamma]_iA^a{}_{bk}\nabla[\gamma]_jX^k \notag \\
&\quad+g^{ij}A^a{}_{bk}\nabla[\gamma]_i\nabla[\gamma]_jX^k-g^{ij}R[\gamma]^\ell{}_{jik}A^a{}_{b\ell}X^k+g^{ij}[\widehat{A}_i,A_k]^a{}_b\nabla[\gamma]_jX^k \notag \\
&\quad+\bigg(2N\k^{ij}+\frac{2N}{n}g^{ij}-\nabla^iX^j-\nabla^jX^i\bigg)\nabla[\gamma]^{\widehat{A}}_i(A^a{}_{bj}-\widehat{A}^a{}_{bj})=0 \label{phi-constraint}
\end{align}

\noindent Throughout the paper, we will collectively refer to the rescaled, gauge-fixed equations \eqref{A-evol}--\eqref{phi-constraint} as the Einstein-Yang-Mills system. Due to the elliptic contraints, if $\mathcal{D}_0$ denotes the space of diffeomorphism on $\Sigma$ and $\mathcal{G}$ denotes the space of gauge transformations, then the dynamics of the Einstein-Yang-Mills system will occur on the space $\mathcal{M}_\Sigma/\mathcal{D}_0\times\mathcal{A}/\mathcal{G}$.

Note that indices are raised with the dynamical metric $g$, unless otherwise specified. Recall that because we have no equivalent shadow metric condition on $\widehat{A}$, we have fixed $\widehat{A}\equiv0$. However, because $A-\widehat{A}$ is a gauge invariant tensor while $A$ itself is not, we for now explicitly write all $\widehat{A}$ terms in the constraint equation for $\phi$.

Throughout the analysis in this paper, we will assume that $(\k,g,\E,A)\in H^{s-1}\times H^s\times H^{s-1}\times H^s$ for $s>\frac{n}{2}+1$. From the elliptic equations, we then immediately see that the gauge variables $(N,X,\phi)$ each lie in the Sobolev space $H^{s+1}$.

\subsection{Small Data}
We work in the context of a background geometry given by a metric of the form in \eqref{M-background-metric}, which is a solution to the vacuum Einstein's equations. From this unscaled metric, we then have the background solution $(\k,g,\E,A,N,X,\phi)=(0,t^{-2}\gamma,0,0,t^{-2}n,0,0)$. Introducing the rescaled variables, this corresponds to the background solution
\begin{equation}\label{background-sol}
(\k,g,\E,A,N,X,\phi)_\text{background}=(0,\gamma,0,0,n,0,0).
\end{equation}

\noindent If we define $u\coloneqq g-\gamma$ and $\omega\coloneqq N-n$, we then will consider the data
\begin{equation*}
(\k,u,\E,A,\omega,X,\phi)\in H^{s-1}\times H^s\times H^{s-1}\times H^s\times H^{s+1}\times H^{s+1}\times H^{s+1}
\end{equation*}

\noindent to be small in the small data context we are working in, for $s>\frac{n}{2}+1$. We will often specify this by writing that $(\k,u,\E,A,\omega,X,\phi)\in B_\delta(0)$, where $B_\delta(0)$ is a ball of sufficiently small radius $\delta$ centered at $0$ in the relevant function spaces.

\section{Elliptic Estimates}\label{elliptic}
To begin, we will recall a number of important inequalities regarding Sobolev spaces, which we will make frequent use of in the following elliptic and energy estimates.
\begin{proposition}\label{important-ineqs}
Let $OP^r$ denote the space of pseudo-differential operators with symbol in H\"omander's class $S^r_{1,0}$. The following inequalities then hold.

\begin{enumerate}[label=\textbf{\arabic*}.]
\item \cite{Taylor1991} If $\mathcal{P}\in OP^r$ for $r\in\mathbb{R}$ and if $s\in\mathbb{R}$, then $\mathcal{P}:H^s\rightarrow H^{s-r}$. That is, if $f\in H^s$, then the estimate
\begin{equation}
\hnorm{\mathcal{P}f}{s-r}\leq C\hnorm{f}{s}
\end{equation}

\noindent holds.

\item \cite{Taylor1991} Let $\mathcal{P}\in OP^r$ for $r>0$ and take $s\geq0$. Then for $f\in H^{r+s}$ and $g\in H^{r+s-1}$,
\begin{equation}
\hnorm{[\mathcal{P},f]g}{s}=\hnorm{\mathcal{P}(fg)-f\mathcal{P}(g)}{s}\leq C\big(\lnorm{\nabla[\gamma]f}{\infty}\hnorm{g}{r+s-1}+\hnorm{f}{r+s}\lnorm{g}{\infty}\big)
\end{equation}

\item \cite{Kato1988} If $s>0$ then $H^s\cap L^\infty$ is an algebra and for $f,g\in H^s\cap L^\infty$,
\begin{equation}
\hnorm{fg}{s}\leq C\big(\lnorm{f}{\infty}\lnorm{g}{2}+\lnorm{f}{2}\lnorm{f}{\infty}\big).
\end{equation}

\noindent In particular, with the embeddings $H^s\hookrightarrow L^2$ for all $s$ and $H^s\hookrightarrow W^{0,\infty}=L^\infty$ for $s>\frac{n}{2}$, then if $s>\frac{n}{2}$,
\begin{equation}
\hnorm{fg}{s}\leq C\hnorm{f}{s}\hnorm{g}{s}.
\end{equation}
\end{enumerate}
\end{proposition}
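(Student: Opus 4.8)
The plan is to establish each of the three inequalities by reducing to the corresponding Euclidean statement via a coordinate partition of unity and then invoking standard harmonic analysis. Since $\Sigma$ is closed, I would first fix a finite atlas $\{(U_\kappa,\varphi_\kappa)\}$ with subordinate partition of unity $\{\chi_\kappa\}$, and define $\hnorm{f}{s}$ intrinsically through the spectral resolution of $\langle\Delta_\gamma\rangle\coloneqq(1+\Delta_\gamma)^{1/2}$, verifying the (standard) equivalence with the localized norm $\sum_\kappa\hnorm{(\chi_\kappa f)\circ\varphi_\kappa^{-1}}{s}$ on $\mathbb{R}^n$. With this dictionary in place the three parts follow a common template: transport everything to charts, apply the Euclidean result, and sum.

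For part \textbf{1}, I would observe that $\langle\Delta_\gamma\rangle^{s-r}\,\mathcal{P}\,\langle\Delta_\gamma\rangle^{-s}\in OP^0$ by the symbol calculus (composition of operators of orders $s-r$, $r$, $-s$), so the claim reduces to $L^2$-boundedness of operators of order zero, which is the Calder\'on--Vaillancourt theorem; the constant depends only on finitely many symbol seminorms and on the fixed background geometry $\gamma$. For part \textbf{2}, I would use the paraproduct decomposition on each chart: writing $fg=T_fg+T_gf+R(f,g)$ with $T$ the low--high paraproduct and $R$ the resonant term, the commutator $[\mathcal{P},f]g$ kills the worst high--high interaction, and one estimates the remaining Littlewood--Paley blocks using Bernstein's inequality together with the order-$r$ bound on $\mathcal{P}$. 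In every surviving block the low-frequency factor is measured in $L^\infty$ and the high-frequency factor in $H^{\,\cdot}$, which reproduces exactly the two terms $\lnorm{\nabla[\gamma]f}{\infty}\hnorm{g}{r+s-1}$ and $\hnorm{f}{r+s}\lnorm{g}{\infty}$.

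For part \textbf{3}, the same paraproduct decomposition gives $\hnorm{T_fg}{s}\lesssim\lnorm{f}{\infty}\hnorm{g}{s}$, $\hnorm{T_gf}{s}\lesssim\lnorm{g}{\infty}\hnorm{f}{s}$, and $\hnorm{R(f,g)}{s}\lesssim\lnorm{f}{\infty}\hnorm{g}{s}+\lnorm{g}{\infty}\hnorm{f}{s}$ for $s>0$; summing and symmetrizing yields the Moser-type bound. The final consequence then follows from the Sobolev embedding $H^s(\Sigma)\hookrightarrow L^\infty(\Sigma)$ for $s>\tfrac{n}{2}$, itself proved patchwise from the Euclidean embedding, applied to the $\lnorm{f}{\infty},\lnorm{g}{\infty}$ factors.

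The main obstacle is bookkeeping rather than conceptual: one must check that conjugating $\mathcal{P}$ by cutoffs and coordinate maps leaves it in the correct operator class with seminorms controlled uniformly in $\kappa$, so that all constants $C$ depend only on the atlas, finitely many symbol seminorms, and the background metric $\gamma$; and one must confirm that the Littlewood--Paley machinery, normally phrased on $\mathbb{R}^n$, survives the localization without the cutoffs degrading the frequency separation. These verifications are routine but delicate, which is precisely why we content ourselves with citing \cite{Taylor1991} and \cite{Kato1988} rather than reproducing the arguments here.
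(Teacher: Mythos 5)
The paper itself offers no proof of this proposition: it is a catalogue of standard estimates quoted directly from \cite{Taylor1991} and \cite{Kato1988}, so there is no internal argument to compare against, and your closing decision to defer to those references is in effect the paper's own route. Your outlines of parts \textbf{1} and \textbf{3} are the standard ones and are sound: conjugation by powers of $(1+\Delta_\gamma)^{1/2}$ plus the symbol calculus reduces part \textbf{1} to $L^2$-boundedness of order-zero operators (Calder\'on--Vaillancourt), and the paraproduct splitting plus the embedding $H^s(\Sigma)\hookrightarrow L^\infty(\Sigma)$ for $s>\frac{n}{2}$ gives part \textbf{3}; the chart-localization and uniformity-of-seminorms checks you flag are indeed the only bookkeeping required on a closed manifold. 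Note, incidentally, that the first display of part \textbf{3} as printed is a typo---the right-hand side should read $\lnorm{f}{\infty}\hnorm{g}{s}+\hnorm{f}{s}\lnorm{g}{\infty}$---and it is this corrected Moser estimate that your argument (and \cite{Kato1988}) actually establishes and that the paper uses later.

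There is, however, a genuine flaw in your description of part \textbf{2}. You claim that ``the commutator $[\mathcal{P},f]g$ kills the worst high--high interaction'' and that the surviving blocks, estimated with the low-frequency factor in $L^\infty$ and ``the order-$r$ bound on $\mathcal{P}$,'' reproduce the two stated terms. That is not where the cancellation lives. In the splitting $f=T_f+(f-T_f)$, the high--high and high--low pieces do not cancel against anything; they are estimated directly and account for the term $\hnorm{f}{r+s}\lnorm{g}{\infty}$. The cancellation occurs in the low--high piece: the symbol calculus shows that $[\mathcal{P},T_f]$ is an operator of order $r-1$ (not $r$) whose relevant seminorms are controlled by $\lnorm{\nabla[\gamma]f}{\infty}$, and it is precisely this gain of one order that converts the naive bound $\lnorm{f}{\infty}\hnorm{g}{r+s}$ into $\lnorm{\nabla[\gamma]f}{\infty}\hnorm{g}{r+s-1}$. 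As written---using only the order-$r$ bound on $\mathcal{P}$---your plan would leave the $T_fg$ block estimated by $\lnorm{f}{\infty}\hnorm{g}{r+s}$, which is not one of the allowed terms, and the reduced Sobolev index on $g$ is exactly the feature the commutator estimate exists to provide. The repair is Taylor's argument itself: isolate $[\mathcal{P},T_f]$, invoke the order-$(r-1)$ operator bound with constant $\lnorm{\nabla[\gamma]f}{\infty}$, and estimate $\mathcal{P}\big((f-T_f)g\big)$ and $(f-T_f)\mathcal{P}g$ directly.
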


\noindent An important case of Proposition \ref{important-ineqs}.1 is $\nabla[\gamma]\in OP^1$, for which we have the estimate $\hnorm{\nabla[\gamma]f}{s-1}\leq C\hnorm{f}{s}$. Moreover, we will make use of---although not always explicitly---Minkowski's and H\"older's inequalities when necessary.

Now we will obtain estimates for the small gauge variables $\omega$, $X$, and $\phi$. This, however, will first require an estimate on the tangential velocity $\partial_T\gamma$, given in the following lemma.
\begin{lemma}\label{gamma-evol}
Let $\gamma\in\mathcal{N}$ be a metric satisfying the shadow metric condition $(g-\gamma)\perp\mathcal{N}$ and for which $\hnorm{g-\gamma}{s}<\delta$ with a sufficiently small $\delta>0$ and $s>\frac{n}{2}+1$. Then the vector field $\partial_T\gamma$ satisfies the estimate
\begin{equation}
\norm{\partial_T\gamma}\leq C\big\{(1+\hnorm{u}{s})(\hnorm{\omega}{s+1}+\hnorm{X}{s+1})+\hnorm{\k}{s-1}(1+\hnorm{\omega}{s+1})\big\},
\end{equation}

\noindent for a constant $C$ depending only on the background geometry.
\end{lemma}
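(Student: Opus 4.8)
The plan is to start from the shadow metric condition \eqref{shadow-metric-cond} and differentiate it in time. Since $\langle g-\gamma,\partial\gamma/\partial q^\alpha\rangle_\gamma = 0$ holds for all $T$, applying $\partial_T$ yields
\begin{equation*}
\bigg\langle \partial_T g - \partial_T\gamma, \pder{\gamma}{q^\alpha}\bigg\rangle_\gamma + \bigg\langle g-\gamma, \partial_T\pder{\gamma}{q^\alpha}\bigg\rangle_\gamma = 0,
\end{equation*}
plus a term coming from the $\gamma$-dependence of the inner product itself (the $\mu_\gamma$ and $\gamma^{im}\gamma^{jn}$ factors), which is controlled by $\|\partial_T\gamma\|$ times $\|g-\gamma\|$ and hence is a small perturbation that can be absorbed on the left. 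Writing $\partial_T\gamma = \dot q^\alpha\,\partial\gamma/\partial q^\alpha$ and pairing against the basis $\{\partial\gamma/\partial q^\beta\}$ gives a finite linear system $G_{\alpha\beta}\dot q^\beta = (\text{terms involving }\partial_T g\text{ and }u)$, where $G_{\alpha\beta} = \langle \partial\gamma/\partial q^\alpha, \partial\gamma/\partial q^\beta\rangle_\gamma + O(\delta)$ is invertible for $\delta$ small since the leading part is a Gram matrix of a basis on the finite-dimensional space $T_\gamma\mathcal N$. This is exactly the mechanism behind the crude bound \eqref{dT-gamma-est}; the present lemma just unpacks $\|\partial_T g\|_{L^2}$ into the individual variable norms.

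The second step is to substitute the evolution equation \eqref{g-evol} for $\partial_T g$:
\begin{equation*}
\partial_T g_{ij} = -\frac{2}{n}(n-N)g_{ij} + 2N\k_{ij} - L_X g_{ij}.
\end{equation*}
Now estimate each term in $L^2(\Sigma)$ (which suffices, since by \eqref{dT-gamma-est} the finite-dimensional norm of $\partial_T\gamma$ is dominated by $\|\partial_T g\|_{L^2}$, and in fact by $\|\partial_T g\|_{H^{s-1}}$). Write $N = n + \omega$ and $g = \gamma + u$. The first term $-\tfrac{2}{n}(n-N)g_{ij} = \tfrac{2}{n}\omega\,g_{ij}$ is bounded by $C\|\omega\|_{H^{s-1}}(1+\|u\|_{H^{s-1}})$ using the algebra property (Proposition \ref{important-ineqs}.3) and $\|\gamma\|$ a background constant; this contributes to $(1+\|u\|_s)\|\omega\|_{s+1}$. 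The term $2N\k_{ij} = 2(n+\omega)\k_{ij}$ is bounded by $C\|\k\|_{H^{s-1}}(1+\|\omega\|_{H^{s-1}})$, contributing $\|\k\|_{s-1}(1+\|\omega\|_{s+1})$. The Lie-derivative term $L_X g_{ij} = X^k\nabla[\gamma]_k g_{ij} + g_{kj}\nabla[\gamma]_i X^k + g_{ik}\nabla[\gamma]_j X^k$ is bounded, again by the algebra property and $\|\nabla[\gamma] X\|_{H^{s-1}} \le C\|X\|_{H^s} \le C\|X\|_{H^{s+1}}$, by $C\|X\|_{H^{s+1}}(1+\|u\|_{H^s})$, contributing $(1+\|u\|_s)\|X\|_{s+1}$. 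Collecting these three bounds gives exactly the claimed estimate.

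The only mild subtlety — and the step I would treat most carefully — is the $\partial_T(\partial\gamma/\partial q^\alpha)$ term and the variation of the inner product $\langle\cdot,\cdot\rangle_\gamma$ under $\partial_T\gamma$, i.e. making rigorous that the linear system for $\dot q^\alpha$ has a uniformly invertible coefficient matrix and that all the $\gamma$-side corrections are genuinely $O(\delta\|\partial_T\gamma\|)$ so they can be moved to the left-hand side. This is handled by noting that $\gamma$ ranges over a compact neighborhood in the finite-dimensional manifold $\mathcal N$, so $G_{\alpha\beta}$, $\partial\gamma/\partial q^\alpha$, and $\partial_T(\partial\gamma/\partial q^\alpha) = \dot q^\beta\,\partial^2\gamma/\partial q^\beta\partial q^\alpha$ are all controlled by background constants times $\|\partial_T\gamma\|$; for $\delta$ small the perturbation is absorbed and one recovers $\|\partial_T\gamma\| \le C\|\partial_T g\|_{L^2} \le C\|\partial_T g\|_{H^{s-1}}$. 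Everything after that is the routine product-estimate bookkeeping sketched above. I expect no genuine obstacle here; the lemma is essentially \eqref{dT-gamma-est} combined with term-by-term Sobolev estimates on \eqref{g-evol}.
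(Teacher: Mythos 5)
Your proposal is correct and follows essentially the same route as the paper: recall the shadow-metric estimate $\norm{\partial_T\gamma}\leq C\hnorm{\partial_Tg}{s-1}$, substitute the evolution equation \eqref{g-evol} with $N=n+\omega$, $g=\gamma+u$, expand $L_Xg$ via $\nabla[\gamma]$, and close with the product estimates of Proposition \ref{important-ineqs} and Sobolev embeddings. The extra detail you give on differentiating the shadow condition to justify \eqref{dT-gamma-est} is sound but not needed in the paper's proof, which simply invokes that estimate as established earlier.
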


\begin{proof}
Recall \eqref{dT-gamma-est}; that is, from the shadow metric condition, we obtain an estimate for the time evolution $\partial_T\gamma$ of the form
\begin{equation*}
\norm{\partial_T\gamma}\leq C\hnorm{\partial_Tg}{s-1}.
\end{equation*}

\noindent In the evolution equation for $g$, we write the Lie derivative $\mathcal{L}_Xg$ in terms of $u=g-\gamma$ and $\gamma$ as
\begin{align*}
\mathcal{L}_Xg_{ij}=\mathcal{L}_Xu_{ij}+\mathcal{L}_X\gamma_{ij}=X^k\nabla[\gamma]_ku_{ij}+u_{ik}\nabla[\gamma]_jX^k+u_{jk}\nabla[\gamma]_iX^k+\nabla[\gamma]_iX_j+\nabla[\gamma]_jX_i.
\end{align*}

\noindent We then have from the evolution equation for $g$ and Minkowski's inequality, that $\partial_T\gamma$ satisfies the norm bound
\begin{equation*}
\norm{\partial_T\gamma}\leq C\big\{\hnorm{u\omega}{s-1}+\hnorm{\omega}{s-1}+\hnorm{\k}{s-1}+\hnorm{\k\omega}{s-1}+\hnorm{X\nabla[\gamma]u}{s-1}+\hnorm{u\nabla[\gamma]X}{s-1}+\hnorm{\nabla[\gamma]X}{s-1}\big\},
\end{equation*}

\noindent where the positive constant $C=C(n,\norm{\gamma},\dots)$ depends on the background geometry only. Applying the inequalities from Proposition \ref{important-ineqs} for $s-1>\frac{n}{2}$, we obtain
\begin{align*}
\norm{\partial_T\gamma}&\leq C\big\{\hnorm{\omega}{s-1}\hnorm{u}{s-1}+\hnorm{\omega}{s-1}+\hnorm{\k}{s-1}+\hnorm{\k}{s-1}\hnorm{\omega}{s-1} \\
&\qquad\qquad+\hnorm{X}{s-1}\hnorm{\nabla[\gamma]u}{s-1}+\hnorm{\nabla[\gamma]X}{s-1}\hnorm{u}{s-1}+\hnorm{\nabla[\gamma]X}{s-1}\big\} \\
&\leq C\big\{\hnorm{\omega}{s-1}\hnorm{u}{s-1}+\hnorm{\omega}{s-1}+\hnorm{\k}{s-1}+\hnorm{\k}{s-1}\hnorm{\omega}{s-1} \\
&\qquad\qquad+\hnorm{X}{s-1}\hnorm{u}{s}+\hnorm{X}{s}\hnorm{u}{s-1}+\hnorm{X}{s}\big\} \\
\end{align*}

\noindent Using the embedding $H^{s_1}\hookrightarrow H^{s_2}$ for $s_1>s_2$, we obtain the desired estimate.
\end{proof}

We now study the perturbed lapse function $\omega=N-n$, where we will use the proof of a similar result from \cite{Mondal2022} to show that $\omega$ satisfies the following second-order estimate.
\begin{lemma}\label{omega-estimate}
Fix $s>\frac{n}{2}+1$. Let $B_\delta(0)$ be a ball of sufficiently small radius $\delta$ containing $(\k,u,\E,A)$ that satisfy the Einstein-Yang-Mills system. Then the lapse gauge variable $\omega=N-n$ satisfies the estimate
\begin{equation}\label{omega-ineq}
\hnorm{\omega}{s+1}\leq C\Big\{\hnorm{\k}{s-1}^2+e^{-2T}\big(\hnorm{\E}{s-1}^2+\hnorm{A}{s}^2\big)\Big\},
\end{equation}

\noindent for a constant $C$ depending only on the background geometry.
\end{lemma}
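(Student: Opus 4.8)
The plan is to treat \eqref{N-ell} as a linear elliptic equation for $\omega = N - n$. Substituting $N = n + \omega$ and using that the background values $(\k,\E,F,N) = (0,0,0,n)$ solve \eqref{N-ell} identically (the constant term $\tfrac1n\cdot n$ reproducing the right-hand side $1$), the equation collapses to
\[
\Big(\Delta_g + \tfrac{1}{n} + \mathcal{J}\Big)\omega = -n\,\mathcal{J},
\qquad
\mathcal{J} \coloneqq |\k|^2 + \tfrac{n-2}{n-1}\tau_0^2 e^{-2T} g^{k\ell}\E^a{}_{bk}\E^b{}_{a\ell} + \tfrac{\tau_0^2}{2(n-1)} e^{-2T} g^{k\ell}g^{mn} F^a{}_{bkm} F^b{}_{a\ell n}.
\]
The point is that the zeroth-order coefficient is strictly positive: $\tfrac1n > 0$, and for $g$ in a small neighborhood of $\gamma$ the expressions $g^{k\ell}\E_k\E_\ell$ and $g^{k\ell}g^{mn}F_{km}F_{\ell n}$ are nonnegative because $g$ is Riemannian, so $\mathcal{J}\geq 0$. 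Consequently, for $(\k,u,\E,A)\in B_\delta(0)$, the operator $\Delta_g + \tfrac1n + \mathcal{J}$ is uniformly elliptic with a uniform lower bound on its spectrum, and its coefficients lie in $H^{s-1}$, which is an algebra since $s-1 > \tfrac n2$. Standard elliptic regularity (with the coercivity absorbing the usual $L^2$ term) then yields $\hnorm{\omega}{s+1} \leq C\hnorm{\mathcal{J}}{s-1}$, with $C$ depending only on the background geometry.

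It remains to estimate $\hnorm{\mathcal{J}}{s-1}$. Applying the algebra estimate from Proposition \ref{important-ineqs}.3 termwise, together with the bound $\hnorm{g^{-1}}{s-1}\leq C$ valid for $g$ near $\gamma$, gives
\[
\hnorm{\mathcal{J}}{s-1} \leq C\Big(\hnorm{\k}{s-1}^2 + e^{-2T}\hnorm{\E}{s-1}^2 + e^{-2T}\hnorm{F}{s-1}^2\Big).
\]
Since $F = \mathrm{d}A + [A,A]$, Proposition \ref{important-ineqs}.1 (applied to $\nabla[\gamma]\in OP^1$) together with Proposition \ref{important-ineqs}.3 gives $\hnorm{F}{s-1}\leq C(\hnorm{A}{s} + \hnorm{A}{s}^2)$, and in the small-data regime $\hnorm{A}{s} < \delta$ this is $\leq C\hnorm{A}{s}$, hence $\hnorm{F}{s-1}^2 \leq C\hnorm{A}{s}^2$. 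Combining these inequalities produces exactly \eqref{omega-ineq}.

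The main obstacle I anticipate is not a single computation but ensuring the elliptic estimate is uniform along the flow with a constant depending only on the background: one must verify that, for $(\k,u,\E,A)\in B_\delta(0)$, the principal part $\Delta_g$ remains uniformly elliptic and the zeroth-order coefficient $\tfrac1n + \mathcal{J}$ remains bounded away from $0$ (and bounded in $H^{s-1}$) uniformly in $T$, so that the elliptic constant does not degenerate as $T\to\infty$. This is precisely where the sign of $\mathcal{J}$ --- or, alternatively, its smallness in $L^\infty$ through the embedding $H^{s-1}\hookrightarrow L^\infty$ --- is used. A secondary subtlety worth stating explicitly is that $\mathcal{J}$ appears both as the source and as a coefficient of the operator; since $\mathcal{J}\in H^{s-1}$ is exactly the regularity needed for $H^{s-1}$-elliptic regularity to produce $H^{s+1}$ control of $\omega$, no derivative loss occurs. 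This argument follows the structure of the corresponding lapse estimate in \cite{Mondal2022}.
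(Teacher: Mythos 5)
Your proposal is correct and follows essentially the same route as the paper: rewrite \eqref{N-ell} as a coercive elliptic equation for $\omega$ with strictly positive zeroth-order coefficient, obtain $\hnorm{\omega}{s+1}\leq C\hnorm{\mathcal{P}\omega}{s-1}$ (the paper cites Lemma 5 of \cite{Mondal2022} for exactly this coercivity-based estimate, which you argue directly), and then bound the source via the algebra property together with $\hnorm{F}{s-1}\leq C\hnorm{A}{s}(1+\hnorm{A}{s})$ and absorption of small-data factors into the constant. No gaps.
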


\begin{proof}
From Eq. \eqref{N-ell}, the elliptic equation for the perturbation $\omega$ will be given by
\begin{align}\label{P-omega}
\Delta_g\omega+\bigg(|\k|^2+\frac{1}{n}+\frac{n-2}{n-1}\tau_0^2e^{-2T}|\E|^2&+\frac{\tau_0^2}{2(n-1)}e^{-2T}|F|^2\bigg)\omega \notag \\
&=-n\bigg(|\k|^2+\frac{n-2}{n-1}\tau_0^2e^{-2T}|\E|^2+\frac{\tau_0^2}{2(n-1)}e^{-2T}|F|^2\bigg).
\end{align}

\noindent We note that the second term of the left-hand side is strictly positive, and thus the second-order elliptic operator $\mathcal{P}=\Delta_g+\big(|\k|^2+\frac{1}{n}+\frac{n-2}{n-1}\tau_0^2e^{-2T}|\E|^2+\frac{\tau_0^2}{2(n-1)}e^{-2T}|F|^2\big)\Id$ has a definite sign. As shown via contradiction in the proof of Lemma 5 in \cite{Mondal2022}, such an operator will satisfy an estimate of the form $\hnorm{\omega}{s+1}\leq C\hnorm{\mathcal{P}\omega}{s-1}$ for any $s\geq1$. Since we have assumed throughout that $s>\frac{n}{2}+1>1$, we thus have such an estimate with $\mathcal{P}\omega$. Using the expression for $\mathcal{P}\omega$ in \eqref{P-omega}, we then may bound the Sobolev norm of $\omega$ as
\begin{align*}
\hnorm{\omega}{s+1}&\leq C\bigg\lVert-n\bigg(|\k|^2+\frac{n-2}{n-1}\tau_0^2e^{-2T}|\E|^2+\frac{\tau_0^2}{2(n-1)}e^{-2T}|F|^2\bigg)\bigg\rVert_{H^{s-1}} \\
&\leq C\big\{\hnorm{\k}{s-1}^2+e^{-2T}\hnorm{\E}{s-1}^2+e^{-2T}\hnorm{F}{s-1}^2\big\} \\
&\leq C\big\{\hnorm{\k}{s-1}^2+e^{-2T}\hnorm{\E}{s-1}^2+e^{-2T}\hnorm{A}{s}^2(1+\hnorm{A}{s})^2\big\}
\end{align*}

\noindent Because we have taken $(\k,A,\E)\in B_\delta(0)$, we may absorb non-leading order terms into the coefficient $C$ by redefining $C\rightarrow(1+\delta)C$, i.e.,
\begin{align*}
\hnorm{\omega}{s+1}&\leq C\hnorm{\k}{s-1}^2+Ce^{-2T}\hnorm{\E}{s-1}^2+Ce^{-2T}\hnorm{A}{s}^2(1+\delta)^2 \\
&\leq C\hnorm{\k}{s-1}^2+Ce^{-2T}\hnorm{\E}{s-1}^2+C'e^{-2T}\hnorm{A}{s}^2 \\
&\leq C'\big\{\hnorm{\k}{s-1}^2+e^{-2T}(\hnorm{\E}{s-1}^2+\hnorm{A}{s}^2)\big\}.
\end{align*}
\end{proof}

\noindent We note that, as we will study the Cauchy problem and thus take initial data at some $T_0$, we may bound $e^{-T}\leq e^{-T_0}$ which may be absorbed into the constant $C$. However, if terms of the form $e^T$ were to appear, our subsequent energy estimates would fail as these exponential terms would prevent us from finding a decreasing bound on the total energy. Since $e^{-T}$ terms may neutralize such terms, we will keep, for now, all terms of the form $e^{-T}$ with their leading order Sobolev norms.

While the elliptic differential operator for $\omega$ has definite sign, those appearing in the defining equations for $X$ and $\phi$ do not. This means that, in general, the elliptic operators acting on $X$ and $\phi$ will not have trivial kernel and thus will exhibit a \textit{Gribov ambiguity} \cite{Moncrief1979,Singer1978}. Geometrically we can view this degeneracy as a breakdown of geodesic normal coordinates on the gauge-fixed spaces of Riemannian metrics or connection $1$-forms, as follows.

Let us consider the shift vector field $X$. After fixing the spatial harmonic gauge, the dynamics of the metric evolution occur on the space $\mathcal{M}_\Sigma/\mathcal{D}_0$, where we recall that $\mathcal{D}_0$ is the space of diffeomorphisms on $\Sigma$. However, the elliptic equation defining $X$ is derived from requiring that the spatial harmonic gauge condition is preserved over time, i.e., $\partial_TV^k=0$ for the tension field $V^k$. Since geodesics on the gauge-fixed quotient space $\mathcal{M}_\Sigma/\mathcal{D}_0$ must preserve the spatial harmonic gauge condition, we then have that the uniqueness of geodesics will be determined by the uniqueness of the $X$ defined by \eqref{X-constraint}. In other words, looking at geodesic normal coordinates, we will have a neighborhood $B_\delta(\gamma)$ of a metric $\gamma\in\mathcal{M}_\Sigma/\mathcal{D}_0$ such that the exponential map used to construct geodesics is injective, which in turn corresponds to the injectivity of the differential operator in \eqref{X-constraint} that acts on $X$. However, for $g\in(\mathcal{M}_\Sigma/\mathcal{D}_0)\setminus B_\delta(\gamma)$, there will not be a unique geodesic between $\gamma$ and $g$, corresponding to the aforementioned Gribov ambiguity and breakdown in injectivity of the differential operator defining $X$.

An identical reasoning ought to hold for $\phi$ with the space of connection $1$-forms $A$, though as the topology of this space is unknown we can only draw parallels. We see from this geometric view that we will want to restrict our focus to neighborhoods in the spaces of Riemannian metrics and connection $1$-forms on which the elliptic operators defining $X$ and $\phi$ remain injective, which we may freely do as we are working in the context of small data. The following lemma gives the necessary bound for the shift vector field $X$. We give the full proof, following the derivation given in \cite{Andersson2003} and \cite{Mondal2022}, as we will use a similar method to subsequently estimate $\phi$.

\begin{lemma}\label{X-estimate}
Fix $s>\frac{n}{2}+1$. Let $B_\delta(0)$ be a ball of sufficiently small radius $\delta$ containing $(\k,u,\E,A)$ that satisfy the Einstein-Yang-Mills system. Then the shift vector field $X$ will satisfy the bound
\begin{equation}
\hnorm{X}{s+1}\leq C\big\{\hnorm{\k}{s-1}+\hnorm{u}{s}+e^{-2T}(\hnorm{\E}{s-1}^2+\hnorm{A}{s}^2)+e^{-4T}(\hnorm{\E}{s-1}^2+\hnorm{A}{s}^2)^2\big\},
\end{equation}

\noindent for a constant $C$ depending only on the background geometry.
\end{lemma}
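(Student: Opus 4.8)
The plan is to mirror the proof of Lemma~\ref{omega-estimate}, working with the elliptic operator $\mathcal{P}_X$ acting on $X$ in \eqref{X-constraint}, namely $\mathcal{P}_X X^i = \Delta_g X^i - R^i{}_j X^j + L_X V^i$, and to note that on a sufficiently small neighborhood of the background (so that we avoid the Gribov ambiguity discussed above) this operator is injective, and hence by elliptic regularity satisfies an estimate of the form $\hnorm{X}{s+1}\leq C\hnorm{\mathcal{P}_X X}{s-1}$. The content of the lemma is then to bound the $H^{s-1}$ norm of the right-hand side of \eqref{X-constraint} term by term.

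First I would collect the source terms of \eqref{X-constraint} into three groups: (i) the purely geometric terms $-2\k^{ij}\nabla_j N + (1-\tfrac2n)\nabla^i N$ and the terms multiplying $(\Gamma[g]-\Gamma[\gamma])$, plus $g^{jk}\partial_T\Gamma[\gamma]^i{}_{jk}$; (ii) the Yang-Mills term $-2\tau_0^2 e^{-2T} N g^{ik}g^{\ell j} F^a{}_{bk\ell}\E^b{}_{aj}$; and (iii) the terms already linear in $X$ (the $-2\nabla^j X^k$ factor against $\Gamma[g]-\Gamma[\gamma]$, which is itself small), which are absorbed onto the left-hand side after noting $\hnorm{\Gamma[g]-\Gamma[\gamma]}{s-1}\leq C\hnorm{u}{s}$ is small, and the $L_X V^i$ term that vanishes identically in the spatial harmonic gauge where $V^i=0$. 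For group (i), I would use $N = n + \omega$ so that $\nabla N = \nabla\omega$, apply the algebra property from Proposition~\ref{important-ineqs}.3 to get bounds like $\hnorm{\k\nabla\omega}{s-1}\leq C\hnorm{\k}{s-1}\hnorm{\omega}{s+1}$ and $\hnorm{(\Gamma[g]-\Gamma[\gamma])}{s-1}\leq C\hnorm{u}{s}$, then invoke Lemma~\ref{omega-estimate} to trade every factor of $\hnorm{\omega}{s+1}$ for $\hnorm{\k}{s-1}^2 + e^{-2T}(\hnorm{\E}{s-1}^2+\hnorm{A}{s}^2)$; the leading contributions that survive are $\hnorm{u}{s}$ (from $\partial_T\Gamma[\gamma]$, controlled via Lemma~\ref{gamma-evol}, and from the $\Gamma[g]-\Gamma[\gamma]$ terms against the $\tfrac{2N}{n}g^{jk}$ piece) and $\hnorm{\k}{s-1}$ (from $2N\k^{jk}$ against $\Gamma[g]-\Gamma[\gamma]$, i.e.\ $\hnorm{\k}{s-1}\hnorm{u}{s}\leq\delta\hnorm{\k}{s-1}$, and from $\nabla\omega$ terms after Lemma~\ref{omega-estimate}). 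For group (ii), since $F = \mathrm{d}A+[A,A]$ one has $\hnorm{F}{s-1}\leq C\hnorm{A}{s}(1+\hnorm{A}{s})$, so $\hnorm{e^{-2T} N F \E}{s-1}\leq C e^{-2T}\hnorm{A}{s}\hnorm{\E}{s-1}(1+\delta)$, which by Young's inequality is bounded by $C e^{-2T}(\hnorm{\E}{s-1}^2+\hnorm{A}{s}^2)$; the quartic term $e^{-4T}(\hnorm{\E}{s-1}^2+\hnorm{A}{s}^2)^2$ in the statement arises when the $\hnorm{\omega}{s+1}$ factor appearing in $\nabla\omega$-type terms is expanded via Lemma~\ref{omega-estimate} and then multiplied against another small Yang-Mills factor. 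Finally I would absorb all quadratic-and-higher terms in the small quantities into the constant $C$ via $C\to(1+\delta)C$ exactly as in the proof of Lemma~\ref{omega-estimate}, and use the Sobolev embedding $H^{s_1}\hookrightarrow H^{s_2}$ for $s_1>s_2$ to clean up the indices.

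The main obstacle is the careful bookkeeping in group (i): one must check that the terms involving $\partial_T\Gamma[\gamma]$ and the shift-coupled Christoffel differences genuinely reduce, after applying Lemmas~\ref{gamma-evol} and~\ref{omega-estimate}, to the claimed leading-order quantities $\hnorm{\k}{s-1}+\hnorm{u}{s}$ without producing any term that grows in $T$ or any term linear in $\hnorm{X}{s+1}$ with an $O(1)$ coefficient that cannot be absorbed on the left. The key structural fact making this work is that $\nabla N=\nabla\omega$ is itself quadratically small by Lemma~\ref{omega-estimate}, the Christoffel difference $\Gamma[g]-\Gamma[\gamma]$ is controlled by $\hnorm{u}{s}<\delta$, and $\partial_T\Gamma[\gamma]$ is controlled by $\norm{\partial_T\gamma}$ which by Lemma~\ref{gamma-evol} is in turn $O(\hnorm{\k}{s-1}+\hnorm{u}{s})$ to leading order after reusing Lemma~\ref{omega-estimate}; so every dangerous term carries at least one small factor, and the linear-in-$X$ terms all come multiplied by small coefficients and can be moved to the left-hand side. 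Once this is verified the estimate follows by the same $(1+\delta)C$ absorption argument used for $\omega$.
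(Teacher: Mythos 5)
Your proposal has a genuine gap at the heart of the lemma: you assert that the operator $X^i\mapsto\Delta_gX^i-R^i{}_jX^j+L_XV^i$ is injective because "we avoid the Gribov ambiguity," but that discussion is only heuristic motivation; establishing the injectivity for small data is precisely the content of the paper's proof. The paper does this by computing $L_XV^i=\der{}{s}(\psi_s^*V^\flat)^i\big|_{s=0}$ along the flow $\psi_s$ of $X$ — a genuine second-order differential expression in $X$ coming from the variation of the tension field under the flow, not the Lie derivative of the zero vector field — and then proving the exact identity \eqref{X-op-simplification}: the left-hand operator, \emph{together with} the term $2\nabla^jX^k(\Gamma[g]^i{}_{jk}-\Gamma[\gamma]^i{}_{jk})$ that you propose to treat as a small absorbable perturbation, equals $\mathcal{P}_{g,\gamma}X^i=\Delta^\gamma_gX^i-g^{k\ell}R[\gamma]^i{}_{kj\ell}X^j$. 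Injectivity of $\mathcal{P}_{g,\gamma}$ is then shown by integrating $\gamma_{ij}Z^i\mathcal{P}_{g,\gamma}Z^j$ over $\Sigma$, splitting $g^{k\ell}=\gamma^{k\ell}+(g^{k\ell}-\gamma^{k\ell})$, and using that $\gamma^{k\ell}R[\gamma]^i{}_{kj\ell}$ is negative-definite while $\hnorm{u}{s}$ is small. Your step "the $L_XV^i$ term vanishes identically in the spatial harmonic gauge" discards exactly the second-derivative terms needed for this simplification; with it removed you are left asserting, without proof, injectivity of a different operator, and the structural identity that makes the sign argument work is lost.

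There is also a quantitative gap in your group (i). Lemma \ref{gamma-evol} bounds $\norm{\partial_T\gamma}$ by $C\{(1+\hnorm{u}{s})(\hnorm{\omega}{s+1}+\hnorm{X}{s+1})+\hnorm{\k}{s-1}(1+\hnorm{\omega}{s+1})\}$, which contains $\hnorm{X}{s+1}$ with an order-one coefficient; your claim that $\norm{\partial_T\gamma}$ is $O(\hnorm{\k}{s-1}+\hnorm{u}{s})$ is circular unless the $X$ estimate is already in hand. Fed into the source term $g^{jk}\partial_T\Gamma[\gamma]^i{}_{jk}$, this produces a linear-in-$\hnorm{X}{s+1}$ contribution that is only harmless because it carries the additional small prefactor $(\hnorm{\gamma-\gamma_0}{s}+\hnorm{u}{s})$, which is what allows the paper to absorb it into the left-hand side through the factor $\{1-C(\hnorm{\gamma-\gamma_0}{s}+\hnorm{u}{s})(1+\hnorm{u}{s})\}$; your generic remark that "linear-in-$X$ terms come with small coefficients" does not by itself deliver this prefactor. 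Your remaining bookkeeping — Young's inequality on the $e^{-2T}NF\E$ term, trading $\hnorm{\omega}{s+1}$ via Lemma \ref{omega-estimate} and thereby generating the $e^{-4T}$ quartic — is consistent with the paper, but the two issues above are where the actual work of the lemma lies.
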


\begin{proof}
Recall the elliptic equation for $X$ is given as
\begin{align*}
\Delta_gX^i-R^i{}_jX^j+L_XV^i&=-2\k^{ij}\nabla_jN+\bigg(1-\frac{2}{n}\bigg)\nabla^iN-2\tau_0^2e^{-2T}Ng^{ik}g^{\ell j}F^a{}_{bk\ell}\E^b{}_{aj} \notag\\
&\qquad+\big(2N\k^{jk}-2\nabla^jX^k\big)\big(\Gamma[g]^i{}_{jk}-\Gamma[\gamma]^i{}_{jk}\big)+g^{jk}\partial_T\Gamma[\gamma]^i{}_{jk}.
\end{align*}

\noindent It is necessary to then demonstrate that $X^i\mapsto\Delta_gX^i+2\nabla^jX^k(\Gamma[g]^i{}_{jk}-\Gamma[\gamma]^i{}_{jk})-R^i{}_jX^j+L_XV^i$ is an isomorphism of function spaces $H^{s+1}$ and $H^{s-1}$. To do so, we will consider the flow of $X$ along the pull-back of the tension field which, in the CMCSH gauge, will vanish.

Let $\psi_s$ be the flow of $X$, and consider the pull-back of the tension field $V^\flat$. As the difference of connections transforms as a section of a suitable vector bundle, this gives
\begin{equation*}
\big(\psi_s^*V^\flat\big)^i=\big(\psi_s^*g^{-1}\big)^{jk}\big(\psi_s^*(\Gamma[g]-\Gamma[\gamma])\big)^i{}_{jk}=\big(\psi_s^*g^{-1}\big)^{jk}\big(\Gamma[\psi_s^*g]^i{}_{jk}-\Gamma[\psi_s^*\gamma]^i{}_{jk}\big).
\end{equation*}

\noindent We may then compute the Lie derivative of $V^\flat$ along $X$ as
\begin{align*}
(L_XV)^i&=\der{}{s}(\psi_s^*V^\flat)^i\big|_{s=0} \\
&=(L_Xg^{-1})^{jk}\big(\Gamma[g]^i{}_{jk}-\Gamma[\gamma]^i{}_{jk}\big)+\frac{1}{2}g^{jk}g^{i\ell}\big(-\nabla_\ell(L_Xg)_{jk}+\nabla_j(L_Xg)_{\ell k}+\nabla_k(L_Xg)_{\ell j}\big) \\
&\qquad-\frac{1}{2}g^{jk}\gamma^{i\ell}\big(-\nabla[\gamma]_\ell(L_X\gamma)_{jk}+\nabla[\gamma]_j(L_X\gamma)_{\ell k}+\nabla[\gamma]_k(L_X\gamma)_{\ell j}\big)
\end{align*}

\noindent Expanding the Lie derivative on the right-hand side with $(L_Xg)_{ij}=\nabla_iX_j+\nabla_jX_i$ and $(L_X\gamma)_{ij}=\nabla[\gamma]_iX_j+\nabla[\gamma]_jX_i$, we then find that
\begin{equation}\label{X-op-simplification}
\Delta_gX^i-R^i{}_jX^j+L_XV^i+2\nabla^jX^k\big(\Gamma[g]^i{}_{jk}-\Gamma[\gamma]^i{}_{jk}\big)=\Delta^\gamma_gX^i-g^{k\ell}R[\gamma]^i{}_{kj\ell}X^j,
\end{equation}

\noindent The operator acting on $X$ on the left-hand side is $\mathcal{P}$, and the operator on the right-hand side now has a definite sign. So to show that $\mathcal{P}$ is an isomorphism of function spaces is to show that the mapping $\mathcal{P}_{g,\gamma}:X^i\mapsto\Delta_\gamma^gX^i-g^{k\ell}R[\gamma]^i{}_{kj\ell}X^j$ is. As the Cauchy hypersurface is compact and $\mathcal{P}_{g,\gamma}$ is second-order elliptic, then it will suffice to show that $\mathcal{P}_{g,\gamma}$ is injective in order to show it is the desired isomorphism.

Suppose, for the sake of contradiction, that $\mathcal{P}_{g,\gamma}$ is not injective. We thus have a non-zero $Z\in\ker\mathcal{P}_{g,\gamma}$. We may the integrate $\gamma_{ij}Z^i\mathcal{P}_{g,\gamma}Z^j\equiv0$ over the Cauchy hypersurface $\Sigma$, yielding
\begin{align*}
0&=\int_\Sigma\big(-\gamma_{ij}g^{k\ell}Z^j\nabla[\gamma]_k\nabla[\gamma]_\ell Z^i-\gamma_{im}g^{k\ell}Z^mR[\gamma]^i{}_{kj\ell}Z^j\big)\:\mu_\gamma \\
&=\int_\Sigma\big(\gamma_{ij}g^{k\ell}\nabla[\gamma]_kZ^j\nabla[\gamma]_\ell Z^i-\gamma_{im}g^{k\ell}Z^mR[\gamma]^i{}_{kj\ell}Z^j\big)\:\mu_\gamma \\
&=\int_\Sigma\big(\gamma_{ij}\gamma^{k\ell}\nabla[\gamma]_kZ^j\nabla[\gamma]_\ell Z^i+\gamma_{ij}(g^{k\ell}-\gamma^{k\ell})\nabla[\gamma]_kZ^j\nabla[\gamma]_\ell Z^i \\
&\qquad\qquad-\gamma_{im}\gamma^{k\ell}Z^mR[\gamma]^i{}_{kj\ell}Z^j-\gamma_{im}(g^{k\ell}-\gamma^{k\ell})Z^mR[\gamma]^i{}_{kj\ell}Z^j\big)\:\mu_\gamma.
\end{align*}

\noindent With $\hnorm{u}{s}$ taken to be sufficiently small, we have that the second and fourth terms in the integrand are domainated by the first and third terms. Furthermore, because $\gamma^{k\ell}R[\gamma]^i{}_{kj\ell}=R[\gamma]^i{}_j$ is negative-definite, we have that the first and third terms of the integrand are positive. Hence the integrand as a whole may be taken to be non-negative, implying that $Z$ must be $0$. This yields a contradiction, and so we must have that $\ker\mathcal{P}_{g,\gamma}=\{0\}$. Hence $\mathcal{P}_{g,\gamma}$ is injective, and because $\mathcal{P}=\mathcal{P}_{g,\gamma}$ by \eqref{X-op-simplification}, we have that $\mathcal{P}$ is injective and thus an isomorphism of suitable function spaces. By elliptic regularity, we have an estimate of the form $\hnorm{X}{s+1}\leq\hnorm{\mathcal{P}X}{s-1}$. Using the result of Lemma \ref{gamma-evol}, it then follows that $X$ will be controlled by
\begin{align*}
\hnorm{X}{s+1}\big\{1{}&{}-C(\hnorm{\gamma-\gamma_0}{s}+\hnorm{u}{s})(1+\hnorm{u}{s})\big\}\leq C(1+\hnorm{\k}{s-1})\hnorm{\omega}{s+1} \notag\\
&+C(1+\hnorm{\omega}{s+1})\big\{\hnorm{\k}{s-1}\hnorm{u}{s}+\tau_0^2e^{-2T}\hnorm{F}{s-1}\hnorm{\E}{s-1}\big\} \\
&+C(\hnorm{\gamma-\gamma_0}{s}+\hnorm{u}{s})\big\{(1+\hnorm{u}{s})(\hnorm{u}{s}+\hnorm{\omega}{s+1})+(1+\hnorm{\omega}{s+1})\hnorm{\k}{s-1}\big\}.
\end{align*}

\noindent Note that $\hnorm{\gamma-\gamma_0}{s}\leq C$ for constant $C$ depending only on the background geometry. Hence with sufficiently small $\hnorm{u}{s}$, and using the estimate for $\omega$ from Lemma \ref{omega-estimate}, the shift vector field $X$ is controlled to leading order by
\begin{align*}
\hnorm{X}{s+1}&\leq C\big\{\hnorm{\k}{s-1}+\hnorm{u}{s}+e^{-2T}(\hnorm{\E}{s-1}^2+\hnorm{A}{s}^2)+e^{-4T}(\hnorm{\E}{s-1}^2+\hnorm{A}{s}^2)^2\big\}.
\end{align*}
\end{proof}

\noindent We now give a bound on the norm of the Yang-Mills gauge variable $\phi$. As previously mentioned, the proof will be similar in structure to that of Lemma \ref{X-estimate}: we will consider the pull-back of the generalized Coulomb gauge condition along the flow of $\phi$ and compute its derivative at $s=0$ to find a form of the elliptic operator of interest that can then be shown to be an isomorphism of function spaces.

\begin{lemma}\label{phi-estimate}
Fix $\widehat{A}=0$ and $s>\frac{n}{2}+1$. Let $B_\delta(0)$ be a ball of sufficiently small radius $\delta$ containing $(\k,u,\E,A)$ that satisfy the Einstein-Yang-Mills system. Then the Yang-Mills gauge variable $\phi$ will satisfy the estimate
if $\widehat{A}=0$:
\begin{align}
\hnorm{\phi}{s+1}&\leq C\big\{\hnorm{\k}{s-1}^2+\hnorm{u}{s}^2+\hnorm{\E}{s-1}^2+\hnorm{A}{s}^2+e^{-2T}(\hnorm{\E}{s-1}+\hnorm{A}{s})(\hnorm{\E}{s-1}^2+\hnorm{A}{s}^2) \notag \\
&\qquad\quad+e^{-4T}\hnorm{A}{s}(\hnorm{\E}{s-1}^2+\hnorm{A}{s}^2)^2\big\},
\end{align}

\noindent for a constant $C$ depending only on the background geometry.
\end{lemma}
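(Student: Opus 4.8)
The plan is to mirror the proof of Lemma \ref{X-estimate} step-by-step, replacing the flow of the shift vector field by the flow of the gauge parameter $\phi$ acting on the space of connection $1$-forms. First I would show that the second-order elliptic operator acting on $\phi$ in \eqref{phi-constraint}, call it $\mathcal{Q}$, is an isomorphism $H^{s+1}\to H^{s-1}$. With $\widehat{A}=0$ the leading part of \eqref{phi-constraint} is $g^{ij}\nabla[\gamma]^A_i\nabla[\gamma]^A_j\phi + g^{ij}[-A_i,\nabla^A_j\phi]$, and the remaining $\phi$-dependent pieces are the transport term $-X^k\nabla[\gamma]_k(\cdots)$ which does not involve $\phi$ once $\widehat{A}=0$, so in fact the only $\phi$-dependence sits in the first two terms. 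I would identify this with the gauge-covariant twisted Laplacian $\Delta^{\gamma,A}_g$ acting on $\mathfrak{g}$-valued functions (scalars at the representation level), which is manifestly non-negative; indeed pairing $\langle \phi, \Delta^{\gamma,A}_g\phi\rangle$ and integrating by parts gives $\int_\Sigma g^{ij}\langle \nabla^A[\gamma]_i\phi,\nabla^A[\gamma]_j\phi\rangle\,\mu_g \ge 0$ up to terms controlled by $\hnorm{u}{s}$, so the kernel is trivial for $g$ near $\gamma$ and $A$ small. Since $\Sigma$ is compact and $\mathcal{Q}$ is second-order elliptic, injectivity gives the isomorphism, and elliptic regularity yields $\hnorm{\phi}{s+1}\le C\hnorm{\mathcal{Q}\phi - (\text{non-}\phi\text{ terms})}{s-1}$, i.e. $\hnorm{\phi}{s+1}\le C\hnorm{\text{(everything else in }\eqref{phi-constraint})}{s-1}$.

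Next I would bound the $H^{s-1}$ norm of the source, i.e. all the terms in \eqref{phi-constraint} not involving $\phi$, using Proposition \ref{important-ineqs} together with the elliptic estimates for $\omega$, $X$ already proven (Lemmas \ref{omega-estimate}, \ref{X-estimate}) and the bound $\hnorm{\partial_T\gamma}{}\le C(\cdots)$ from Lemma \ref{gamma-evol} wherever $\partial_T\Gamma[\gamma]$-type quantities implicitly enter through $X$. The terms $g^{ij}\nabla_iN\,\E_j$, $Ng^{ij}[A_i,\E_j]$, and $g^{ij}(\Gamma[g]-\Gamma[\gamma])N\E$ are all bilinear in small quantities and contribute $\hnorm{\omega}{s+1}\hnorm{\E}{s-1}+\hnorm{\E}{s-1}+\hnorm{A}{s}\hnorm{\E}{s-1}+\hnorm{u}{s}\hnorm{\E}{s-1}$ type terms; the terms involving $X$ (namely $-X^k\nabla[\gamma]_k g^{ij}\nabla[\gamma]_i A_j$, $g^{ij}\nabla[\gamma]_iX^k\nabla[\gamma]_kA_j$, $g^{ij}\nabla[\gamma]_iA_k\nabla[\gamma]_jX^k$, $g^{ij}A_k\nabla[\gamma]_i\nabla[\gamma]_jX^k$, $-g^{ij}R[\gamma]^\ell{}_{jik}A_\ell X^k$) are bilinear in $X$ and $A$ and contribute $\hnorm{X}{s+1}\hnorm{A}{s}$; and the last line $\big(2N\k^{ij}+\tfrac{2N}{n}g^{ij}-\nabla^iX^j-\nabla^jX^i\big)\nabla[\gamma]^{\widehat A}_i(A_j-\widehat A_j)$ vanishes at leading order since $C^a{}_b$ and hence $\nabla^{\widehat A}_i(A_j-\widehat A_j)$ is constrained, but conservatively contributes $(\hnorm{\k}{s-1}+\hnorm{u}{s}+\hnorm{X}{s+1})\hnorm{A}{s}$. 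After collecting and substituting the known bounds for $\hnorm{\omega}{s+1}$ and $\hnorm{X}{s+1}$, and absorbing non-leading-order powers of $\delta$ into $C$ exactly as in the previous lemmas, I would arrive at the stated inequality. Note that the $\hnorm{\k}{s-1}^2$, $\hnorm{u}{s}^2$, $\hnorm{\E}{s-1}^2$, $\hnorm{A}{s}^2$ terms come from substituting the $\omega$-bound and the leading part of the $X$-bound into products that already carry one factor of $\E$ or $A$, while the $e^{-2T}$ and $e^{-4T}$ terms inherit the exponential weights from $\hnorm{X}{s+1}$ (which itself has $e^{-2T}$ and $e^{-4T}$ pieces) multiplied against $\hnorm{A}{s}$.

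The main obstacle I expect is twofold. First, proving injectivity of $\mathcal{Q}$ requires care because the gauge-covariant Laplacian $\Delta^{\gamma,A}_g$ genuinely depends on $A$, so the non-negativity argument must treat the $[A,\cdot]$ terms as perturbations controlled by $\hnorm{A}{s}<\delta$ and $\hnorm{u}{s}<\delta$; unlike the $X$ case where the operator reduced cleanly to $\Delta^\gamma_g - g^{k\ell}R[\gamma]^i{}_{kj\ell}$ with a curvature term of definite sign, here I must verify that the integration-by-parts identity $\langle\phi,\Delta^{\gamma,A}_g\phi\rangle = \int g^{ij}\langle\nabla^A_i\phi,\nabla^A_j\phi\rangle\mu_g + (\text{lower order})$ holds with error terms that are genuinely small, which uses that $\nabla[\gamma]^A$ is compatible with the representation-level metric (adjoint-invariance of the Killing form on $\mathfrak{g}$). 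Second, there is a mild bookkeeping subtlety: \eqref{phi-constraint} as written still contains $\widehat A$ explicitly, so one must first set $\widehat A=0$ throughout and then read off which terms survive and at what order, being careful that terms like $g^{ij}[\widehat A_i - A_i,\nabla^A_j\phi]$ reduce to $-g^{ij}[A_i,\nabla^A_j\phi]$ (contributing to $\mathcal{Q}$) rather than to the source. Once the isomorphism is established the rest is a routine, if lengthy, application of the product and commutator estimates of Proposition \ref{important-ineqs} combined with the prior lemmas, entirely parallel to Lemma \ref{X-estimate}.
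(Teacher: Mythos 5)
Your overall skeleton matches the paper's: show the $\phi$-operator in \eqref{phi-constraint} is an isomorphism $H^{s+1}\to H^{s-1}$ via injectivity plus second-order ellipticity on compact $\Sigma$, then bound the remaining (non-$\phi$) source terms in $H^{s-1}$ with Proposition \ref{important-ineqs} and substitute Lemmas \ref{omega-estimate} and \ref{X-estimate}; your source bookkeeping lands essentially on the paper's intermediate bound. Where you genuinely diverge is the isomorphism step. You treat the $[A,\cdot]$ commutators perturbatively, whereas the paper runs exactly the analogue of the tension-field computation of Lemma \ref{X-estimate} that you assumed was unavailable: it sets $\mathcal{Q}=g^{ij}\nabla^{\widehat{A}}[\gamma]_i(A_j-\widehat{A}_j)$, differentiates the gauge-transformed quantity $e^{s\phi}\mathcal{Q}e^{-s\phi}$ at $s=0$, and uses the generalized Coulomb condition $\mathcal{Q}=0$ to get $\mathcal{P}\phi=g^{ij}\nabla^{\widehat{A}}[\gamma]_i\nabla^{\widehat{A}}[\gamma]_j\phi+g^{ij}\big[A_i-\widehat{A}_i,[\widehat{A}_j,\phi]\big]$; with $\widehat{A}=0$ this strips all $A$-dependence from the operator exactly, so the injectivity integral contains no $A$-commutator errors at all. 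Your perturbative route can be made to work at the same level of rigor, but note two inaccuracies: (i) the operator is not the full gauge-covariant Laplacian — with $\widehat{A}=0$ the term $g^{ij}[\widehat{A}_i-A_i,\nabla^A_j\phi]$ cancels the outer commutator, leaving $g^{ij}\nabla[\gamma]_i\big(\nabla^A[\gamma]_j\phi\big)$, so the pairing is not the perfect square $\int_\Sigma g^{ij}\langle\nabla^A_i\phi,\nabla^A_j\phi\rangle\,\mu_g$ but differs by an $O(\hnorm{A}{s})$ cross term (harmless for smallness, but the non-negativity is not ``manifest''); (ii) in either formulation the quadratic form controls only the (covariant) gradient of a kernel element, so integration by parts alone shows kernel elements are covariantly constant rather than zero — the paper's own proof makes the same jump to $\zeta=0$, so this does not distinguish your argument from theirs, but smallness of $A$ and $u$ does not by itself deliver trivial kernel. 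Finally, the lone $\hnorm{\E}{s-1}$ in your list of source contributions is a slip (those terms are bilinear, as you yourself state); if it were genuinely present the final estimate would be linear rather than quadratic in $\hnorm{\E}{s-1}$ and would not match the lemma.
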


\begin{proof}
To begin, we will show that the mapping $\mathcal{P}:\phi\mapsto g^{ij}\nabla[\gamma]_i^A\nabla[\gamma]_j^A\phi+g^{ij}[\widehat{A}_i-A_i,\nabla_j^A\phi]$ is an isomorphism of suitable function spaces. Let $\psi_s=e^{s\phi}$ be the one-parameter subgroup of the Lie group $G$ corresponding to $\phi$ and take $\mathcal{Q}=g^{ij}\nabla[\gamma]_i^{\widehat{A}}(A_j-\widehat{A}_j)$.

As the difference of two connections will transform under a gauge transformation as a suitable section of the associated $V$-bundle, we have that under a gauge transformation $\mathcal{G}$, $\mathcal{Q}$ will transform as $\mathcal{Q}\mapsto\mathcal{GQG}^{-1}$. Taking $\mathcal{G}$ to be $\psi_s$ and differentiating with respect to $s$,
\begin{align*}
\der{}{s}\big(e^{s\phi}\mathcal{Q}e^{-s\phi}\big)_{s=0}&=\der{}{s}\Big(g^{ij}\nabla[\gamma]_ie^{s\phi}(A_j-\widehat{A}_j)e^{-s\phi}+g^{ij}[e^{s\phi}\widehat{A}_ie^{-s\phi},e^{s\phi}(A_j-\widehat{A}_j)e^{-s\phi}]\Big)_{s=0} \notag\\
[\phi,\mathcal{Q}]&=g^{ij}\nabla[\gamma]_i[\phi,A_j-\widehat{A}_j]+g^{ij}\big[[\phi,\widehat{A}_i],A_j-\widehat{A}_j\big]+g^{ij}\big[\widehat{A}_i,[\phi,A_j-\widehat{A}_j]\big].
\end{align*}

\noindent An explicit computation gives
\begin{equation}
[\phi,\mathcal{Q}]=-\mathcal{P}\phi+g^{ij}\Big\{\nabla[\gamma]_i^{\widehat{A}}\nabla[\gamma]_j^{\widehat{A}}\phi+\big[A_i-\widehat{A}_i,[\widehat{A}_j,\phi]\big]\Big\},
\end{equation}

\noindent and so in the generalized Coulomb gauge, where $\mathcal{Q}=0$,
\begin{equation}
\mathcal{P}\phi^a{}_b=g^{ij}\nabla[\gamma]_i^{\widehat{A}}\nabla[\gamma]_j^{\widehat{A}}\phi^a{}_b+g^{ij}\big[A_i-\widehat{A}_i,[\widehat{A}_j,\phi]\big]^a{}_b.
\end{equation}

\noindent To see that $\mathcal{P}$ is an isomorphism, suppose that there is a non-zero $\zeta\in\ker\mathcal{P}$. Integrating the gauge-invariant inner product $\zeta^b{}_a\mathcal{P}\zeta^a{}_b\equiv0$ over $\Sigma$,
\begin{align*}
0&=\int_\Sigma\Big(g^{ij}\zeta^b{}_a\nabla[\gamma]_i^{\widehat{A}}\nabla[\gamma]_j^{\widehat{A}}\zeta^a{}_b+g^{ij}\zeta^b{}_a\big[A_i-\widehat{A}_i,[\widehat{A}_j,\zeta]\big]^a{}_b\Big)\:\mu_\gamma \\
&=\int_\Sigma\Big(-\gamma^{ij}\nabla[\gamma]_i^{\widehat{A}}\zeta^b{}_a\nabla[\gamma]_j^{\widehat{A}}\zeta^a{}_b-\big(g^{ij}-\gamma^{ij}\big)\nabla[\gamma]_i^{\widehat{A}}\zeta^b{}_a\nabla[\gamma]_j^{\widehat{A}}\zeta^a{}_b \\
&\qquad\qquad+\gamma^{ij}\zeta^b{}_a\big[A_i-\widehat{A}_i,[\widehat{A}_j,\zeta]\big]^a{}_b+\big(g^{ij}-\gamma^{ij}\big)\zeta^b{}_a\big[A_i-\widehat{A}_i,[\widehat{A}_j,\zeta]\big]^a{}_b\Big)\:\mu_\gamma
\end{align*}

\noindent Now, with $\widehat{A}=0$, the third and fourth terms will drop out. Meanwhile, with $\hnorm{u}{s}$ small, the second term can be taken to be smaller in magnitude than the first. Since the first term is negative-definite, we have that the integrand as a whole will be non-positive. As such, we must have $\zeta=0$ for the integral to vanish.

We conclude that there can be no non-zero $\zeta\in\ker\mathcal{P}$, and so since $\mathcal{P}$ has trivial kernel it must be injective. As $\Sigma$ is compact and $\mathcal{P}$ is second-order elliptic, the injectivity of $\mathcal{P}$ gives that it is indeed an isomorphism of $H^{s+1}$ and $H^{s-1}$. So, having established that $\mathcal{P}:\phi\mapsto g^{ij}\nabla[\gamma]_i^A\nabla[\gamma]_j^A\phi+g^{ij}[\widehat{A}_i-A_i,\nabla_j^A\phi]$ is an isomorphism between function spaces, it then follows by elliptic regularity and the inequalities outlined in Proposition \ref{important-ineqs} that
\begin{align}
\hnorm{\phi}{s+1}&\leq C\big\{\hnorm{\E}{s-1}\hnorm{\omega}{s+1}+\hnorm{A}{s}(1+\hnorm{\omega}{s+1})(\hnorm{\E}{s-1}+\hnorm{\k}{s-1}) \notag \\
&\qquad\qquad\qquad\qquad\qquad\qquad\qquad\qquad\quad+(1+\hnorm{u}{s})\hnorm{A}{s}\hnorm{X}{s+1}\big\}.
\end{align}

\noindent Using the estimates for the lapse function perturbation $\omega$ and shift field $X$ from Lemmas \ref{omega-estimate} and \ref{X-estimate}, the desired estimate for $\phi$ is then obtained.
\end{proof}

Finally, because the deformation space $\mathcal{N}$ is not in general trivial, recall that we may decompose a tangent vector field to the shadow metric as $\partial_T\gamma=h^{\TT\parallel}+L_{Y^\parallel}\gamma$, with $Y^\parallel$ in general non-zero. It will turn out that we will require an estimate of the sum $X+Y^\parallel$ to complete the energy estimates in the gravitational sector, due to a term of the form $\mathcal{L}_{X+Y^\parallel}\gamma$. We present a bound on the Sobolev norm for $X+Y^\parallel$ below, using the result of \cite{Andersson2009} that a particular second-order differential operator is indeed an isomorphism of function spaces.

\begin{lemma}Fix $\widehat{A}=0$ and $s>\frac{n}{2}+1$. Let $B_\delta(0)$ be a ball of sufficiently small radius $\delta$ containing $(\k,u,\E,A)$ that satisfy the Einstein-Yang-Mills system. Let $Y^\parallel$ be a vector field satisfying \eqref{Y-constraint}. The sum $X+Y^\parallel$ will then has the norm bound
\begin{align}
\hnorm{X+Y^\parallel}{s+1}&\leq C\big\{\hnorm{\k}{s-1}^2+\hnorm{u}{s}^2+e^{-2T}(\hnorm{\E}{s-1}^2+\hnorm{A}{s}^2)+e^{-4T}(\hnorm{\E}{s-1}^2+\hnorm{A}{s}^2)^2\big\},
\end{align}

\noindent for a constant $C$ depending only on the background geometry.
\end{lemma}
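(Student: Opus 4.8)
The plan is to mimic the structure of the proof of Lemma \ref{X-estimate}, now applied to the combined field $X+Y^\parallel$. The key observation is that $Y^\parallel$ is defined by \eqref{Y-constraint} precisely so that the sum $X+Y^\parallel$ satisfies the \emph{same} type of elliptic equation as $X$ but with a source that is the projection of the full metric velocity off the deformation space. More concretely, $\partial_T g = \partial_T\gamma + \partial_T u$ with $\partial_T\gamma = h^{\TT\parallel}+L_{Y^\parallel}\gamma$; substituting this into the preservation-of-gauge identity $\partial_T V^k=0$ produces an elliptic equation for $X+Y^\parallel$ of the schematic form
\begin{equation*}
\Delta^\gamma_g(X+Y^\parallel)^i - g^{k\ell}R[\gamma]^i{}_{kj\ell}(X+Y^\parallel)^j = (\text{lower-order source terms built from }\k,u,\E,A,\omega),
\end{equation*}
after using the same simplification \eqref{X-op-simplification} that converted the operator $\mathcal{P}$ acting on $X$ into the manifestly sign-definite operator $\mathcal{P}_{g,\gamma}$. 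I would invoke the result of \cite{Andersson2009} (cited in the statement) that this operator is an isomorphism $H^{s+1}\to H^{s-1}$ for $g$ near $\gamma_0$; alternatively one reruns the integration-by-parts/contradiction argument from Lemma \ref{X-estimate} verbatim, since the operator is literally the same.

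Given the isomorphism, elliptic regularity yields $\hnorm{X+Y^\parallel}{s+1}\leq C\hnorm{\mathcal{P}_{g,\gamma}(X+Y^\parallel)}{s-1}$, so the remaining task is to bound the source terms. These are exactly the terms appearing on the right-hand side of \eqref{X-constraint} — involving $\k\nabla N$, $\nabla N$, $e^{-2T}NF\E$, $(N\k - \nabla X)(\Gamma[g]-\Gamma[\gamma])$, and $g^{jk}\partial_T\Gamma[\gamma]$ — together with contributions coming from $h^{\TT\parallel}$ and the difference between $L_X$-terms written relative to $g$ versus $\gamma$. I would apply the product and commutator inequalities of Proposition \ref{important-ineqs} to each term, use Lemma \ref{omega-estimate} to replace $\hnorm{\omega}{s+1}$ by the quadratic expression in $(\k,\E,A)$, use Lemma \ref{gamma-evol} together with $\hnorm{\gamma-\gamma_0}{s}\leq C$ to handle the $\partial_T\Gamma[\gamma]$ and $h^{\TT\parallel}$ contributions, and finally bound $\hnorm{F}{s-1}\leq C\hnorm{A}{s}(1+\hnorm{A}{s})$ to express everything in terms of $(\k,u,\E,A)$. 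Because all variables lie in $B_\delta(0)$, cubic and higher terms get absorbed into $C$, and the $e^{-2T}F\E$ term contributes the $e^{-2T}(\hnorm{\E}{s-1}^2+\hnorm{A}{s}^2)$ piece while its composition with the $X$-estimate (which itself contains an $e^{-2T}$ term) produces the $e^{-4T}(\cdots)^2$ piece — reproducing exactly the claimed bound, which is indeed identical in form to the bound on $X$ alone in Lemma \ref{X-estimate}.

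The main obstacle I anticipate is \emph{bookkeeping the coupling between the $X$-equation and the $Y^\parallel$-constraint}: one must verify that adding $Y^\parallel$ does not introduce a genuinely new, worse source term. The constraint \eqref{Y-constraint} couples $Y^\parallel$ to $h^{\TT\parallel}$ through $(DV)_j(L_{Y^\parallel}\gamma)^{ij}$, so one needs $h^{\TT\parallel}$ — the transverse-traceless part of $\partial_T\gamma$ — to be controlled, and the only handle on $\partial_T\gamma$ is the finite-dimensional bound from Lemma \ref{gamma-evol} (using that $\partial_T\gamma$ lives in a finite-dimensional space so all norms are equivalent). Tracking that $(DV)$ is itself an isomorphism on the relevant (finite-dimensional, or complementary) subspace — again a fact from \cite{Andersson2009} — is what closes the loop. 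Once that structural point is granted, the estimate is a routine, if lengthy, application of the same machinery used in Lemmas \ref{omega-estimate}–\ref{phi-estimate}, and I would not reproduce every term explicitly.
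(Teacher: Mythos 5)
Your overall strategy (reduce to a sign-definite elliptic operator, invoke the isomorphism from \cite{Andersson2009}, then feed in Lemmas \ref{gamma-evol}, \ref{omega-estimate} and \ref{X-estimate}) is the same family of argument the paper uses, but there is a genuine gap at the decisive step: your identification of the source terms. You assert that the source for $X+Y^\parallel$ consists of ``exactly the terms appearing on the right-hand side of \eqref{X-constraint}'' plus $h^{\TT\parallel}$-contributions, and you conclude that the resulting bound is ``identical in form to the bound on $X$ alone in Lemma \ref{X-estimate}.'' Neither is true, and the second claim contradicts the statement you are proving: Lemma \ref{X-estimate} controls $X$ only \emph{linearly} in $\hnorm{\k}{s-1}$ and $\hnorm{u}{s}$, whereas the present lemma claims a bound \emph{quadratic} in these norms. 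That gain of one order of smallness is the entire point of introducing $Y^\parallel$ (it is what later makes the $\mathcal{L}_{X+Y^\parallel}\gamma$ term in $\mathcal{F}^{(1)}$ harmless), and it cannot be obtained if the source really were the right-hand side of \eqref{X-constraint}: the term $\tfrac{2N}{n}g^{jk}\big(\Gamma[g]^i{}_{jk}-\Gamma[\gamma]^i{}_{jk}\big)$ is genuinely linear in $u$, and $g^{jk}\partial_T\Gamma[\gamma]^i{}_{jk}$ is, via Lemma \ref{gamma-evol}, linear in $\hnorm{\k}{s-1}$ and $\hnorm{X}{s+1}$, so your scheme would only reproduce the weaker linear estimate of Lemma \ref{X-estimate}.

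What actually closes the argument — and what the paper quotes from Section 2.2 and Lemma 6.1 of \cite{Andersson2009} — is a structural cancellation in the identity for the combination $X+Y^\parallel$: applying the \emph{background} operator $\mathcal{P}_{\gamma,\gamma}$ (not $\mathcal{P}_{g,\gamma}$, as in your display) to $X^i+Y^{i\parallel}$, the problematic linear pieces drop out, leaving only
\begin{equation*}
-2\k^{ij}\nabla_jN+\Big(1-\tfrac{2}{n}\Big)\nabla^iN-2\tau_0^2e^{-2T}Ng^{ik}g^{\ell j}F^a{}_{bk\ell}\E^b{}_{aj}+2N\k^{jk}\big(\Gamma[g]^i{}_{jk}-\Gamma[\gamma]^i{}_{jk}\big)
\end{equation*}
together with terms carrying an explicit factor $(g^{mn}-\gamma^{mn})$, namely $(g^{mn}-\gamma^{mn})\partial_T\Gamma[\gamma]^i{}_{mn}$, $(g^{mn}-\gamma^{mn})\nabla[\gamma]_m\nabla[\gamma]_nX^i$ and $(g^{mn}-\gamma^{mn})R[\gamma]^i{}_{mjn}X^j$. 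Every one of these is at least quadratic once Lemmas \ref{gamma-evol}, \ref{omega-estimate} and \ref{X-estimate} are inserted (note $\nabla^iN=\nabla^i\omega$ is already quadratic by Lemma \ref{omega-estimate}), which is what produces the claimed $\hnorm{\k}{s-1}^2+\hnorm{u}{s}^2$ structure and, through the $e^{-2T}F\E$ and $\omega$-, $X$-substitutions, the $e^{-2T}$ and $e^{-4T}$ pieces. You gesture at the relevant constraint \eqref{Y-constraint} and at $(DV)$ being invertible, but without exhibiting (or at least citing) this cancellation — i.e., that $Y^\parallel$ absorbs precisely the $\tfrac{2N}{n}g^{jk}(\Gamma[g]-\Gamma[\gamma])$ and the leading part of the $\partial_T\Gamma[\gamma]$ sources — your proof does not reach the quadratic bound stated in the lemma.
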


\begin{proof}
Define the second-order elliptic operator $\mathcal{P}_{g,\gamma}:X^i\mapsto\Delta_\gamma^gX^i-g^{k\ell}R[\gamma]^i{}_{kj\ell}X^j$. Similarly, let $\mathcal{P}_{\gamma,\gamma}$ be given by
\begin{equation*}
\mathcal{P}_{\gamma,\gamma}:X^i\mapsto-\gamma^{k\ell}\nabla[\gamma]_k\nabla[\gamma]_\ell X^i-\gamma^{k\ell}R[\gamma]^i{}_{kj\ell}X^j.
\end{equation*}

\noindent As discussed in Section 2.2 and Lemma 6.1 of \cite{Andersson2009}, we have that $\mathcal{P}_{\gamma,\gamma}$ is an isomorphism of suitable function spaces, and when applied to $X+Y^\parallel$ gives
\begin{align*}
\mathcal{P}_{\gamma,\gamma}\big(X^i+{Y^i}^\parallel\big)&=-2\k^{ij}\nabla_jN+\bigg(1-\frac{2}{n}\bigg)\nabla^iN-2\tau_0^2e^{-2T}Ng^{ik}g^{\ell j}F^a{}_{bk\ell}\E^b{}_{aj}+2N\k^{jk}\big(\Gamma[g]^i{}_{jk}-\Gamma[\gamma]^i{}_{jk}\big) \\
&\quad+(g^{mn}-\gamma^{mn})\partial_T\Gamma[\gamma]^i{}_{mn}+(g^{mn}-\gamma^{mn})\nabla[\gamma]_m\nabla[\gamma]_nX^i+(g^{mn}-\gamma^{mn})R[\gamma]^i{}_{mjn}X^j.
\end{align*}

\noindent Using elliptic regularity and the product estimates for $s>\frac{n}{2}+1$, we have that $X+Y^\parallel$ is bounded by
\begin{align*}
\hnorm{X+Y^\parallel}{s+1}&\leq C\hnorm{\mathcal{P}_{\gamma,\gamma}(X+Y^\parallel)}{s-1} \\
&\leq C\big\{\hnorm{\omega}{s}(1+\hnorm{\k}{s-1})+C\hnorm{u}{s}\big\{\norm{\partial_T\gamma}+\hnorm{X}{s+1}+\hnorm{X}{s-1}\big\} \\
&\quad+(1+\hnorm{\omega}{s-1})(e^{-2T}\hnorm{F}{s-1}\hnorm{\E}{s-1}+\hnorm{\k}{s-1}\hnorm{u}{s})\big\} \\
&\leq C\big\{\hnorm{\omega}{s+1}(1+\hnorm{\k}{s-1})+C\hnorm{u}{s}(1+\hnorm{u}{s})\big\{\hnorm{\omega}{s+1}+\hnorm{X}{s+1})\big\} \\
&\quad+(1+\hnorm{\omega}{s+1})(e^{-2T}\hnorm{A}{s}\hnorm{\E}{s-1}(1+\hnorm{A}{s})+\hnorm{\k}{s-1}\hnorm{u}{s})\big\}
\end{align*}

\noindent Applying Lemmas \ref{omega-estimate} and \ref{X-estimate} then yields the desired estimate for $X+Y^\parallel$.
\end{proof}

\section{Energy Estimates}\label{energies}
With the elliptic estimates from the previous section, we may now present an energy estimate argument to show that the perturbed data will decay. However we first need to examine the operators $\mathcal{L}_{\gamma_0,\gamma_0}$ and $\mathfrak{L}_{\gamma_0,\gamma_0}$, as they will play a key role in our analysis. Recall that $\mathcal{L}_{\gamma_0,\gamma_0}$ is assumed to have non-negative spectrum and $\mathfrak{L}_{\gamma_0,\gamma_0}$ will always have non-negative spectra. Let $\lambda_\mathcal{L}$ and $\lambda_\mathfrak{L}$ denote the smallest non-zero eigenvalues of these two operators, respectively. We will then define
\begin{align}
\lambda_E&=\lambda_\mathcal{L}-\epsilon, \\
\lambda_Y&=\lambda_\mathfrak{L}-\epsilon,
\end{align}

\noindent for some $\epsilon>0$ such that $\lambda_E$ and $\lambda_Y$ are both strictly positive. As the spectra of $\mathcal{L}_{\gamma,\gamma}$ and $\mathfrak{L}_{\gamma,\gamma}$ will evolve as $g$ evolves, and hence the shadow metric $\gamma$, evolves and the smallest non-zero eigenvalue plays a critical role in obtaining the decay of our energies, we must choose $\lambda_E$ and $\lambda_Y$ to be smaller than $\lambda_\mathcal{L}$ and $\lambda_\mathfrak{L}$, respectively, such that the decay we obtain is uniform. Due to the small data scenario we are working in, if we choose a sufficiently large $\epsilon$ and sufficiently small neighborhood $B_\delta(0)$ to work in, then we can guarantee that no $\mathcal{L}_{g,\gamma}$ or $\mathfrak{L}_{g,\gamma}$ has eigenvalues below $\lambda_E$ and $\lambda_Y$, respectively.

We now may define the first-order energy $E^{(1)}=E^{(1)}_\mathrm{Ein}+E^{(1)}_\mathrm{YM}+c_E\Gamma^{(1)}_\mathrm{Ein}-c_Y\Gamma^{(1)}_\mathrm{YM}$, where
\begin{align}
E^{(1)}_\mathrm{Ein}&=\langle\k,\k\rangle+\frac{1}{4}\langle u,\mathcal{L}_{g,\gamma}u\rangle, \\
E_\mathrm{YM}^{(1)}&=\langle\E,\E\rangle+\langle A-\widehat{A},\mathfrak{L}_{g,\gamma}(A-\widehat{A})\rangle, \\
\Gamma^{(1)}_\mathrm{Ein}&=\frac{1}{n}\langle\k, u\rangle, \\
\Gamma^{(1)}_\mathrm{YM}&=\frac{1}{n}\langle\E, A-\hat{A}\rangle,
\end{align}

\noindent and $c_E,c_Y>0$ are constants given by
\begin{align}
c_E&=\min\bigg\{\frac{2n^2\lambda_E}{(n-1)},\;\frac{n-1}{2}\bigg\} \\
c_Y&=\min\bigg\{\frac{\sqrt{\lambda_Y}}{2},\;\frac{n-3}{2}\bigg\}.
\end{align}

\noindent We will see later that the correction terms $\Gamma^{(1)}_\mathrm{Ein}$ and $\Gamma^{(1)}_\mathrm{YM}$ will be vital in obtaining a decay of the energy. We will also see that these corrections do not impact the positive-definiteness of $E^{(1)}$ due to the choice of $c_E$ and $c_Y$, which will be bounded from below by the non-negative Sobolev norms of $(\k,u,\E,A)$. As such, the decay of the energy will imply, as desired, that the Sobolev norms of $(\k,u,\E,A)$ will vanish.

From the first-order energy, we may construct higher-order energies. To do so, we make repeated application of the operators $\mathcal{L}_{g,\gamma}^i$ and $\mathfrak{L}_{g,\gamma}^i$ to the small values $(\k,u,\E,A)$ in the first-order energy. That is, the higher-order energy terms will be given by
\begin{align*}
E^{(i)}&=E^{(i)}_\mathrm{Ein}+E^{(i)}_\mathrm{YM}+c_E\Gamma^{(i)}_\mathrm{Ein}-c_Y\Gamma^{(i)}_\mathrm{YM}, \\
E^{(i)}_\mathrm{Ein}&=\big\langle\k,\mathcal{L}_{g,\gamma}^{i-1}\k\big\rangle+\frac{1}{4}\big\langle u,\mathcal{L}_{g,\gamma}^iu\big\rangle, \\
E^{(i)}_\mathrm{YM}&=\big\langle\E,\mathfrak{L}_{g,\gamma}^{i-1}\E\big\rangle+\big\langle(A-\widehat{A}),\mathfrak{L}_{g,\gamma}^i(A-\widehat{A})\big\rangle, \\
\Gamma^{(i)}_\mathrm{Ein}&=\frac{1}{n}\big\langle\k,\mathcal{L}_{g,\gamma}^{i-1}u\big\rangle, \\
\Gamma^{(i)}_\mathrm{YM}&=\frac{1}{n}\big\langle\E,\mathfrak{L}_{g,\gamma}^{i-1}(A-\widehat{A})\big\rangle.
\end{align*}

\noindent With $s>\frac{n}{2}+1$, the total energy will then be given by
\begin{equation}\label{total-energy}
E_s=\sum_{i=1}^sE^{(i)}.
\end{equation}

\noindent However we will not need to study the total energy in its entirety. Rather, due to the embeddings $H^{s_1}\hookrightarrow H^{s_2}$ for $s_1>s_2$ and $H^s\hookrightarrow L^2$ for all $s$, we have that the $L^2$ norms of the derivatives $(\nabla[\gamma]^I\k,\nabla[\gamma]^Iu,\nabla[\gamma]^I\E,\nabla[\gamma]^IA)$ are controlled by the same Sobolev norms as $(\k,u,\E,A)$. Indeed, higher-order energies will only additional contribute terms at higher-order in the small data due to the need to commute derivatives and apply Proposition \ref{important-ineqs}.2. With this in mind, to study the leading order behavior of $E_s$, we see that it will suffice to estimate $E^{(1)}$ by the appropriate small data norms.

We will denote $E_\mathrm{Ein}=\sum_{i=1}^sE^{(i)}_\mathrm{Ein}$ and $E_\mathrm{YM}=\sum_{i=1}^sE^{(i)}_\mathrm{YM}$, and have, by the immediately preceding discussion, the equivalence
\begin{align*}
E_\mathrm{Ein}&\approx\sum_{i=1}^s\lnorm{\nabla[\gamma]^{i-1}\k}{2}^2+\lnorm{\nabla[\gamma]^iu}{2}^2\approx\hnorm{\k}{s-1}^2+\hnorm{u}{s}^2, \\
E_\mathrm{YM}&\approx\sum_{i=1}^s\lnorm{\nabla[\gamma]^{i-1}\E}{2}^2+\lnorm{\nabla[\gamma]^iA}{2}^2\approx\hnorm{\E}{s-1}^2+\hnorm{A}{s}^2.
\end{align*}

\noindent where $f\approx g$ if there is a constant $C$ such that $C^{-1}f\leq g\leq Cf$.

To begin our analysis, we will estimate the time evolution of the energy and show that, to leading order, the energy must decay. It will be convenient to study the gravitational and Yang-Mills separately, so we compute that the time derivative of $E^{(1)}_\mathrm{Ein}$ is given as
{\allowdisplaybreaks[1]
\begin{align*}
\partial_TE^{(1)}_\mathrm{Ein}&=\underbrace{2\int_\Sigma\gamma^{im}\gamma^{jn}\partial_T\k_{ij}\k_{mn}\:\mu_g+\frac{1}{2}\int_\Sigma\gamma^{im}\gamma^{jn}\partial_Tu_{ij}\mathcal{L}_{g,\gamma}u_{mn}\:\mu_g}_{\text{Type I}_\mathrm{Ein}} \\
&+\underbrace{\frac{1}{4}\int_\Sigma\gamma^{im}\gamma^{jn}u_{ij}[\partial_T,\mathcal{L}_{g,\gamma}]u_{mn}\:\mu_g}_{\text{Type II}_\mathrm{Ein}} \\
&+\underbrace{2\int_\Sigma\partial_T\gamma^{im}\gamma^{jn}\k_{ij}\k_{mn}\:\mu_g+\frac{1}{2}\int_\Sigma\partial_T\gamma^{im}\gamma^{jn}u_{ij}\mathcal{L}_{g,\gamma}u_{mn}\:\mu_g}_{\text{Type III}_\mathrm{Ein}} \\
&+\underbrace{\frac{1}{2}\int_\Sigma\gamma^{im}\gamma^{jn}\k_{ij}\k_{mn}\tr(\partial_Tg)\:\mu_g+\frac{1}{8}\int_\Sigma\gamma^{im}\gamma^{jn}u_{ij}\mathcal{L}_{g,\gamma}u_{mn}\tr(\partial_Tg)\:\mu_g}_{\text{Type IV}_\mathrm{Ein}},
\end{align*}
}

\noindent while the time derivative of $E^{(1)}_\mathrm{YM}$ is given by
\begin{align*}
\partial_TE^{(1)}_\mathrm{YM}&=\underbrace{2\int_\Sigma\gamma^{ij}\partial_T\E^a{}_{bi}\E^b{}_{aj}\:\mu_g+2\int_\Sigma\gamma^{ij}\partial_T(A-\widehat{A})^a{}_{bi}\mathfrak{L}_{g,\gamma}(A-\widehat{A})^b{}_{aj}\:\mu_g}_{\text{Type I}_\mathrm{YM}} \\
&+\underbrace{\int_\Sigma\gamma^{ij}(A-\widehat{A})^a{}_{bi}[\partial_T,\mathfrak{L}_{g,\gamma}](A-\widehat{A})^b{}_{aj}\:\mu_g}_{\text{Type II}_\mathrm{YM}} \\
&+\underbrace{\int_\Sigma\partial_T\gamma^{ij}\E^a{}_{bi}\E^b{}_{aj}\:\mu_g+\int_\Sigma\partial_T\gamma^{ij}(A-\widehat{A})^a{}_{bi}\mathfrak{L}_{g,\gamma}(A-\widehat{A})^b{}_{aj}\:\mu_g}_{\text{Type III}_\mathrm{YM}} \\
&+\underbrace{\frac{1}{2}\int_\Sigma\gamma^{ij}\E^a{}_{bi}\E^b{}_{aj}\tr(\partial_Tg)\:\mu_g+\frac{1}{2}\int_\Sigma\gamma^{ij}(A-\widehat{A})^a{}_{bi}\mathfrak{L}_{g,\gamma}(A-\widehat{A})^b{}_{aj}\tr(\partial_Tg)\:\mu_g}_{\text{Type IV}_\mathrm{YM}}.
\end{align*}

\noindent Meanwhile, the time evolution of the correction terms $\Gamma^{(1)}_\mathrm{Ein}$ and $\Gamma^{(1)}_\mathrm{YM}$ will be given by
\begin{align*}
\partial_T\Gamma^{(1)}_\mathrm{Ein}&=\frac{1}{n}\int_\Sigma\gamma^{im}\gamma^{jn}\partial_T\k_{ij}u_{mn}\:\mu_g+\frac{1}{n}\int_\Sigma\gamma^{im}\gamma^{jn}\k_{ij}\partial_Tu_{mn}\:\mu_g \\
&\quad+\frac{2}{n}\int_\Sigma\partial_T\gamma^{im}\gamma^{jn}\k_{ij}u_{mn}\:\mu_g+\frac{1}{2n}\int_\Sigma\gamma^{im}\gamma^{jn}\k_{ij}u_{mn}\tr(\partial_Tg)\:\mu_g \\
\partial_T\Gamma^{(1)}_\mathrm{YM}&=\frac{1}{n}\int_\Sigma\gamma^{ij}\partial_T\E^a{}_{bi}A^b{}_{aj}\:\mu_g+\frac{1}{n}\int_\Sigma\gamma^{ij}\E^a{}_{bi}\partial_TA^b{}_{aj}\:\mu_g \\
&\quad+\frac{1}{n}\int_\Sigma\partial_T\gamma^{ij}\E^a{}_{bi}A^b{}_{aj}\:\mu_g+\frac{1}{2}\int_\Sigma\gamma^{ij}\E^a{}_{bi}A^b{}_{aj}\tr(\partial_Tg)\:\mu_g.
\end{align*}

\noindent The time derivatives of the small data $(\k,u,\E,A)$ are given by the evolution equations for $(\k,g,\E,A)$. In particular, with the time derivative of the background metric $\gamma$---an element of the vector space $T_\gamma\mathcal{N}$---able to be decomposed as in \eqref{N-tan-space}, we find that the time evolution of $u$ is given by
\begin{align}
\partial_Tu_{ij}&=2(\omega+n)\k_{ij}-h^{\TT\parallel}-X^m\nabla[\gamma]_mu_{ij}+\mathcal{F}^{(1)}_{ij} \\
\mathcal{F}^{(1)}_{ij}&=\frac{2}{n}\omega u_{ij}-u_{im}\nabla[\gamma]_jX^m-u_{jm}\nabla[\gamma]_iX^m+\frac{2}{n}\omega\gamma_{ij}-L_{X+Y^\parallel}\gamma_{ij},
\end{align}

\noindent where the principle terms that may be problematic for regularity and the leading-order decay terms have been separated from the remaining terms $\mathcal{F}^{(1)}_{ij}$.

It will also be convenient to separate out such terms in the evolution equations for $\k$, $\E$, and $A$. Doing so, we write the evolution equation for $\k$ as
{\allowdisplaybreaks[1]
\begin{align}
\partial_T\k_{ij}&=-(n-1)\k_{ij}-\frac{1}{2}(\omega+n)\mathcal{L}_{g,\gamma}u_{ij}-X^m\nabla[\gamma]_m\k_{ij}+\mathcal{F}^{(2)}_{ij} \\
\mathcal{F}^{(2)}_{ij}&=-\frac{n-2}{n}\omega\k_{ij}-\k_{im}\nabla[\gamma]_jX^m-\k_{jm}\nabla[\gamma]_iX^m \notag \\
&\qquad+\nabla[\gamma]_i\nabla[\gamma]_j\omega-\frac{1}{2}g^{mn}\big(\nabla[\gamma]_iu_{jn}+\nabla[\gamma]_ju_{in}-\nabla[\gamma]_nu_{ij}\big)\nabla[\gamma]_m\omega \notag \\
&\qquad+\tau_0^2e^{-2T}(\omega+n)\bigg\{\E^a{}_{bi}\E^b{}_{aj}-F^a{}_{bik}F^b{}_{aj}{}^k+\bigg(\frac{1}{2(n-1)}|F|^2-\frac{1}{n-1}|\E|^2\bigg)(u_{ij}+\gamma_{ij})\bigg\} \notag \\
&\qquad-\frac{\omega}{n^2}(u_{ij}+\gamma_{ij})+(\omega+n)\big\{2\k_{ik}\k_j{}^k-J_{ij}\big\} \\
J_{ij}&=\frac{1}{2}(u_{i\ell}u^{mn}R[\gamma]^\ell{}_{mjn}+u_{j\ell}u^{mn}R[\gamma]^\ell{}_{min}) \notag \\
&\qquad+\frac{1}{2}g^{mn}\gamma^{k\ell}\bigg\{\nabla[\gamma]_ju_{nk}\nabla[\gamma]_\ell u_{im}+\nabla[\gamma]_iu_{\ell m}\nabla[\gamma]_ku_{jn}-\frac{1}{2}\nabla[\gamma]_ju_{nk}\nabla[\gamma]_iu_{\ell m} \notag \\
&\qquad+\nabla[\gamma]_mu_{i\ell}\nabla[\gamma]_nu_{jk}-\nabla[\gamma]_mu_{i\ell}\nabla[\gamma]_ku_{jn}\bigg\}.
\end{align}
}

\noindent For $A$, we have
\begin{align}
\partial_TA^a{}_{bi}&=-(\omega+n)\E^a{}_{bi}-X^j\nabla[\gamma]_jA^a{}_{bi}+\mathcal{F}^{(3)}{}^a{}_{bi} \\
\mathcal{F}^{(3)}{}^a{}_{bi}&=-\nabla[\gamma]_i\phi^a{}_b-[A_i,\phi]^a{}_b-A^a{}_{bj}\nabla[\gamma]_iX^j.
\end{align}

\noindent Finally, for $\E$ we write the evolution equation in the form
\begin{align}
\partial_T\E^a{}_{bi}&=-(n-3)\E^a{}_{bi}+(\omega+n)\mathfrak{L}_{g,\gamma}A^a{}_{bi}-X^k\nabla[\gamma]_k\E^a{}_{bi}+\mathcal{F}^{(4)}{}^a{}_{bi} \\
\mathcal{F}^{(4)}{}^a{}_{bi}&=-\frac{n-2}{n}\omega\E^a{}_{bi}+g^{jk}\nabla[\gamma]_j\omega\big(\nabla[\gamma]_iA^a{}_{bk}-\nabla[\gamma]_kA^a{}_{bi}+[A_i,A_k]^a{}_b\big) \notag \\
&\qquad+\E^a{}_{bj}\big(2(\omega+n)\k_i{}^j-\nabla[\gamma]_iX^j\big)-[\E_i,\phi]^a{}_b+(\omega+n)S^a{}_{bi} \\
S^a{}_{bi}&=g_{im}g^{j\ell}\nabla[\gamma]_jg^{mk}\big(\nabla[\gamma]_kA^a{}_{b\ell}-\nabla[\gamma]_\ell A^a{}_{bk}+[A_k,A_\ell]^a{}_b\big) \notag \\
&\qquad+\nabla[\gamma]_jg^{j\ell}\big(\nabla[\gamma]_iA^a{}_{b\ell}-\nabla[\gamma]_\ell A^a{}_{bi}+[A_i,A_\ell]^a{}_b\big) \notag \\
&\qquad+g^{j\ell}\big(2[\nabla[\gamma]_jA_i,A_\ell]^a{}_b+[A_i,\nabla[\gamma]_jA_\ell]^a{}_b-[\nabla[\gamma]_iA_j,A_\ell]^a{}_b\big) \notag \\
&\qquad-g^{j\ell}\big[[A_i,A_j],A_\ell\big]^a{}_b+\nabla[\gamma]_i\big(g^{j\ell}\nabla[\gamma]_j\widehat{A}^a{}_{b\ell}-g^{j\ell}[\widehat{A}_j,A_\ell-\widehat{A}_\ell]^a{}_b\big).
\end{align}

\subsection{Yang-Mills Sector}\label{YM-sector-estimates}
\subsubsection{Time Evolution of Energy}\label{YM-sector-time-evol}
Recall we fix $\widehat{A}\equiv0$. With this condition imposed, we begin by estimating the Type $\mathrm{I}_\mathrm{YM}$ terms. Explicitly, we find that these terms are given by
\begin{align*}
\mathrm{I}_\mathrm{YM}&=2\langle\partial_T\E,\E\rangle+2\langle\partial_TA,\mathfrak{L}_{g,\gamma}A\rangle \\
&=-2(n-3)\langle\E,\E\rangle+2\langle(\omega+n)\mathfrak{L}_{g,\gamma}A,\E\rangle-2\langle X^k\nabla[\gamma]_k\E,\E\rangle+2\langle\mathcal{F}^{(4)},\E\rangle \\
&\quad-2\langle(\omega+n)\E,\mathfrak{L}_{g,\gamma}A\rangle-2\langle X^k\nabla[\gamma]_kA,\mathfrak{L}_{g,\gamma}A\rangle+2\langle\mathcal{F}^{(3)},\mathfrak{L}_{g,\gamma}A\rangle \\
&=-2(n-3)\langle\E,\E\rangle-2\langle X^k\nabla[\gamma]_k\E,\E\rangle+2\langle\mathcal{F}^{(4)},\E\rangle \\
&\quad-2\langle X^k\nabla[\gamma]_kA,\mathfrak{L}_{g,\gamma}A\rangle+2\langle\mathcal{F}^{(3)},\mathfrak{L}_{g,\gamma}A\rangle.
\end{align*}

\noindent We see that from the choice of operator $\mathfrak{L}_{g,\gamma}$ to act on $A$, the principle terms cancel point-wise. We will now also show that only $\langle\E,\E\rangle$ gives a leading order decay.

First we must handle the terms containing derivatives of $\E$ and $A$ in the direction of the shift $X$. Because the total energy is constructed by repeatedly applying the second-order differential operator $\mathfrak{L}_{g,\gamma}$ to the small data, additional derivatives of $\E$ and $A$ will be problematic as they cannot be controlled by the appropriate Sobolev norm for the highest-order energies. As such, we must show that the terms containing additional derivatives of $\E$ and $A$ can indeed be controlled by appropriate norms, a fact which will hold true due to the higher regularity of $X$. In particular, using integration by parts, we may explicitly compute $\langle X^k\nabla[\gamma]_k\E,\E\rangle$ as
\begin{align*}
\langle X^k\nabla[\gamma]_k\E,\E\rangle&=\int_\Sigma\gamma^{ij}X^k\nabla[\gamma]_k\E^a{}_{bi}\E^b{}_{aj}\:\mu_g \\
\langle X^k\nabla[\gamma]_k\E,\E\rangle&=-\int_\Sigma\gamma^{ij}X^k\E^a{}_{bi}\nabla[\gamma]_k\E^b{}_{aj}\:\mu_g \\
&\quad-\int_\Sigma\gamma^{ij}\big(\nabla[\gamma]_kX^k+X^k(\Gamma[g]^\ell{}_{k\ell}-\Gamma[\gamma]^\ell{}_{k\ell})\big)\E^a{}_{bi}\E^b{}_{aj}\:\mu_g \\
2\langle X^k\nabla[\gamma]_k\E,\E\rangle&=-\int_\Sigma\gamma^{ij}\big(\nabla[\gamma]_kX^k+X^k(\Gamma[g]^\ell{}_{k\ell}-\Gamma[\gamma]^\ell{}_{k\ell})\big)\E^a{}_{bi}\E^b{}_{aj}\:\mu_g \\
\langle X^k\nabla[\gamma]_k\E,\E\rangle&=-\frac{1}{2}\int_\Sigma\gamma^{ij}\big(\nabla[\gamma]_kX^k+X^k(\Gamma[g]^\ell{}_{k\ell}-\Gamma[\gamma]^\ell{}_{k\ell})\big)\E^a{}_{bi}\E^b{}_{aj}\:\mu_g.
\end{align*}

\noindent Here we also use that the covariant derivative with respect to the background metric $\gamma$ of the measure $\mu_g$ is given by $(\Gamma[g]-\Gamma[\gamma])^\ell{}_{k\ell}\mu_g$. We may then estimate the inner product as
\begin{align}
\big|\langle X^k\nabla[\gamma]_k\E,\E\rangle\big|&\leq C\big(\lnorm{\nabla[\gamma]X}{\infty}+\lnorm{X}{\infty}\lnorm{\Gamma[g]-\Gamma[\gamma]}{\infty}\big)\lnorm{\E}{2}^2 \notag \\
&\leq C\big(\hnorm{\nabla[\gamma]X}{s-1}+\hnorm{X}{s-1}\hnorm{\Gamma[g]-\Gamma[\gamma]}{s-1}\big)\hnorm{\E}{s-1}^2, \notag \\
&\leq C\big(\hnorm{X}{s+1}+\hnorm{X}{s+1}\hnorm{u}{s}\big)\hnorm{\E}{s-1}^2
\end{align}

\noindent where we use the embeddings $H^s\hookrightarrow H^{s-1}\hookrightarrow L^\infty$ for $s-1>\frac{n}{2}$, and that to leading order $\hnorm{\Gamma[g]-\Gamma[\gamma]}{s-1}=\hnorm{u}{s}$. Due to the higher regularity of $X$, we are thus still able to control the inner product.

We may perform a similar computation for the term containing a derivative of $A$ in the direction of $X$, giving
\begin{align*}
\langle X^k\nabla&[\gamma]_kA,\mathfrak{L}_{g,\gamma}A\rangle \\
&=\int_\Sigma\gamma^{ij}X^k\nabla[\gamma]_kA^a{}_{bi}\mathfrak{L}_{g,\gamma}A^b{}_{aj}\:\mu_g \\
&=-\int_\Sigma\gamma^{ij}X^k\nabla[\gamma]_kA^a{}_{bi}\big(\nabla[\gamma]_m(g^{mn}\mu_g\nabla[\gamma]_nA^b{}_{aj})+g^{\ell m}R[\gamma]^n{}_{m\ell j}A^b{}_{an}\:\mu_g\big) \\
&=\int_\Sigma\gamma^{ij}(\nabla[\gamma]_mX^k\nabla[\gamma]_kA^a{}_{bi}+X^k\nabla[\gamma]_m\nabla[\gamma]_kA^a{}_{bi})g^{mn}\nabla[\gamma]_nA^b{}_{aj}\:\mu_g \\
&\quad-\int_\Sigma\gamma^{ij}X^k\nabla[\gamma]_kA^a{}_{bi}g^{\ell m}R[\gamma]^n{}_{m\ell j}A^b{}_{an}\:\mu_g \\
&=\int_\Sigma\gamma^{ij}g^{mn}X^k\nabla[\gamma]_k\nabla[\gamma]_mA^a{}_{bi}\nabla[\gamma]_nA^b{}_{aj}\:\mu_g+\int_\Sigma\gamma^{ij}g^{mn}X^kR[\gamma]^\ell{}_{imk}A^a{}_{b\ell}\nabla[\gamma]_nA^b{}_{aj}\:\mu_g \\
&\quad+\int_\Sigma\gamma^{ij}g^{mn}\nabla[\gamma]_mX^k\nabla[\gamma]_kA^a{}_{bi}\nabla[\gamma]_nA^b{}_{aj}\:\mu_g-\int_\Sigma\gamma^{ij}X^k\nabla[\gamma]_kA^a{}_{bi}g^{\ell m}R[\gamma]^n{}_{m\ell j}A^b{}_{an}\:\mu_g \\
&=\frac{1}{2}\int_\Sigma\gamma^{ij}g^{mn}X^k\nabla[\gamma]_k(\nabla[\gamma]_mA^a{}_{bi}\nabla[\gamma]_nA^b{}_{aj})\:\mu_g+\int_\Sigma\gamma^{ij}g^{mn}X^kR[\gamma]^\ell{}_{imk}A^a{}_{b\ell}\nabla[\gamma]_nA^b{}_{aj}\:\mu_g \\
&\quad+\int_\Sigma\gamma^{ij}g^{mn}\nabla[\gamma]_mX^k\nabla[\gamma]_kA^a{}_{bi}\nabla[\gamma]_nA^b{}_{aj}\:\mu_g-\int_\Sigma\gamma^{ij}X^k\nabla[\gamma]_kA^a{}_{bi}g^{\ell m}R[\gamma]^n{}_{m\ell j}A^b{}_{an}\:\mu_g \\
&=-\frac{1}{2}\int_\Sigma\gamma^{ij}\big(\nabla[\gamma]_kg^{mn}X^k+g^{mn}\nabla[\gamma]_kX^k+g^{mn}X^k(\Gamma[g]^\ell{}_{k\ell}-\Gamma[\gamma]^\ell{}_{k\ell})\big)\nabla[\gamma]_mA^a{}_{bi}\nabla[\gamma]_nA^b{}_{aj}\:\mu_g \\
&\quad+\int_\Sigma\gamma^{ij}g^{mn}X^kR[\gamma]^\ell{}_{imk}A^a{}_{b\ell}\nabla[\gamma]_nA^b{}_{aj}\:\mu_g \\
&\quad+\int_\Sigma\gamma^{ij}g^{mn}\nabla[\gamma]_mX^k\nabla[\gamma]_kA^a{}_{bi}\nabla[\gamma]_nA^b{}_{aj}\:\mu_g-\int_\Sigma\gamma^{ij}X^k\nabla[\gamma]_kA^a{}_{bi}g^{\ell m}R[\gamma]^n{}_{m\ell j}A^b{}_{an}\:\mu_g.
\end{align*}

\noindent Again, we use integration by parts and Stokes' theorem on a compact manifold to move the extraneous derivative on $A$ to one on $X$. Using the same embeddings as for $\E$, we may estimate $\langle X^k\nabla[\gamma]_kA,\mathfrak{L}_{g,\gamma}A\rangle$ in terms of the appropriate Sobolev norms. In particular, using the estimate for $X$ from Lemma \ref{X-estimate}, we get that the terms containing extra derivatives of $\E$ and $A$ may be controlled by
\begin{align}
\Big|\langle X^k\nabla[\gamma]_k\E,\E\rangle\Big|&\leq C\hnorm{\E}{s-1}^2(\hnorm{\k}{s-1}+\hnorm{u}{s}) \notag \\
&\quad+Ce^{-2T}\hnorm{\E}{s-1}^2(\hnorm{\E}{s-1}^2+\hnorm{A}{s}^2)+Ce^{-4T}\hnorm{\E}{s-1}^2(\hnorm{\E}{s-1}^2+\hnorm{A}{s})^2 \\
\Big|\langle X^k\nabla[\gamma]_kA,\mathfrak{L}_{g,\gamma}A\rangle\Big|&\leq C\hnorm{A}{s}^2(\hnorm{\k}{s-1}+\hnorm{u}{s}) \notag \\
&\quad+Ce^{-2T}\hnorm{A}{s}^2(\hnorm{\E}{s-1}^2+\hnorm{A}{s}^2)+Ce^{-4T}\hnorm{A}{s}^2(\hnorm{\E}{s-1}^2+\hnorm{A}{s}^2)^2
\end{align}

\noindent Meanwhile, using the Sobolev embeddings and the elliptic estimates for $\omega$, $X$, and $\phi$, we can also estimate the terms containing $\mathcal{F}^{(4)}$ and $\mathcal{F}^{(3)}$ by
\begin{align}
\big|\langle\mathcal{F}^{(4)},\E\rangle\big|&\leq C\hnorm{\E}{s-1}^2\big(\hnorm{\k}{s-1}+\hnorm{u}{s}\big)+C\hnorm{\E}{s-1}\hnorm{A}{s}\big(\hnorm{u}{s}+\hnorm{A}{s}\big) \notag \\
&+Ce^{-2T}(\hnorm{\E}{s-1}^2+\hnorm{A}{s}^2)^2+Ce^{-4T}\hnorm{\E}{s-1}^2(\hnorm{\E}{s-1}^2+\hnorm{A}{s}^2)^2 \\
\Big|\langle\mathcal{F}^{(3)},\mathfrak{L}_{g,\gamma}A\rangle\Big|&\leq C\hnorm{A}{s}(\hnorm{\k}{s-1}^2+\hnorm{u}{s}^2+\hnorm{\E}{s-1}^2+\hnorm{A}{s}^2) \notag \\
&+Ce^{-2}(\hnorm{\E}{s-1}^2+\hnorm{A}{s}^2)^2+Ce^{-4T}\hnorm{A}{s}^2(\hnorm{\E}{s-1}^2+\hnorm{A}{s}^2)^2
\end{align}

\noindent Note that each term in $\mathcal{F}^{(3)}$ has regularity at the level of $A$, and so we can integrate by parts to move derivatives from the second-order operator $\mathfrak{L}_{g,\gamma}A$ onto $\mathcal{F}^{(3)}$ and retain the appropriate regularity. Putting all the estimates together, we have that the Type $\mathrm{I}_\mathrm{YM}$ terms may be written as
\begin{equation}
\mathrm{I}_\mathrm{YM}=-2(n-3)\lnorm{\E}{2}^2+\mathcal{R}_\mathrm{YM},
\end{equation}

\noindent where $\mathcal{R}_\mathrm{YM}$ satisfies the estimate
\begin{align}\label{YM-remainder-estimate}
|\mathcal{R}_\mathrm{YM}|&\leq C(\hnorm{\E}{s-1}^2+\hnorm{A}{s}^2)(\hnorm{\k}{s-1}+\hnorm{u}{s}) \notag \\
&\quad+C\hnorm{A}{s}(\hnorm{\k}{s-1}^2+\hnorm{u}{s}^2+\hnorm{\E}{s-1}^2+\hnorm{A}{s}^2) \notag \\
&\quad+Ce^{-2T}(\hnorm{\E}{s-1}^2+\hnorm{A}{s}^2)^2 \notag \\
&\quad+Ce^{-4T}(\hnorm{\E}{s-1}^2+\hnorm{A}{s}^2)^3.
\end{align}

\noindent We will now see that, to leading order, the Type $\mathrm{I}_\mathrm{YM}$ terms will control the decay of $\partial_TE_\mathrm{YM}$. That is, the Type $\mathrm{II}_\mathrm{YM}$, $\mathrm{III}_\mathrm{YM}$, and $\mathrm{IV}_\mathrm{YM}$ terms will satisfy only third-order estimates. To begin, by differentiating under the integral sign and using that $\mathfrak{L}_{g,\gamma}$ is a self-adjoint operator for the given inner product, we may write the Type $\mathrm{II}_\mathrm{YM}$ terms as
\begin{align*}
\mathrm{II}_\mathrm{YM}&=\int_\Sigma\gamma^{ij}A^a{}_{bi}\partial_T\mathfrak{L}_{g,\gamma}A^b{}_{aj}\:\mu_g-\int_\Sigma\gamma^{ij}A^a{}_{bi}\mathfrak{L}_{g,\gamma}\partial_TA^b{}_{aj}\:\mu_g \\
&=\der{}{T}\int_\Sigma\gamma^{ij}A^a{}_{bi}\mathfrak{L}_{g,\gamma}A^b{}_{aj}\:\mu_g-\int_\Sigma\gamma^{ij}\partial_TA^a{}_{bi}\mathfrak{L}_{g,\gamma}A^b{}_{aj}\:\mu_g-\int_\Sigma\gamma^{ij}A^a{}_{bi}\mathfrak{L}_{g,\gamma}\partial_TA^b{}_{aj}\:\mu_g \\
&\quad-\int_\Sigma\partial_T\gamma^{ij}A^a{}_{bi}\mathfrak{L}_{g,\gamma}A^b{}_{aj}\:\mu_g-\frac{1}{2}\int_\Sigma\gamma^{ij}A^a{}_{bi}\mathfrak{L}_{g,\gamma}A^b{}_{aj}\tr(\partial_Tg)\:\mu_g \\
&=\der{}{T}\int_\Sigma\gamma^{ij}A^a{}_{bi}\mathfrak{L}_{g,\gamma}A^b{}_{aj}\:\mu_g-2\int_\Sigma\gamma^{ij}A^a{}_{bi}\mathfrak{L}_{g,\gamma}\partial_TA^b{}_{aj}\:\mu_g \\
&\quad-\int_\Sigma\partial_T\gamma^{ij}A^a{}_{bi}\mathfrak{L}_{g,\gamma}A^b{}_{aj}\:\mu_g-\frac{1}{2}\int_\Sigma\gamma^{ij}A^a{}_{bi}\mathfrak{L}_{g,\gamma}A^b{}_{aj}\tr(\partial_Tg)\:\mu_g.
\end{align*}

\noindent Observe that the final two terms of $\mathrm{II}_\mathrm{YM}$ will cancel point-wise with terms in $\mathrm{III}_\mathrm{YM}$ and $\mathrm{IV}_\mathrm{YM}$. We will thus only look at how to control the first two terms in the above expression. In particular, writing out the definition of $\mathfrak{L}_{g,\gamma}$ and differentiating under the integral sign, we have
\begin{align*}
\der{}{T}\int_\Sigma\gamma^{ij}A^a{}_{bi}\mathfrak{L}_{g,\gamma}A^b{}_{aj}\:\mu_g&=\der{}{T}\int_\Sigma\gamma^{ij}A^a{}_{bi}\big(\Delta^\gamma_gA^b{}_{aj}-g^{k\ell}R[\gamma]^m{}_{\ell jk}A^b{}_{am}\big)\:\mu_g \\
&=\der{}{T}\int_\Sigma\gamma^{ij}g^{k\ell}\big(\nabla[\gamma]_kA^a{}_{bi}\nabla[\gamma]_\ell A^b{}_{aj}-A^a{}_{bi}R[\gamma]^m{}_{\ell jk}A^b{}_{am}\big)\:\mu_g \\
&=\int_\Sigma\partial_T\gamma^{ij}g^{k\ell}\big(\nabla[\gamma]_kA^a{}_{bi}\nabla[\gamma]_\ell A^b{}_{aj}-A^a{}_{bi}R[\gamma]^m{}_{\ell jk}A^b{}_{am}\big)\:\mu_g \\
&\quad+\int_\Sigma\gamma^{ij}\partial_Tg^{k\ell}\big(\nabla[\gamma]_kA^a{}_{bi}\nabla[\gamma]_\ell A^b{}_{aj}-A^a{}_{bi}R[\gamma]^m{}_{\ell jk}A^b{}_{am}\big)\:\mu_g \\
&\quad+\frac{1}{2}\int_\Sigma\gamma^{ij}g^{k\ell}\big(\nabla[\gamma]_kA^a{}_{bi}\nabla[\gamma]_\ell A^b{}_{aj}-A^a{}_{bi}R[\gamma]^m{}_{\ell jk}A^b{}_{am}\big)\tr(\partial_Tg)\:\mu_g \\
&\quad+2\int_\Sigma\gamma^{ij}g^{k\ell}\big[\partial_T,\nabla[\gamma]_k\big]A^a{}_{bi}\nabla[\gamma]_\ell A^b{}_{aj}\:\mu_g+2\int_\Sigma\gamma^{ij}g^{k\ell}\nabla[\gamma]_kA^a{}_{bi}\nabla[\gamma]_\ell\partial_TA^b{}_{aj}\:\mu_g \\
&\quad-2\int_\Sigma\gamma^{ij}g^{k\ell}A^a{}_{bi}R[\gamma]^m{}_{\ell jk}\partial_TA^b{}_{am}\:\mu_g-\int_\Sigma\gamma^{ij}g^{k\ell}A^a{}_{bi}\partial_TR[\gamma]^m{}_{\ell jk}A^b{}_{am}\:\mu_g \\
&=\int_\Sigma\partial_T\gamma^{ij}A^a{}_{bi}\mathfrak{L}_{g,\gamma}A^b{}_{aj}\:\mu_g+\frac{1}{2}\int_\Sigma\gamma^{ij}A^a{}_{bi}\mathfrak{L}_{g,\gamma}A^b{}_{aj}\tr(\partial_Tg)\:\mu_g \\
&\quad+\int_\Sigma\gamma^{ij}\partial_Tg^{k\ell}\big(\nabla[\gamma]_kA^a{}_{bi}\nabla[\gamma]_\ell A^b{}_{aj}-A^a{}_{bi}R[\gamma]^m{}_{\ell jk}A^b{}_{am}\big)\:\mu_g \\
&\quad+2\int_\Sigma\gamma^{ij}g^{k\ell}\big[\partial_T,\nabla[\gamma]_k\big]A^a{}_{bi}\nabla[\gamma]_\ell A^b{}_{aj}\:\mu_g \\
&\quad+2\int_\Sigma\gamma^{ij}A^a{}_{bi}\mathfrak{L}_{g,\gamma}\partial_TA^b{}_{aj}\:\mu_g-\int_\Sigma\gamma^{ij}g^{k\ell}A^a{}_{bi}\partial_TR[\gamma]^m{}_{\ell jk}A^b{}_{am}\:\mu_g.
\end{align*}

\noindent Collecting all of the Type $\mathrm{II}_\mathrm{YM}$, $\mathrm{III}_\mathrm{YM}$, and $\mathrm{IV}_\mathrm{YM}$ terms together, we find that
\begin{align*}
\mathrm{II}_\mathrm{YM}+\mathrm{III}_\mathrm{YM}+\mathrm{IV}_\mathrm{YM}&=\frac{1}{2}\int_\Sigma\big(2\partial_T\gamma^{ij}+\gamma^{ij}\tr(\partial_Tg)\big)\big(\E^a{}_{bi}\E^b{}_{aj}+A^a{}_{bi}\mathfrak{L}_{g,\gamma}A^b{}_{aj}\big)\:\mu_g \\
&\quad+\int_\Sigma\gamma^{ij}\partial_Tg^{k\ell}\nabla[\gamma]_kA^a{}_{bi}\nabla[\gamma]_\ell A^b{}_{aj}\:\mu_g+2\int_\Sigma\gamma^{ij}g^{k\ell}\big[\partial_T,\nabla[\gamma]_k\big]A^a{}_{bi}\nabla[\gamma]_\ell A^b{}_{aj}\:\mu_g \\
&\quad-\int_\Sigma\gamma^{ij}g^{k\ell}A^a{}_{bi}\partial_TR[\gamma]^m{}_{\ell jk}A^b{}_{am}\:\mu_g.
\end{align*}

\noindent Using the embedding $H^{s-1}\hookrightarrow L^\infty$ for $s-1>\frac{n}{2}$, we then have the estimate
\begin{align*}
\big|\mathrm{II}_\mathrm{YM}+\mathrm{III}_\mathrm{YM}+\mathrm{IV}_\mathrm{YM}\big|&\leq C\big\{\hnorm{\E}{s-1}^2+\hnorm{A}{s}^2\big\}\big\{\norm{\partial_T\gamma}+\hnorm{\partial_Tg}{s-1}\big\} \\
&\quad+C\hnorm{A}{s}\hnorm{[\partial_T,\nabla[\gamma]]A}{s-1}+C\hnorm{A}{s}^2\norm{\partial_T\mathrm{Riem}[\gamma]}.
\end{align*}

\noindent Observe that $[\partial_T,\nabla[\gamma]]A=(\partial_T\Gamma[\gamma])A$ because partial derivatives commute, and so we may write
\begin{equation*}
\hnorm{[\partial_T,\nabla[\gamma]]A}{s-1}=\hnorm{(D\Gamma[\gamma].\partial_T\gamma)\cdot A}{s-1}\leq C\norm{\partial_T\gamma}\hnorm{A}{s-1},
\end{equation*}

\noindent where $(D\Gamma[\gamma].\partial_T\gamma)\cdot A$ is the Fr\'echet derivative of $\Gamma[\gamma]$ in the direction of $\partial_T\gamma$, contracted with $A$. We may similarly write $\partial_T\mathrm{Riem}[\gamma]=D\mathrm{Riem}[\gamma].\partial_T\gamma$ to see that $\norm{\partial_T\mathrm{Riem[\gamma]}}\leq C\norm{\partial_T\gamma}$. Using also the evolution equations for $g$ and Lemma \ref{gamma-evol}, we finally obtain the estimate
\begin{align}
\big|\mathrm{II}_\mathrm{YM}+\mathrm{III}_\mathrm{YM}+\mathrm{IV}_\mathrm{YM}\big|&\leq C\big\{\hnorm{\E}{s-1}^2+\hnorm{A}{s}^2\big\} \notag \\
&\quad\times\big\{\hnorm{\k}{s-1}+\hnorm{u}{s}+e^{-2T}(\hnorm{\E}{s-1}^2+\hnorm{A}{s}^2)+e^{-4T}(\hnorm{\E}{s-1}^2+\hnorm{A}{s}^2)^2\big\}
\end{align}

\noindent We see then that the Type $\mathrm{II}_\mathrm{YM}$, $\mathrm{III}_\mathrm{YM}$, and $\mathrm{IV}_\mathrm{YM}$ terms in $\partial_TE_\mathrm{YM}$ will satisfy a third-order estimate. In fact, if we increase the constant in \eqref{YM-remainder-estimate}, we will have that the non-Type $\mathrm{I}_\mathrm{YM}$ terms satisfy the same estimate as the Type $\mathrm{I}_\mathrm{YM}$ remainder $\mathcal{R}_\mathrm{YM}$. We may thus write $\partial_TE_\mathrm{YM}$ as
\begin{equation}\label{EYM-first-order}
\partial_TE^{(1)}_\mathrm{YM}=-2(n-3)\lnorm{\E}{2}^2+\mathcal{R}_\mathrm{YM},
\end{equation}

\noindent with $\mathcal{R}_\mathrm{YM}$ satisfying bound given in \eqref{YM-remainder-estimate} for a re-defined constant $C$. Note that we may write the estimate on $\mathcal{R}_\mathrm{YM}$ in terms of $E_\mathrm{YM}\approx\hnorm{\E}{s-1}^2+\hnorm{A}{s}^2$ and $E_\mathrm{Ein}\approx\hnorm{\k}{s-1}^2+\hnorm{u}{s}^2$, giving us
\begin{equation}\label{YM-R-estimate}
|\mathcal{R}_\mathrm{YM}|\leq C\big\{(E_\mathrm{YM})^\frac{3}{2}+(E_\mathrm{Ein})^\frac{1}{2}E_\mathrm{YM}+E_\mathrm{Ein}(E_\mathrm{YM})^\frac{1}{2}+e^{-2T}(E_\mathrm{YM})^2+e^{-4T}(E_\mathrm{YM})^3\big\}.
\end{equation}

\begin{remark}
The failure of the energy estimate argument for $n=3$ spatial dimensions manifests in \eqref{EYM-first-order}. In particular we see that the leading-order decay term will vanish, which will prevent us from obtaining a uniformly decaying bound on the total energy \eqref{total-energy} and hence on the Sobolev norms of the small data. As mentioned earlier, this is a consequence of the conformal invariance of Yang-Mills equations in $3+1$ dimensions.
\end{remark}

Finally, we must find an appropriate estimate for the time derivative of the correction term $\Gamma^{(1)}_\mathrm{YM}$. Using the embedding of $H^{s-1}$ into $L^\infty$ for $s>\frac{n}{2}+1$, we write $\partial_T\Gamma^{(1)}_\mathrm{YM}$ as
\begin{equation}
\partial_T\Gamma^{(1)}_\mathrm{YM}=-\frac{(n-3)}{n}\langle\E,A\rangle+\langle\mathfrak{L}_{g,\gamma}A,A\rangle-\langle\E,\E\rangle+\mathcal{S}_\mathrm{YM},
\end{equation}

\noindent where $\mathcal{S}_\mathrm{YM}$ satisfies the third-order estimate
\begin{align*}
|\mathcal{S}_\mathrm{YM}|&\leq C\big\{\hnorm{\omega}{s+1}(\hnorm{\E}{s-1}^2+\hnorm{A}{s}^2)+\hnorm{\E}{s-1}\hnorm{A}{s}(\hnorm{X}{s+1}+\hnorm{\partial_Tg}{s-1}) \\
&\qquad\qquad\qquad\qquad\qquad\qquad\qquad\quad+\hnorm{\E}{s-1}\hnorm{\mathcal{F}^{(3)}}{s-1}+\hnorm{A}{s}\hnorm{\mathcal{F}^{(4)}}{s-1}\big\} \\
&\leq C\hnorm{\E}{s-1}\big\{\hnorm{\k}{s-1}^2+\hnorm{u}{s}^2+\hnorm{\E}{s-1}^2+\hnorm{A}{s}^2\big\} \\
&\quad+C\hnorm{\E}{s-1}\hnorm{A}{s}\big\{\hnorm{\k}{s-1}+\hnorm{u}{s}\big\}+C\hnorm{A}{s}^2(\hnorm{u}{s}+\hnorm{A}{s}) \\
&\quad+Ce^{-2T}\big\{\hnorm{\E}{s-1}^2+\hnorm{A}{s}^2\big\}^2+Ce^{-4T}\big\{\hnorm{\E}{s-1}^2+\hnorm{A}{s}^2\big\}^3 \\
&\leq C\big\{(E_\mathrm{YM})^\frac{3}{2}+(E_\mathrm{Ein})^\frac{1}{2}E_\mathrm{YM}+E_\mathrm{Ein}(E_\mathrm{YM})^\frac{1}{2}+e^{-2T}(E_\mathrm{YM})^2+e^{-4T}(E_\mathrm{YM})^3\big\}.
\end{align*}

\noindent Note that $\mathcal{S}_\mathrm{YM}$ satisfies the same estimate as $\mathcal{R}_\mathrm{YM}$, albeit with a different constant $C$. Putting together the two expressions for $\partial_TE^{(1)}_\mathrm{YM}$ and $\partial_T\Gamma^{(1)}_\mathrm{YM}$, we arrive at the total contribution from the Yang-Mills sector to $\partial_TE^{(1)}$,
\begin{equation*}
\partial_T\big(E_\mathrm{YM}^{(1)}-c_Y\Gamma_\mathrm{YM}^{(1)}\big)=-\big(2(n-3)-c_Y\big)\langle\E,\E\rangle-c_Y\langle\mathfrak{L}_{g,\gamma}A,A\rangle+\frac{(n-3)}{n}c_Y\langle\E,A\rangle+\mathcal{R}_\mathrm{YM},
\end{equation*}

\noindent for a redefined $\mathcal{R}_\mathrm{YM}$ that still satisfies the estimate \eqref{YM-R-estimate}. To show that the total energy decays, we will want to show that, to leading order, the time evolution is uniformly bounded as $\partial_TE_s\leq-\alpha E_s$ for some positive constant $\alpha$. We will now see that the Yang-Mills time evolution independently satisfies a bound of this form, using a similar analysis to that in Lemma 6.4 of \cite{Andersson2009}.

To begin, let $Y=\frac{c_Y}{n-3}$, such that $0<Y<1$, and define $\alpha_Y=(n-3)\big[1-\sqrt{1-Y}\big]$. We may then write
\begin{align*}
\partial_T\big(E^{(1)}_\mathrm{YM}-c_Y\Gamma^{(1)}_\mathrm{YM}\big)&=-\alpha_Y\big(E^{(1)}_\mathrm{YM}-c_Y\Gamma^{(1)}_\mathrm{YM}\big)-\big(2(n-3)-c_Y-\alpha_Y\big)\langle\E,\E\rangle \\
&\qquad-(c_Y-\alpha_Y)\langle\mathfrak{L}_{g,\gamma}A,A\rangle+\frac{c_Y}{n}\big((n-3)-\alpha_Y\big)\langle\E,A\rangle+\mathcal{R}_\mathrm{YM},
\end{align*}

\noindent For the time evolution of $E^{(1)}_\mathrm{YM}-c_Y\Gamma^{(1)}_\mathrm{YM}$ to satisfy the desired inequality to leading order, we must have that
\begin{equation}\label{need-neg-YM-energy}
-\big(2(n-3)-c_Y-\alpha_Y\big)\langle\E,\E\rangle-(c_Y-\alpha_Y)\langle\mathfrak{L}_{g,\gamma}A,A\rangle+\frac{c_Y}{n}\big((n-3)-\alpha_Y\big)\langle\E,A\rangle<0.
\end{equation}

\noindent Since $\Sigma$ is compact, $\mathfrak{L}_{g,\gamma}$ will have discrete spectrum. If we perform a spectral decomposition of $\mathfrak{L}_{g,\gamma}$, we then have the requirement that
\begin{align*}
\sum_{k=1}^\infty-\big(2(n-3)-c_Y-\alpha_Y\big)\big\langle\langle\E,e_k\rangle,\langle\E,e_k\rangle\big\rangle&-(c_Y-\alpha_Y)\lambda_k\big\langle\langle A,e_k\rangle,\langle A,e_k\rangle\big\rangle \\
&+\frac{c_Y}{n}\big((n-3)-\alpha_Y\big)\big\langle\langle\E,e_k\rangle e_k,\langle A,e_k\rangle e_k\rangle<0,
\end{align*}

\noindent where $\lambda_k>0$ and $e_k$ are the $k^\text{th}$ eigenvalue and eigentensor of $\mathfrak{L}_{g,\gamma}$. However, we see that each term in the summation is equal to the quadratic form
\begin{equation*}
M_k=\begin{bmatrix}c_Y+\alpha_Y-2(n-3) & \frac{c_Y}{2n}\big((n-3)-\alpha_Y\big) \\ \frac{c_Y}{2n}\big((n-3)-\alpha_Y\big) & -\lambda_k(c_Y-\alpha_Y)\end{bmatrix}=(n-3)\begin{bmatrix}Y-1-\sqrt{1-Y} & -\frac{Y}{2n}\sqrt{1-Y} \\ -\frac{Y}{2n}\sqrt{1-Y} & -\lambda_k(Y-1+\sqrt{1-Y})\end{bmatrix}
\end{equation*}

\noindent acting on $\big((\langle\E,e_k\rangle e_k,\langle A,e_k\rangle e_k),(\langle\E,e_k\rangle e_k,\langle A,e_k\rangle e_k)\big)$. Hence we will have \eqref{need-neg-YM-energy} hold if and only if $M_k$ is negative definite for each $k$. We may compute the trace and determinant of $M_k$ as
\begin{align*}
\operatorname{Tr}M_k&=(Y-1-\sqrt{1-Y})-\lambda_k\sqrt{1-Y}(1-\sqrt{1-Y}), \\
\det M_k&=(1-Y)Y\bigg[\lambda_k-\frac{c_Y}{4n^2(n-3)}\bigg].
\end{align*}

\noindent Since $0<Y<1$, we have that $\operatorname{Tr}M_k<0$. Since $n>3$ and $c_Y<\sqrt{\lambda_Y}\leq\lambda_k$, we have $\det M_k>0$. Hence each $M_k$ is negative definite, and so we get the following lemma.

\begin{lemma}\label{YM-time-evol}
Fix $\widehat{A}=0$ and $s>\frac{n}{2}+1$. Let $B_\delta(0)$ be a ball of sufficiently small radius $\delta$ containing $(\k,u,\E,A)$ that satisfy the Einstein-Yang-Mills system. Then the time evolution of $E^{(1)}_\mathrm{YM}-c_Y\Gamma^{(1)}_\mathrm{YM}$ is given by
\begin{equation}
\partial_T\big(E^{(1)}_\mathrm{YM}-c_Y\Gamma^{(1)}_\mathrm{YM}\big)=-\big(2(n-3)-c_Y\big)\langle\E,\E\rangle-c_Y\langle\mathfrak{L}_{g,\gamma}A,A\rangle+\frac{(n-3)}{n}c_Y\langle\E,A\rangle+\mathcal{R}_\mathrm{YM},
\end{equation}

\noindent where $\mathcal{R}_\mathrm{YM}$ satisfies the third-order estimate
\begin{equation}
|\mathcal{R}_\mathrm{YM}|\leq C\big\{(E_\mathrm{YM})^\frac{3}{2}+(E_\mathrm{Ein})^\frac{1}{2}E_\mathrm{YM}+E_\mathrm{Ein}(E_\mathrm{YM})^\frac{1}{2}+e^{-2T}(E_\mathrm{YM})^2+e^{-4T}(E_\mathrm{YM})^3\big\}
\end{equation}

\noindent for a constant $C$ depending only on the background geometry and $\tau_0$. In particular, the time evolution of the Yang-Mills energies will satisfy the bound
\begin{equation}\label{YM-time-evol-bound}
\partial_T\big(E^{(1)}_\mathrm{YM}-c_Y\Gamma^{(1)}_\mathrm{YM}\big)\leq-\alpha_Y\big(E^{(1)}_\mathrm{YM}-c_Y\Gamma^{(1)}_\mathrm{YM}\big)+\mathcal{R}_\mathrm{YM},
\end{equation}

\noindent where
\begin{equation}\label{YM-alpha-def}
\alpha_Y=(n-3)\bigg(1-\sqrt{1-\frac{c_Y}{n-3}}\bigg).
\end{equation}
\end{lemma}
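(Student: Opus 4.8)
The proof is essentially an assembly of the term-by-term analysis carried out above, followed by a $2\times2$ quadratic-form argument. First I would record the outcome of the Type $\mathrm{I}_{\mathrm{YM}}$ computation: the choice of $\mathfrak{L}_{g,\gamma}$ to pair with $A$ makes the second-order principal contributions in $2\langle\partial_T\E,\E\rangle+2\langle\partial_TA,\mathfrak{L}_{g,\gamma}A\rangle$ cancel pointwise, and the shift-transport terms $\langle X^k\nabla[\gamma]_k\E,\E\rangle$ and $\langle X^k\nabla[\gamma]_kA,\mathfrak{L}_{g,\gamma}A\rangle$ are handled by integration by parts on the closed manifold $\Sigma$, which trades the stray derivative on $\E$ or $A$ for a derivative on $X$---legitimate because $X$ enjoys the extra regularity furnished by Lemma~\ref{X-estimate}---so that $\mathrm{I}_{\mathrm{YM}}=-2(n-3)\lnorm{\E}{2}^2$ plus a third-order remainder. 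Next I would record that the Type $\mathrm{II}_{\mathrm{YM}}$, $\mathrm{III}_{\mathrm{YM}}$, $\mathrm{IV}_{\mathrm{YM}}$ terms, after using the self-adjointness of $\mathfrak{L}_{g,\gamma}$, the identity $[\partial_T,\nabla[\gamma]]=D\Gamma[\gamma].\partial_T\gamma$ and its analogue for $\partial_T\mathrm{Riem}[\gamma]$, the evolution equation for $g$, and Lemma~\ref{gamma-evol}, collapse to a quantity of third order. Finally I would record the companion identity $\partial_T\Gamma^{(1)}_{\mathrm{YM}}=-\tfrac{n-3}{n}\langle\E,A\rangle+\langle\mathfrak{L}_{g,\gamma}A,A\rangle-\langle\E,\E\rangle+\mathcal{S}_{\mathrm{YM}}$ with $\mathcal{S}_{\mathrm{YM}}$ third-order. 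Forming $\partial_T E^{(1)}_{\mathrm{YM}}-c_Y\,\partial_T\Gamma^{(1)}_{\mathrm{YM}}$ and collecting all remainder pieces into $\mathcal{R}_{\mathrm{YM}}$ produces the stated identity.

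For the estimate on $\mathcal{R}_{\mathrm{YM}}$ I would bound each constituent using the elliptic estimates for $\omega$, $X$, $\phi$ from Lemmas~\ref{omega-estimate}, \ref{X-estimate}, \ref{phi-estimate}, the product and commutator inequalities of Proposition~\ref{important-ineqs}, and the norm equivalences $E_{\mathrm{YM}}\approx\hnorm{\E}{s-1}^2+\hnorm{A}{s}^2$, $E_{\mathrm{Ein}}\approx\hnorm{\k}{s-1}^2+\hnorm{u}{s}^2$; every resulting term then fits one of the five shapes $(E_{\mathrm{YM}})^{3/2}$, $(E_{\mathrm{Ein}})^{1/2}E_{\mathrm{YM}}$, $E_{\mathrm{Ein}}(E_{\mathrm{YM}})^{1/2}$, $e^{-2T}(E_{\mathrm{YM}})^2$, $e^{-4T}(E_{\mathrm{YM}})^3$, and enlarging $C$ absorbs all of them into the single inequality claimed.

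The substantive point is \eqref{YM-time-evol-bound}. Writing $Y=c_Y/(n-3)\in(0,1)$ and $\alpha_Y=(n-3)\big(1-\sqrt{1-Y}\big)$ as in \eqref{YM-alpha-def}, I rearrange the identity into $\partial_T\big(E^{(1)}_{\mathrm{YM}}-c_Y\Gamma^{(1)}_{\mathrm{YM}}\big)=-\alpha_Y\big(E^{(1)}_{\mathrm{YM}}-c_Y\Gamma^{(1)}_{\mathrm{YM}}\big)+Q+\mathcal{R}_{\mathrm{YM}}$, where $Q$ is the quadratic form in $(\E,A)$ with the coefficients read off above, and the task reduces to showing $Q\le 0$. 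Because $\Sigma$ is compact and $\mathfrak{L}_{g,\gamma}$ is a self-adjoint operator with purely discrete, strictly positive spectrum (its eigenvalues $\lambda_k$ satisfy $\lambda_k\ge\lambda_{\mathfrak{L}}>0$, the negative-definiteness of $\mathrm{Ric}[\gamma]$ ruling out a kernel), a spectral decomposition diagonalizes $Q$ as $\sum_k\langle v_k,M_kv_k\rangle$ with $v_k=(\langle\E,e_k\rangle,\langle A,e_k\rangle)$ and $M_k$ the $2\times2$ matrix displayed above. It then suffices that every $M_k$ be negative definite, i.e.\ $\operatorname{Tr}M_k<0$ and $\det M_k>0$; the trace is manifestly negative for $0<Y<1$ and $\lambda_k>0$, while $\det M_k=(1-Y)Y\big[\lambda_k-\tfrac{c_Y}{4n^2(n-3)}\big]$, so positivity amounts to $\lambda_k>\tfrac{c_Y}{4n^2(n-3)}$ for every $k$. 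Discarding the resulting non-positive $Q$ delivers \eqref{YM-time-evol-bound}.

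The main obstacle---indeed the only place where anything beyond organizing previously established estimates is needed---is this last uniform spectral lower bound: one must know that the smallest eigenvalue of the \emph{evolving} operator $\mathfrak{L}_{g,\gamma}$ never drops below the threshold $\tfrac{c_Y}{4n^2(n-3)}$ as $g$, and hence the shadow metric $\gamma$, moves through $B_\delta(0)$. This is precisely why the spectral gap was buffered to $\lambda_Y=\lambda_{\mathfrak{L}}-\epsilon$ and the constant capped at $c_Y\le\sqrt{\lambda_Y}/2$: for $n\ge 4$ the factor $4n^2(n-3)$ is large, so $\tfrac{c_Y}{4n^2(n-3)}$ sits well below $\lambda_Y$, while the choice of $\epsilon$ together with a sufficiently small radius $\delta$ keeps $\lambda_k\ge\lambda_Y$ for every $\mathfrak{L}_{g,\gamma}$ with $g\in B_\delta(0)$. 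With that in hand the determinant is positive for every mode, the matrices $M_k$ are uniformly negative definite, and the lemma follows.
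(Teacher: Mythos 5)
Your proposal matches the paper's proof essentially step for step: the same Type $\mathrm{I}$--$\mathrm{IV}$ decomposition with pointwise cancellation of the principal terms, integration by parts against the higher-regularity shift $X$, third-order remainder bookkeeping via Lemmas \ref{omega-estimate}--\ref{phi-estimate} and Lemma \ref{gamma-evol}, the companion identity for $\partial_T\Gamma^{(1)}_\mathrm{YM}$, and the same rearrangement with $\alpha_Y$ followed by a spectral decomposition of $\mathfrak{L}_{g,\gamma}$ into $2\times2$ blocks $M_k$ checked by trace and determinant, including the identical formula $\det M_k=(1-Y)Y\big[\lambda_k-\tfrac{c_Y}{4n^2(n-3)}\big]$. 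The uniform spectral threshold you flag as the crux is handled in the paper by the same device (the buffered gap $\lambda_Y=\lambda_{\mathfrak{L}}-\epsilon$, the cap $c_Y\leq\sqrt{\lambda_Y}/2$, and smallness of $\delta$), so there is no substantive difference between the two arguments.
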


\subsubsection{Positive-Definiteness of Energy}
Having the time evolution of the Yang-Mills sector, we also wish to show that $E_\mathrm{YM}-c_Y\Gamma_\mathrm{YM}$ is positive definite, and in particular is bounded from below by the Sobolev norms of the small data $\E$ and $A$. Doing so will allow us to later bound these norms by the total energy, which we will show will decay and thus force $\hnorm{\E}{s-1}$ and $\hnorm{A}{s}$ to zero. We give the result of this section in the following lemma.

\begin{lemma}\label{YM-pos-def}
Fix $\widehat{A}=0$ and $s>\frac{n}{2}+1$. Then there is a $\delta>0$ and a constant $C>0$, such that $(\k,u,\E,A)\in B_\delta(0)$ satisfy the inequality
\begin{equation}
\hnorm{\E}{s-1}^2+\hnorm{A}{s}^2\leq C\sum_{i=1}^s(E^{(i)}_\mathrm{YM}-c_Y\Gamma^{(i)}_\mathrm{YM}).
\end{equation}
\end{lemma}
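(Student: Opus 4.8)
The plan is to show that the ``correction'' term $c_Y\Gamma^{(i)}_\mathrm{YM}$ is a small perturbation of the manifestly positive-definite quadratic form $E^{(i)}_\mathrm{YM}$, and then invoke the norm equivalences $E_\mathrm{YM}\approx\hnorm{\E}{s-1}^2+\hnorm{A}{s}^2$ established in Section \ref{energies}. First I would recall that, since $\mathfrak{L}_{g,\gamma}$ has non-negative spectrum with smallest non-zero eigenvalue bounded below by $\lambda_Y>0$ for $g\in B_\delta(0)$, each $E^{(i)}_\mathrm{YM}=\langle\E,\mathfrak{L}^{i-1}_{g,\gamma}\E\rangle+\langle A,\mathfrak{L}^i_{g,\gamma}A\rangle$ controls $\lnorm{\nabla[\gamma]^{i-1}\E}{2}^2+\lnorm{\nabla[\gamma]^iA}{2}^2$ from below (with a possible harmless zero-mode caveat handled exactly as in the discussion after \eqref{dT-gamma-est}, since $A$ is perturbed about $\widehat A=0$ and any kernel contribution only helps the bound). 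Summing over $i=1,\dots,s$ gives $\sum_i E^{(i)}_\mathrm{YM}\geq c_0(\hnorm{\E}{s-1}^2+\hnorm{A}{s}^2)$ for some $c_0>0$ depending only on the background geometry and $\lambda_Y$.

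Next I would bound the correction term. Writing $\Gamma^{(i)}_\mathrm{YM}=\frac1n\langle\E,\mathfrak{L}^{i-1}_{g,\gamma}A\rangle$ and integrating by parts to balance derivatives across the self-adjoint operator $\mathfrak{L}_{g,\gamma}$, Cauchy--Schwarz and the product estimates of Proposition \ref{important-ineqs} give $|\Gamma^{(i)}_\mathrm{YM}|\leq \frac{C}{n}\lnorm{\nabla[\gamma]^{i-1}\E}{2}\,\lnorm{\nabla[\gamma]^{i}A}{2}$, hence by Young's inequality $|c_Y\Gamma^{(i)}_\mathrm{YM}|\leq \tfrac{c_Y C}{2n}\big(\epsilon^{-1}\lnorm{\nabla[\gamma]^{i-1}\E}{2}^2+\epsilon\lnorm{\nabla[\gamma]^{i}A}{2}^2\big)$ for any $\epsilon>0$. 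Choosing $\epsilon$ and, if necessary, $\delta$ small enough that the combined coefficients are strictly less than those appearing in the lower bound for $\sum_i E^{(i)}_\mathrm{YM}$, we obtain $\sum_i\big(E^{(i)}_\mathrm{YM}-c_Y\Gamma^{(i)}_\mathrm{YM}\big)\geq \tfrac{c_0}{2}\big(\hnorm{\E}{s-1}^2+\hnorm{A}{s}^2\big)$, which is the claimed inequality after taking $C=2/c_0$.

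The main obstacle I expect is making the lower bound on $\sum_i E^{(i)}_\mathrm{YM}$ genuinely uniform over the small-data ball: one must ensure that the smallest non-zero eigenvalue of $\mathfrak{L}_{g,\gamma}$ stays bounded away from zero as $g$ (and hence $\gamma\in\mathcal N$) varies, and that the $i=1$ term $\langle\E,\E\rangle+\langle A,\mathfrak{L}_{g,\gamma}A\rangle$ together with the higher terms really reconstructs the full $H^{s-1}\times H^s$ norm rather than just a semi-norm. Both points are handled by the choice of $\lambda_Y=\lambda_\mathfrak{L}-\epsilon>0$ in Section \ref{energies} (uniformity of the spectral gap in $B_\delta(0)$ was arranged there) and by the fact that $\langle A,A\rangle$ is separately present through $\Gamma^{(1)}_\mathrm{YM}$'s companion term or can be recovered since $\mathfrak{L}_{g,\gamma}$ is a strictly positive operator on the relevant space; the remaining commutator errors from moving $\nabla[\gamma]$ past $g^{ij}$ and $\mathfrak{L}_{g,\gamma}$ are all of higher order in $\delta$ and absorbed into $C$.
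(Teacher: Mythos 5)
Your overall strategy is viable, but it is genuinely different from the paper's. The paper does not argue by direct absorption on the full functional: it observes that $(\E,A)=(0,0)$ is a critical point of the Yang--Mills energy, computes the Hessian there, $2\langle\E,\mathfrak{L}_{\gamma,\gamma}^{i-1}\E\rangle+2\langle A,\mathfrak{L}_{\gamma,\gamma}^{i}A\rangle-\tfrac{4c_Y}{n}\langle\E,\mathfrak{L}_{\gamma,\gamma}^{i-1}A\rangle$, diagonalizes $\mathfrak{L}_{\gamma,\gamma}$ so that positive-definiteness reduces to $2\times2$ blocks with entries $1$, $\lambda_k$ and off-diagonal $-c_Y/n$, checks trace and determinant ($\det=\lambda_k-c_Y^2/n^2>0$ by the choice $c_Y<\sqrt{\lambda_Y}$), and then transfers positivity of the Hessian to the full nonlinear, $g$-dependent functional on a small ball via the Morse--Palais lemma, so the $g$-dependence is absorbed into a diffeomorphism of $B_\delta(0)$ rather than estimated term by term. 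Your route---coercivity of $\sum_iE^{(i)}_\mathrm{YM}$ from the uniform spectral gap plus Cauchy--Schwarz/Young on the cross term, with commutator and $g-\gamma$ errors absorbed for small $\delta$---is more elementary (no Morse--Palais) and makes the role of $\delta$ quantitative, at the cost of having to track the perturbative errors explicitly; the paper's spectral-block computation isolates exactly where the definition of $c_Y$ enters.

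Two steps need tightening. First, the absorption cannot be achieved by ``choosing $\epsilon$ small'': Young's inequality inflates the $\E$-coefficient by $\epsilon^{-1}$ as $\epsilon$ shrinks, so an admissible intermediate $\epsilon$ exists only if the product of the two coefficients beats $(c_YC/2n)^2$. This is exactly the paper's determinant condition, and the clean way to see it in your setup is the weighted Cauchy--Schwarz inequality $\tfrac{c_Y}{n}|\langle\E,\mathfrak{L}_{g,\gamma}^{i-1}A\rangle|\leq\tfrac{c_Y}{n\sqrt{\lambda_Y}}\langle\E,\mathfrak{L}_{g,\gamma}^{i-1}\E\rangle^{1/2}\langle A,\mathfrak{L}_{g,\gamma}^{i}A\rangle^{1/2}\leq\tfrac{c_Y}{2n\sqrt{\lambda_Y}}E^{(i)}_\mathrm{YM}$, which is an honest absorption precisely because $c_Y\leq\sqrt{\lambda_Y}/2$; you must invoke the definition of $c_Y$ explicitly here, otherwise the step is not justified. (Relatedly, $\mathfrak{L}_{g,\gamma}^{i-1}$ carries only $2(i-1)$ derivatives, so the natural intermediate bound is $\lnorm{\nabla[\gamma]^{i-1}\E}{2}\lnorm{\nabla[\gamma]^{i-1}A}{2}$, not $\lnorm{\nabla[\gamma]^{i}A}{2}$.) Second, the ``zero-mode caveat'' is vacuous and your pointer to the discussion after \eqref{dT-gamma-est} is misplaced: that discussion concerns the deformation space and $\mathcal{L}$, whereas $\mathfrak{L}_{\gamma,\gamma}=\Delta_H-2\mathrm{Ric}[\gamma]$ is strictly positive because $\mathrm{Ric}[\gamma]$ is negative-definite, so there is no kernel in the Yang--Mills sector; had there been one, a kernel direction for $A$ would in fact break the lower bound on $\hnorm{A}{s}$ rather than ``help'' it.
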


\begin{proof}
To begin, note that $(\k,u,\E,A)=(0,0,0,0)$ is a fixed point of the total energy $E_s$ and hence the Yang-Mills energy $\sum_{i=1}^sE^{(i)}_\mathrm{YM}-c_Y\Gamma^{(i)}_\mathrm{YM}$. Since the Yang-Mills energy is a smooth functional of $\E$ and $A$, to determine if it is positive-definite we only need to see if the Hessian at $(\k,g,\E,A)=(0,\gamma,0,0)$ is positive-definite. The Hessian at this fixed point may be written in the form
\begin{equation*}
D^2(E^{(i)}_\mathrm{YM}-c_Y\Gamma^{(i)}_\mathrm{YM})\big((\E,A),(\E,A)\big)=2\langle\E,\mathfrak{L}_{\gamma,\gamma}^{i-1}\E\rangle+2\langle A,\mathfrak{L}_{\gamma,\gamma}^iA\rangle-\frac{4c_Y}{n}\langle\E,\mathfrak{L}_{\gamma,\gamma}^{i-1}A\rangle
\end{equation*}

\noindent Performing a spectral decomposition of $\mathfrak{L}_{\gamma,\gamma}$, we then see that the Hessian will be positive definite if and only if the quadratic form
\begin{equation*}
M_k=\begin{bmatrix}1 & -c_Y/n \\ -c_Y/n & \lambda_k\end{bmatrix}
\end{equation*}

\noindent is. However, this follows directly from the construction of $c_Y$ and the fact that $\mathfrak{L}_{\gamma,\gamma}$ has non-negative spectrum. We see that $\operatorname{Tr}M=1+\lambda_k>0$ and $\det M=\lambda_k-\frac{c_Y^2}{n^2}>0$, as $c_Y$ has been chosen such that $c_Y<\sqrt{\lambda_Y}<\sqrt{\lambda_k}$. Thus $M$ is positive definite and so, it follows, is the Hessian of the $i^\text{th}$ Yang-Mills energy. Since this is true for all $1\leq i\leq s$, we then have that $\sum_{i=1}^sD^2(E^{(i)}_\mathrm{YM}-c_Y\Gamma^{(i)}_\mathrm{YM})\geq0$. However, because each of the quadratic forms $M_k$ are positive definite and thus invertible, we have that $\sum_{i=1}^sD^2(E^{(i)}_\mathrm{YM}-c_Y\Gamma^{(i)}_\mathrm{YM})$ will give an isomorphism of subsets of function spaces. From this isomorphism, we may conclude that for some constant $C$,
\begin{equation}\label{YM-sub-bound}
\hnorm{\E}{s-1}^2+\hnorm{A}{s}^2\leq C\sum_{i=1}^sD^2(E^{(i)}_\mathrm{YM}-c_Y\Gamma^{(i)}_\mathrm{YM}).
\end{equation}

\noindent Finally, we may apply the Morse-Palais lemma, giving there exists a neighborhood of the origin $B_\delta(0)$ such that, up to a diffeomorphism $B_\delta(0)\rightarrow B_\delta(0)$, we have
\begin{equation}\label{YM-Morse}
\sum_{i=1}^s(E^{(i)}_\mathrm{YM}-c_Y\Gamma^{(i)}_\mathrm{YM})=\sum_{i=1}^sD^2(E^{(i)}_\mathrm{YM}-c_Y\Gamma^{(i)}_\mathrm{YM}).
\end{equation}

\noindent Combining \eqref{YM-sub-bound} and \eqref{YM-Morse}, we obtain the desired estimate.
\end{proof}

Having obtained an estimate on the time evolution of the Yang-Mills energy, and seeing that the Yang-Mills energy bounds the Sobolev norms of $\E$ and $A$, we are now ready to study the gravitational sector and prove similar results for $\k$ and $u$.

\subsection{Gravitational Sector}\label{Ein-sector-estimates}
\subsubsection{Time Evolution of Energy}
We will now study the time evolution of $\partial_TE^{(1)}_\mathrm{Ein}$. In particular, we will see that Type $\mathrm{I}_\mathrm{Ein}$ terms control, to leading order, the decay of $\partial_TE^{(1)}_\mathrm{Ein}$, with the other type terms contributing only to higher order. Many of the calculations are done explicitly in \cite{Mondal2022}, with only the estimates changing here due to the differing evolution equations and elliptic estimates. As such, we will present only the necessary equations and estimates here, and refer readers to the relevant sections of \cite{Mondal2022} for additional calculations when appropriate.

To begin, we may write the Type $\mathrm{I}_\mathrm{Ein}$ terms as
\begin{align*}
\mathrm{I}_\mathrm{Ein}&=2\langle\partial_T\k,\k\rangle+\frac{1}{2}\langle\partial_tu,\mathcal{L}_{g,\gamma}u\rangle \\
&=-2(n-1)\langle\k,\k\rangle-\langle(\omega+n)\mathcal{L}_{g,\gamma}u,\k\rangle-2\langle X^m\nabla[\gamma]_m\k,\k\rangle+2\langle\mathcal{F}^{(2)},\k\rangle \\
&\quad+\langle(\omega+n)\k,\mathcal{L}_{g,\gamma}u\rangle-\frac{1}{2}\langle h^{\TT\parallel},\mathcal{L}_{g,\gamma}u\rangle-\frac{1}{2}\langle X^m\nabla[\gamma]_mu,\mathcal{L}_{g,\gamma}u\rangle+\langle\mathcal{F}^{(1)},\mathcal{L}_{g,\gamma}u\rangle \\
&=-2(n-1)\langle\k,\k\rangle-2\langle X^m\nabla[\gamma]_m\k,\k\rangle+2\langle\mathcal{F}^{(2)},\k\rangle \\
&\quad-\frac{1}{2}\langle h^{\TT\parallel},\mathcal{L}_{g,\gamma}u\rangle-\frac{1}{2}\langle X^m\nabla[\gamma]_mu,\mathcal{L}_{g,\gamma}u\rangle+\frac{1}{2}\langle\mathcal{F}^{(1)},\mathcal{L}_{g,\gamma}u\rangle,
\end{align*}

\noindent where we see the principle terms cancel point-wise. As in the Yang-Mills sector, we have potentially problematic terms containing additional derivatives of $\k$ and $u$. However, using integration by parts, we can move the extra derivative shift field $X\in H^{s+1}$. Doing so, we obtain
\begin{equation*}
\langle X^k\nabla[\gamma]_k\k,\k\rangle=-\frac{1}{2}\int_\Sigma\gamma^{ij}\gamma^{mn}\big(\nabla[\gamma]_kX^k+X^k(\Gamma[g]^\ell{}_{k\ell}-\Gamma[\gamma]^\ell{}_{k\ell})\big)\k_{in}\k_{jm}\:\mu_g
\end{equation*}

\noindent and
\begin{align*}
\langle X^k\nabla&[\gamma]_ku,\mathcal{L}_{g,\gamma}u\rangle \\
&=-\frac{1}{2}\int_\Sigma\gamma^{ij}\gamma^{mn}\big(\nabla[\gamma]_kg^{\ell r}X^k+g^{\ell r}\nabla[\gamma]_kX^k+g^{\ell r}X^k(\Gamma[g]^s{}_{ks}-\Gamma[\gamma]^s{}_{ks})\big)\nabla[\gamma]_\ell u_{in}\nabla[\gamma]_ru_{jm}\:\mu_g \\
&\quad+\int_\Sigma\gamma^{ij}\gamma^{mn}g^{\ell r}X^k\big(R[\gamma]^s{}_{i\ell k}u_{sn}+R[\gamma]^s{}_{n\ell r}u_{is}\big)\nabla[\gamma]_ru_{jm}\:\mu_g \\
&\quad+\int_\Sigma\gamma^{ij}\gamma^{mn}g^{\ell r}\nabla[\gamma]_\ell X^k\nabla[\gamma]_ku_{in}\nabla[\gamma]_ru_{jm}\:\mu_g-2\int_\Sigma\gamma^{ij}\gamma^{mn}X^k\nabla[\gamma]_ku_{in}g^{ab}g^{cd}\gamma_{js}R[\gamma]^s{}_{amc}u_{bd}\:\mu_g.
\end{align*}

\noindent Using the Sobolev embeddings $H^{s-1}\hookrightarrow L^\infty$ for $s-1>\frac{n}{2}$, we then have the estimates
\begin{align}
\Big|\langle X^k\nabla[\gamma]_k\k,\k\rangle\Big|&\leq C\big\{\hnorm{\k}{s-1}^2(\hnorm{\k}{s-1}+\hnorm{u}{s})\big\} \notag \\
&\quad+C\big\{e^{-2T}\hnorm{\k}{s-1}^2(\hnorm{\E}{s-1}^2+\hnorm{A}{s}^2)+e^{-4T}\hnorm{\k}{s-1}^2(\hnorm{\E}{s-1}^2+\hnorm{A}{s}^2)^2\big\} \\
\Big|\langle X^k\nabla[\gamma]_ku,\mathcal{L}_{g,\gamma}u\rangle\Big|&\leq C\big\{\hnorm{u}{s}^2(\hnorm{\k}{s-1}+\hnorm{u}{s})\big\} \notag \\
&\quad+C\big\{e^{-2T}\hnorm{u}{s}^2(\hnorm{\E}{s-1}^2+\hnorm{A}{s}^2)+e^{-4T}\hnorm{u}{s}^2(\hnorm{\E}{s-1}^2+\hnorm{A}{s}^2)^2\big\}
\end{align}

\noindent We must also estimate the term containing the transverse-traceless $h^{\TT\parallel}$, which is potentially problematic as we have no a priori bound for $h^{\TT\parallel}$. However by an explicit calculation, as done in equations (165)---(172) of \cite{Mondal2022}, we have that $h^{\TT\parallel}$ will be bounded by
\begin{equation}
\norm{h^{\TT\parallel}}\leq C\big\{\hnorm{\k}{s-1}+\hnorm{\omega}{s+1}\big\},
\end{equation}

\noindent where because $h^{\TT\parallel}$ is an element of the finite-dimensional deformation space, all its norms are equivalent. This also means that we need not worry about the extra derivative acting on $u$, as we may freely move it to $h^{\TT\parallel}$ via integration by parts to achieve the necessary regularity of $u$. Using the bound on $h^{\TT\parallel}$ and the Sobolev embeddings, we have the estimate 
\begin{equation}
\Big|\langle h^{\TT\parallel},\mathcal{L}_{g,\gamma}u\rangle\Big|\leq C\hnorm{u}{s}^2\big\{\hnorm{\k}{s-1}+e^{-2T}(\hnorm{\E}{s-1}^2+\hnorm{A}{s}^2)\big\}.
\end{equation}

\noindent To finish the estimates of the Type $\mathrm{I}_\mathrm{Ein}$ terms, we may estimate those containing $\mathcal{F}^{(2)}$ and $\mathcal{F}^{(1)}$ as
\begin{align}
\Big|\langle\mathcal{F}^{(2)},\k\rangle\Big|&\leq C\hnorm{\k}{s-1}\big\{\hnorm{\k}{s-1}^2+\hnorm{u}{s}^2\big\} \notag \\
&\quad+C\big\{e^{-2T}\hnorm{\k}{s-1}(\hnorm{\E}{s-1}^2+\hnorm{A}{s}^2)+e^{-4T}\hnorm{\k}{s-1}(\hnorm{\E}{s-1}^2+\hnorm{A}{s}^2)^2\big\}, \\
\Big|\langle\mathcal{F}^{(1)},\mathcal{L}_{g,\gamma}u\rangle\Big|&\leq C\hnorm{u}{s}\big\{\hnorm{\k}{s-1}^2+\hnorm{u}{s}^2\big\} \notag \\
&\quad+C\big\{e^{-2T}\hnorm{u}{s}(\hnorm{\E}{s-1}^2+\hnorm{A}{s}^2)+e^{-4T}\hnorm{u}{s}(\hnorm{\E}{s-1}^2+\hnorm{A}{s}^2)^2\big\}.
\end{align}

\noindent We note that each term in $\mathcal{F}^{(1)}$ has regularity at least at the level of $u$, and so we may integrate by parts to move excess derivatives from $\mathcal{L}_{g,\gamma}u$ onto $\mathcal{F}^{(1)}$ without any issue.

It now remains to estimate the Type $\mathrm{II}_\mathrm{Ein}$, $\mathrm{III}_\mathrm{Ein}$, and $\mathrm{IV}_\mathrm{Ein}$ terms. We may do so via a similar calculation to that done Section \ref{YM-sector-estimates}, as well as in \cite{Mondal2022}. We thus omit this computation, giving the result as
\begin{align*}
\mathrm{II}_\mathrm{Ein}+\mathrm{III}_\mathrm{Ein}+\mathrm{IV}_\mathrm{Ein}&=\frac{1}{8}\int_\Sigma\big(4\partial_T\gamma^{im}\gamma^{jn}+\gamma^{im}\gamma^{jn}\tr(\partial_Tg)\big)\big(4\k_{ij}\k_{mn}+u_{ij}\mathcal{L}_{g,\gamma}u_{mn}\big)\:\mu_g \\
&\quad+\frac{1}{4}\int_\Sigma\gamma^{im}\gamma^{jn}\partial_Tg^{k\ell}\nabla[\gamma]_ku_{ij}\nabla[\gamma]_\ell u_{mn}\:\mu_g \\
&\quad+\frac{1}{2}\int_\Sigma\gamma^{im}\gamma^{jn}g^{k\ell}\big[\partial_T,\nabla[\gamma]_k\big]u_{ij}\nabla[\gamma]_\ell u_{mn}\:\mu_g \\
&\quad-\frac{1}{2}\int_\Sigma\gamma^{im}\gamma^{jn}g^{kp}g^{\ell q}u_{ij}R[\gamma]_{mkn\ell}u_{pq}\:\mu_g.
\end{align*}

\noindent We may then obtain an estimate for the remaining terms of
\begin{align}
\big|\mathrm{II}_\mathrm{Ein}+\mathrm{III}_\mathrm{Ein}+\mathrm{IV}_\mathrm{Ein}\big|&\leq C\big\{\hnorm{\k}{s-1}^2+\hnorm{u}{s}^2\big\}\hnorm{\partial_Tg}{s-1} \notag \\
&\leq C\big\{\hnorm{\k}{s-1}^2+\hnorm{u}{s}^2\big\} \notag \\
&\quad\times\big\{\hnorm{\k}{s-1}+\hnorm{u}{s}+e^{-2T}(\hnorm{\E}{s-1}^2+\hnorm{A}{s}^2) \notag \\
&\qquad\qquad\qquad\qquad\qquad\;\,+e^{-4T}(\hnorm{\E}{s-1}^2+\hnorm{A}{s}^2)^2\big\}.
\end{align}

\noindent Combining all the estimates and keeping only the leading order terms as we are working in the small data regime, we find the time evolution of $E^{(1)}_\mathrm{Ein}$ is given by
\begin{equation}\label{E-Ein-eqn}
\partial_TE^{(1)}_\mathrm{Ein}=-2(n-1)\langle\k,\k\rangle+\mathcal{R}_\mathrm{Ein},
\end{equation}

\noindent where $\mathcal{R}_\mathrm{Ein}$ satisfies the bound
\begin{align}
|\mathcal{R}_\mathrm{Ein}|&\leq C\big\{\hnorm{\k}{s-1}+\hnorm{u}{s}\big\}\big\{\hnorm{\k}{s-1}^2+\hnorm{u}{s}^2\big\} \notag \\
&\qquad+C(\hnorm{\k}{s-1}+\hnorm{u}{s})\big\{e^{-2T}(\hnorm{\E}{s-1}^2+\hnorm{A}{s}^2)+e^{-4T}(\hnorm{\E}{s-1}^2+\hnorm{A}{s})^2\big\} \notag \\
&\leq C\big\{(E_\mathrm{Ein})^\frac{3}{2}+e^{-2T}(E_\mathrm{Ein})^\frac{1}{2}E_\mathrm{YM}+e^{-4T}(E_\mathrm{Ein})^\frac{1}{2}E_\mathrm{YM}^2\big\}.
\end{align}

\noindent We must finally estimate the time derivative of the correction to the gravitational energy, $\partial_T\Gamma^{(1)}_\mathrm{Ein}$. By an explicit calculation using the embeddings $H^s\hookrightarrow L^\infty$ for $s>\frac{n}{2}+1$ and $H^s\hookrightarrow L^2$ for any $s$, as well as standard inequalities, we find $\partial_T\Gamma^{(1)}_\mathrm{Ein}$ may be written in the form
\begin{equation}\label{Gamma-Ein-eqn}
\partial_T\Gamma^{(1)}_\mathrm{Ein}=-\frac{(n-1)}{n}\langle\k,u\rangle-\frac{1}{2}\langle\mathcal{L}_{g,\gamma}u,u\rangle+2\langle\k,\k\rangle+\mathcal{S}_\mathrm{Ein},
\end{equation}

\noindent where $\mathcal{S}_\mathrm{Ein}$ satisfies the estimate
\begin{align*}
|\mathcal{S}_\mathrm{Ein}|&\leq C\big\{\hnorm{\k}{s-1}+\hnorm{u}{s}\big\}\big\{\hnorm{\k}{s-1}^2+\hnorm{u}{s}^2+e^{-2T}(\hnorm{\E}{s-1}^2+\hnorm{A}{s}^2)+e^{-4T}(\hnorm{\E}{s-1}^2+\hnorm{A}{s}^2)^2\big\} \\
&\leq C\big\{(E_\mathrm{Ein})^\frac{3}{2}+e^{-2T}(E_\mathrm{Ein})^\frac{1}{2}E_\mathrm{YM}+e^{-4T}(E_\mathrm{Ein})^\frac{1}{2}E_\mathrm{YM}^2\big\}.
\end{align*}

\noindent Combining \eqref{E-Ein-eqn} and \eqref{Gamma-Ein-eqn}, we then have that the contribution of the gravitational sector to the time evolution of the first-order energy will be given as
\begin{equation*}
\partial_T(E^{(1)}_\mathrm{Ein}+c_E\Gamma^{(1)}_\mathrm{Ein})=-2\big((n-1)-c_E\big)\langle\k,\k\rangle-\frac{c_E}{2}\langle\mathcal{L}_{g,\gamma}u,u\rangle-\frac{(n-1)}{n}c_E\langle\k,u\rangle+\mathcal{R}_\mathrm{Ein}.
\end{equation*}

\noindent However, as with the Yang-Mills energy, we want to bound, to leading order, the time evolution of the gravitational sector energy by a term of the form $-\alpha_E(E^{(1)}_\mathrm{Ein}+c_E\Gamma^{(1)}_\mathrm{Ein})$. We note that \cite{Andersson2009} shows such a result, though the energies we have defined differ by a constant multiple factor of $\frac{1}{2n^2}$. A calculation analogous to that in Section \ref{YM-sector-time-evol} shows this does not change the result of Andersson and Moncrief, and so we obtain the following lemma.

\begin{lemma}\label{Ein-time-evol}
Let $s>\frac{n}{2}+1$ and take $B_\delta(0)$ to be a ball of sufficiently small radius $\delta$ containing $(\k,u,\E,A)$ that satisfy the Einstein-Yang-Mills system. Then the time evolution of $E^{(1)}_\mathrm{Ein}+c_E\Gamma^{(1)}_\mathrm{Ein}$ is given by
\begin{equation}
\partial_T(E^{(1)}_\mathrm{Ein}+c_E\Gamma^{(1)}_\mathrm{Ein})=-2\big((n-1)-c_E\big)\langle\k,\k\rangle-\frac{c_E}{2}\langle\mathcal{L}_{g,\gamma}u,u\rangle-\frac{(n-1)}{n}c_E\langle\k,u\rangle+\mathcal{R}_\mathrm{Ein},
\end{equation}

\noindent where $\mathcal{R}_\mathrm{Ein}$ satisfies the third-order estimate
\begin{equation}
|\mathcal{R}_\mathrm{Ein}|\leq C\big\{(E_\mathrm{Ein})^\frac{3}{2}+e^{-2T}(E_\mathrm{Ein})^\frac{1}{2}E_\mathrm{YM}+e^{-4T}(E_\mathrm{Ein})^\frac{1}{2}(E_\mathrm{YM})^2\big\}.
\end{equation}

\noindent In particular, the gravitational energy will be bounded by
\begin{equation}\label{Ein-time-evol-bound}
\partial_T(E^{(1)}_\mathrm{Ein}+c_E\Gamma^{(1)}_\mathrm{Ein})\leq-\alpha_E(E^{(1)}_\mathrm{Ein}+c_E\Gamma^{(1)}_\mathrm{Ein})+\mathcal{R}_\mathrm{Ein},
\end{equation}

\noindent for $\alpha_E$ defined as
\begin{equation}\label{Ein-alpha-def}
\alpha_E=(n-1)\bigg(1-\sqrt{1-\frac{2c_E}{n-1}}\bigg).
\end{equation}
\end{lemma}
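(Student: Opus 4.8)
The plan is to derive the stated identity by combining the two formulas \eqref{E-Ein-eqn} and \eqref{Gamma-Ein-eqn} that were already established, and then to run the same $2\times 2$ quadratic-form argument used for the Yang-Mills sector in Lemma \ref{YM-time-evol}, now with $\mathcal{L}_{g,\gamma}$ in place of $\mathfrak{L}_{g,\gamma}$. First I would add \eqref{E-Ein-eqn} to $c_E$ times \eqref{Gamma-Ein-eqn}: the term $2c_E\langle\k,\k\rangle$ coming from $\partial_T\Gamma^{(1)}_\mathrm{Ein}$ combines with $-2(n-1)\langle\k,\k\rangle$ to produce the coefficient $-2((n-1)-c_E)$, the $-\tfrac{c_E}{2}\langle\mathcal{L}_{g,\gamma}u,u\rangle$ and $-\tfrac{(n-1)}{n}c_E\langle\k,u\rangle$ terms appear directly, and the remainder is $\mathcal{R}_\mathrm{Ein}+c_E\mathcal{S}_\mathrm{Ein}$, which is again third order by the bounds already recorded for $\mathcal{R}_\mathrm{Ein}$ and $\mathcal{S}_\mathrm{Ein}$; after enlarging $C$ this is the claimed estimate for the redefined $\mathcal{R}_\mathrm{Ein}$.

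For the leading-order bound \eqref{Ein-time-evol-bound}, I would set $\beta=\sqrt{1-2c_E/(n-1)}\in[0,1)$ — this makes sense since $c_E\le(n-1)/2$ — so that $\alpha_E=(n-1)(1-\beta)$ and $2c_E=(n-1)(1-\beta^2)$. Since $\Sigma$ is compact, $\mathcal{L}_{g,\gamma}$ has discrete spectrum, and for $g$ near $\gamma$ it has non-negative spectrum by assumption; moreover, either $\mathcal{L}_{\gamma_0,\gamma_0}$ is strictly positive and the deformation space is trivial, or the shadow metric condition $u\perp\mathcal{N}$ removes the zero modes from $u$. In either case I can expand $\k$ and $u$ in an $L^2$-eigenbasis $\{e_k\}$ of $\mathcal{L}_{g,\gamma}$ with $u$ having no component on the kernel, write $a_k=\langle\k,e_k\rangle$, $b_k=\langle u,e_k\rangle$, and reduce $\partial_T(E^{(1)}_\mathrm{Ein}+c_E\Gamma^{(1)}_\mathrm{Ein})+\alpha_E(E^{(1)}_\mathrm{Ein}+c_E\Gamma^{(1)}_\mathrm{Ein})-\mathcal{R}_\mathrm{Ein}$ to $\sum_k\langle M_k(a_k,b_k),(a_k,b_k)\rangle$, where
\begin{equation*}
M_k=\begin{bmatrix}\alpha_E-2(n-1)+2c_E & \tfrac{c_E}{2n}\big(\alpha_E-(n-1)\big) \\ \tfrac{c_E}{2n}\big(\alpha_E-(n-1)\big) & \lambda_k\big(\tfrac{\alpha_E}{4}-\tfrac{c_E}{2}\big)\end{bmatrix}.
\end{equation*}

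It then suffices to show each $M_k$ is negative semidefinite, which I would do by substituting the $\beta$-parametrization: the diagonal entries become $-(n-1)\beta(1+\beta)$ and $-\lambda_k\tfrac{n-1}{4}\beta(1-\beta)$, both $\le0$, so $\operatorname{Tr}M_k\le0$; and a short computation gives $\det M_k=\tfrac{(n-1)^2\beta^2 c_E}{4}\big(\tfrac{2\lambda_k}{n-1}-\tfrac{c_E}{n^2}\big)$, which is $\ge0$ precisely because $c_E\le\tfrac{2n^2\lambda_E}{n-1}<\tfrac{2n^2\lambda_\mathcal{L}}{n-1}\le\tfrac{2n^2\lambda_k}{n-1}$, using $\lambda_k\ge\lambda_\mathcal{L}>\lambda_E$. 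Summing over $k$ gives \eqref{Ein-time-evol-bound}. This is exactly the computation in Lemma 6.4 of \cite{Andersson2009}; the only difference is the overall factor of $1/(2n^2)$ by which our gravitational energy differs from theirs, which enters only as a harmless rescaling of $c_E$ preserving all the sign conditions above, just as the analogous rescaling did in the Yang-Mills case of Lemma \ref{YM-time-evol}.

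I expect the main obstacle to be the treatment of the kernel of $\mathcal{L}_{g,\gamma}$ rather than any of the algebra: if $u$ retained a component along a zero mode, the corresponding $M_k$ would have vanishing $(2,2)$ entry and hence strictly negative determinant, breaking the argument. The fix is that the shadow metric condition confines $u$ to the $L^2$-orthogonal complement of $T_\gamma\mathcal{N}$ — which is why the decomposition $\partial_T\gamma=h^{\TT\parallel}+L_{Y^\parallel}\gamma$ and the bound for $X+Y^\parallel$, not $X$ alone, were needed upstream — together with the fact that $\ker\mathcal{L}_{g,\gamma}$ is a small perturbation of $\ker\mathcal{L}_{\gamma,\gamma}$ for $g$ near $\gamma$. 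One must also check that the spectral gap of $\mathcal{L}_{g,\gamma}$ stays bounded away from zero uniformly as $g$, and hence the shadow metric $\gamma$, evolves; this is exactly the role of the margin $\epsilon$ in $\lambda_E=\lambda_\mathcal{L}-\epsilon$ together with the smallness of $\delta$.
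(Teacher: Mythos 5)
Your proposal is correct and follows essentially the same route as the paper: the identity is obtained exactly by adding \eqref{E-Ein-eqn} to $c_E$ times \eqref{Gamma-Ein-eqn}, and the bound \eqref{Ein-time-evol-bound} is the spectral $2\times2$ quadratic-form argument of Lemma 6.4 of \cite{Andersson2009}, which the paper invokes (noting the harmless $\tfrac{1}{2n^2}$ rescaling) rather than writing out, and which your $\beta$-parametrization computation of $\operatorname{Tr}M_k$ and $\det M_k$ reproduces correctly. The only slight imprecision is that one is guaranteed $\lambda_k\geq\lambda_E$ (not $\lambda_k\geq\lambda_{\mathcal{L}}$) for the evolving operator $\mathcal{L}_{g,\gamma}$, which still yields $\det M_k\geq0$ since $c_E\leq\tfrac{2n^2\lambda_E}{n-1}$, and your closing remarks on the kernel modes and the role of $\epsilon$ already address this.
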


\subsubsection{Positive-Definiteness of Energy}
As done for the Yang-Mills sector, we will need to show we have a bound of the form $\hnorm{\k}{s-1}^2+\hnorm{u}{s}^2\leq C\sum_{i=s}^s(E^{(i)}_\mathrm{Ein}+c_E\Gamma^{(i)}_\mathrm{Ein})$ so that a decay in the total energy will imply that the perturbations vanish. An estimate of this form, however, has been shown in Section 7.1 of \cite{Andersson2009}. While the energies we have defined differ by a positive constant factor of $\frac{1}{2n^2}$, we note that this will not affect the proof presented by Andersson and Moncrief. We thus restate here the result of Theorem 7.4 from their work, which will still hold, in the following form.

\begin{lemma}\label{Ein-pos-def}\cite{Andersson2009} Suppose that $\gamma$ satisfies the shadow metric condition for $g$, and take $s>\frac{n}{2}+1$. Then there is a $\delta>0$ and a constant $C>0$, such that for $(\k,u,\E,A)\in B_\delta(0)$, the inequality
\begin{equation*}
\hnorm{\k}{s-1}^2+\hnorm{u}{s}^2\leq C\sum_{i=s}^s(E^{(i)}_\mathrm{Ein}+c_E\Gamma^{(i)}_\mathrm{Ein}).
\end{equation*}

\noindent is satisfied.
\end{lemma}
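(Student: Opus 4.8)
The plan is to follow the same strategy as in the proof of Lemma~\ref{YM-pos-def}, adapted to the gravitational sector, and to observe that since our energies differ from those of \cite{Andersson2009} only by the overall positive constant $\frac{1}{2n^2}$---which affects neither positive-definiteness nor the resulting coercivity estimate---the statement is exactly Theorem~7.4 of that work. Concretely, I would first fix $\gamma\in\mathcal{N}$ and restrict attention to the slice $\{g\in\mathcal{M}_\Sigma : (g-\gamma)\perp T_\gamma\mathcal{N}\}$ of metrics having $\gamma$ as their shadow, noting that $(\k,g)=(0,\gamma)$ is a critical point of the functional $\sum_{i=1}^s(E^{(i)}_\mathrm{Ein}+c_E\Gamma^{(i)}_\mathrm{Ein})$ on the corresponding slice of data, so that positive-definiteness near the origin is governed by the Hessian at that point.

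Next I would compute the Hessian. At the critical point every metric-dependent quantity collapses to its $\gamma$-value, so $\mathcal{L}_{g,\gamma}$ becomes $\mathcal{L}_{\gamma,\gamma}$ and the Hessian of the $i$-th term is
\begin{equation*}
D^2\bigl(E^{(i)}_\mathrm{Ein}+c_E\Gamma^{(i)}_\mathrm{Ein}\bigr)\bigl((\k,u),(\k,u)\bigr)=2\langle\k,\mathcal{L}_{\gamma,\gamma}^{i-1}\k\rangle+\tfrac12\langle u,\mathcal{L}_{\gamma,\gamma}^{i}u\rangle+\tfrac{2c_E}{n}\langle\k,\mathcal{L}_{\gamma,\gamma}^{i-1}u\rangle,
\end{equation*}
where, crucially, $u$ ranges only over $(T_\gamma\mathcal{N})^\perp$ because of the shadow metric condition. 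The key geometric input, established in Section~2.2 of \cite{Andersson2009}, is that on the CMCSH gauge slice the kernel of $\mathcal{L}_{\gamma,\gamma}$ coincides with the transverse--traceless part of $T_\gamma\mathcal{N}$; hence on the admissible $u$ the operator $\mathcal{L}_{\gamma,\gamma}$ is strictly positive, with every eigenvalue bounded below by $\lambda_\mathcal{L}>\lambda_E>0$. Performing a spectral decomposition of $\mathcal{L}_{\gamma,\gamma}$ on this admissible subspace and summing over $i$, the Hessian block-diagonalizes, and on the eigenspace with eigenvalue $\lambda_k\geq\lambda_\mathcal{L}$ it is represented---up to the common positive factor $\sum_{i=1}^s\lambda_k^{i-1}$---by the $2\times2$ matrix
\begin{equation*}
M_k=\begin{bmatrix}2 & c_E/n\\ c_E/n & \lambda_k/2\end{bmatrix},
\end{equation*}
whose trace is positive and whose determinant $\lambda_k-c_E^2/n^2$ is positive because $c_E$ has been chosen so that $c_E^2<n^2\lambda_\mathcal{L}$ (this is precisely the role of the two terms in the minimum defining $c_E$, together with $\lambda_\mathcal{L}>\lambda_E$). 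Thus each $M_k$, and hence $\sum_{i=1}^sD^2(E^{(i)}_\mathrm{Ein}+c_E\Gamma^{(i)}_\mathrm{Ein})$, is positive definite.

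Since this Hessian is an elliptic operator with the same principal part as $\sum_{i=1}^s\mathcal{L}_{\gamma,\gamma}^{i}$, its positive-definiteness upgrades to coercivity, yielding $\hnorm{\k}{s-1}^2+\hnorm{u}{s}^2\leq C\sum_{i=1}^sD^2(E^{(i)}_\mathrm{Ein}+c_E\Gamma^{(i)}_\mathrm{Ein})$ for admissible perturbations, with $C$ uniform in $\gamma$ as $\gamma$ ranges over a compact neighborhood of $\gamma_0$ in the finite-dimensional $\mathcal{N}$. Finally I would invoke the Morse--Palais lemma on the slice to conclude that, after a smooth change of coordinates fixing the origin, the full nonlinear functional agrees with its Hessian on a sufficiently small ball $B_\delta(0)$; combining this with the coercivity estimate gives the claim. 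The main obstacle is the deformation-space subtlety already flagged: correctly identifying $\ker\mathcal{L}_{\gamma,\gamma}$ with the transverse--traceless part of $T_\gamma\mathcal{N}$ after gauge fixing, and verifying that the shadow metric condition removes exactly these zero modes so that $\mathcal{L}_{\gamma,\gamma}$ is strictly positive on the admissible space---this is what makes the non-negativity assumption on $\mathcal{L}_{\gamma_0,\gamma_0}$ essential, and it is precisely the content worked out in Sections~2.2 and~7.1 of \cite{Andersson2009}. Because the Yang--Mills source and the constant rescaling of the energies leave the $(\k,u)$-block of the Hessian untouched, their argument carries over without modification.
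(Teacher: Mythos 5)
Your proposal is correct and takes essentially the same route as the paper, which proves this lemma simply by invoking Theorem 7.4 and Section 7.1 of \cite{Andersson2009} after observing that the overall constant rescaling of the energies is immaterial. The Hessian--spectral-decomposition--Morse--Palais sketch you add mirrors the paper's own proof of Lemma \ref{YM-pos-def} and the Andersson--Moncrief argument, including the essential points that the shadow metric condition removes the kernel of $\mathcal{L}_{\gamma,\gamma}$ and that the choice of $c_E$ gives $c_E^2\leq n^2\lambda_E<n^2\lambda_k$, so each $2\times2$ block is positive definite.
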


\subsection{Total Energy}
Recall in \eqref{total-energy} that we have defined the total energy as the quantity
\begin{equation*}
E_s=\sum_{i=1}^sE^{(i)}=\sum_{i=s}^sE^{(i)}_\mathrm{Ein}+c_E\Gamma^{(i)}_\mathrm{Ein}+E^{(i)}_\mathrm{YM}-c_Y\Gamma^{(i)}_\mathrm{YM}.
\end{equation*}

\noindent Now, as a direct result of Lemmas \ref{YM-pos-def} and \ref{Ein-pos-def}, we obtain the following theorem related to the positive-definiteness of the total energy.
\begin{theorem}\label{total-E-pos-def}
Fix $s>\frac{n}{2}+1$. Suppose that $\gamma$ satisfies the shadow metric condition for $g$, and take $\widehat{A}=0$. Then there exists a $\delta>0$ and a constant $C>0$ such that, for $(\k,u,\E,A)\in B_\delta(0)$, the inequality
\begin{equation*}
\lnorm{\k}{s-1}^2+\hnorm{u}{s}^2+\hnorm{\E}{s-1}^2+\hnorm{A}{s}^2\leq CE_s
\end{equation*}

\noindent holds.
\end{theorem}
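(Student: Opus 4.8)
The plan is to exploit the fact that the total energy decomposes exactly into a gravitational block and a Yang--Mills block with no cross terms: unpacking the definition of $E^{(i)}$ and regrouping,
\begin{equation*}
E_s = \sum_{i=1}^s\big(E^{(i)}_\mathrm{Ein} + c_E\Gamma^{(i)}_\mathrm{Ein}\big) + \sum_{i=1}^s\big(E^{(i)}_\mathrm{YM} - c_Y\Gamma^{(i)}_\mathrm{YM}\big).
\end{equation*}
Each block has already been shown to be positive definite and to dominate the Sobolev norms of its own pair of small variables: Lemma~\ref{Ein-pos-def} provides a radius $\delta_1>0$ and a constant $C_1>0$ such that $\hnorm{\k}{s-1}^2+\hnorm{u}{s}^2 \leq C_1\sum_{i=1}^s(E^{(i)}_\mathrm{Ein}+c_E\Gamma^{(i)}_\mathrm{Ein})$ whenever $(\k,u,\E,A)\in B_{\delta_1}(0)$ and $\gamma$ satisfies the shadow metric condition, while Lemma~\ref{YM-pos-def} provides $\delta_2>0$ and $C_2>0$ such that $\hnorm{\E}{s-1}^2+\hnorm{A}{s}^2 \leq C_2\sum_{i=1}^s(E^{(i)}_\mathrm{YM}-c_Y\Gamma^{(i)}_\mathrm{YM})$ whenever $(\k,u,\E,A)\in B_{\delta_2}(0)$ and $\widehat A = 0$.

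The argument is then to take $\delta = \min\{\delta_1,\delta_2\}$, so that both hypotheses hold simultaneously on $B_\delta(0)$, and $C = \max\{C_1, C_2\}$. Adding the two displayed bounds and recognizing the right-hand side as the two blocks of $E_s$ gives
\begin{equation*}
\hnorm{\k}{s-1}^2+\hnorm{u}{s}^2+\hnorm{\E}{s-1}^2+\hnorm{A}{s}^2 \leq C\,E_s,
\end{equation*}
which is the assertion of the theorem.

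I do not expect a genuine obstacle here: all of the analytic work --- the positivity of the Hessians of the two energy blocks at the fixed point $(0,\gamma,0,0)$, the spectral-gap conditions on $\mathcal{L}_{\gamma,\gamma}$ and $\mathfrak{L}_{\gamma,\gamma}$ that force the associated $2\times2$ quadratic forms $M_k$ to be positive definite, and the invocation of the Morse--Palais lemma to pass from the Hessian to the functional itself on a neighborhood of the origin --- has already been carried out in the proofs of Lemmas~\ref{YM-pos-def} and~\ref{Ein-pos-def}. The only point deserving a line of care is that the Morse--Palais diffeomorphisms produced in those two lemmas a priori live on different neighborhoods of $0$; one restricts to their intersection, which is again such a neighborhood, before adding the estimates.
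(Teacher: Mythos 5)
Your proposal is correct and is essentially the paper's own argument: the paper obtains the theorem as a direct consequence of Lemmas \ref{YM-pos-def} and \ref{Ein-pos-def}, decomposing $E_s$ into the gravitational and Yang--Mills blocks and adding the two bounds exactly as you do. Your remarks about shrinking to a common neighborhood and using the non-negativity of each block when passing to $C=\max\{C_1,C_2\}$ are the right (implicit) details.
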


In light of Theorem \ref{total-E-pos-def}, if we can show, with sufficiently small initial data at $T_0$, that $E_s$ decays to zero as $T\rightarrow\infty$, then we will have shown that the perturbations also decay and hence the Einstein-Yang-Mills system converges to a background solution of the form \eqref{background-sol}. Furthermore, we will have that the maximal existence interval of the Cauchy problem for the Einstein-Yang-Mills system in the chosen gauges is then $[T_0,\infty)$, i.e., solutions of the small data Cauchy problem for the Einstein-Yang-Mills system will exist globally.

To show the decay of $E_s$, we first see that, by combining the results of Lemmas \ref{YM-time-evol} and \ref{Ein-time-evol}, we have that $\partial_TE^{(1)}$ is bounded from above by
\begin{align}
\partial_TE^{(1)}&\leq-\alpha_E\big(E^{(1)}_\mathrm{Ein}+c_E\Gamma^{(1)}_\mathrm{Ein}\big)-\alpha_Y\big(E^{(1)}_\mathrm{YM}-c_Y\Gamma^{(1)}_\mathrm{YM}\big)+\mathcal{R},
\end{align}

\noindent where $\mathcal{R}$ satisfies the leading-order estimate
\begin{align*}
|\mathcal{R}|&\leq|\mathcal{R}_\mathrm{YM}|+|\mathcal{R}_\mathrm{Ein}| \\
&\leq C\big\{\big((E_\mathrm{Ein})^\frac{1}{2}+(E_\mathrm{YM})^\frac{1}{2}\big)(E_\mathrm{Ein}+E_\mathrm{YM})+(E_\mathrm{Ein})^\frac{1}{2}\big(e^{-2T}E_\mathrm{YM}+e^{-4T}(E_\mathrm{YM})^2\big)\big\} \\
&\leq C\big\{(E_\mathrm{Ein}+E_\mathrm{YM})^\frac{3}{2}+(E_\mathrm{Ein}+E_\mathrm{YM})^\frac{1}{2}\big(e^{-2T}(E_\mathrm{Ein}+E_\mathrm{YM})+e^{-4T}(E_\mathrm{Ein}+E_\mathrm{YM})^2\big)\big\}.
\end{align*}

\noindent From Theorem \ref{total-E-pos-def} and the equivalence of $E_\mathrm{Ein}$ and $E_\mathrm{YM}$ with the Sobolev norms of the small data we have that
\begin{equation*}
E_\mathrm{Ein}+E_\mathrm{YM}\leq C\big\{\hnorm{\k}{s-1}^2+\hnorm{u}{s}^2+\hnorm{\E}{s-1}^2+\hnorm{A}{s}^2\big\}\leq CE_s.
\end{equation*}

\noindent This means that $\mathcal{R}$ will satisfy the bound
\begin{equation*}
|\mathcal{R}|\leq C\big\{(E_s)^\frac{3}{2}+e^{-2T}(E_s)^\frac{3}{2}+e^{-4T}(E_s)^\frac{5}{2}\big\}.
\end{equation*}

\noindent As we have finally obtained a bound for the higher-order terms on our total energy, we may make the estimate $e^{-T}\leq e^{-T_0}$ for $T>T_0$ without worry that dangerous terms of the form $e^T$ will appear. Using this, we then find the leading-order estimate of $\mathcal{R}$ to simply be
\begin{equation}\label{total-R-est}
|\mathcal{R}|\leq C(E_s)^\frac{3}{2}.
\end{equation}

\noindent Now, define $\Lambda=\frac{1}{2}\min\{\alpha_E,\alpha_Y\big\}$. Because both $E^{(1)}_\mathrm{Ein}+c_E\Gamma^{(1)}_\mathrm{Ein}$ and $E^{(1)}_\mathrm{YM}-c_Y\Gamma^{(1)}_\mathrm{YM}$ are positive-definite, we then have that the time evolution of the first-order total energy is bounded by
\begin{align}
\partial_TE^{(1)}&\leq-2\Lambda\big(E^{(1)}_\mathrm{Ein}+E^{(1)}_\mathrm{YM}+c_E\Gamma^{(1)}_\mathrm{Ein}-\Gamma^{(1)}_\mathrm{YM}\big)+\mathcal{R}=-\Lambda E^{(1)}+\mathcal{R}.
\end{align}

\noindent Furthermore, because $E^{(i)}(\k,u,\E,A)\approx E^{(1)}\big(\nabla[\gamma]^i\k,\nabla[\gamma]^iu,\nabla[\gamma]^i\E,\nabla[\gamma]^iA\big)$, we obtain an analogous inequality for each $E^{(i)}$. We remark that in obtaining these higher-order inequalities, we will need to commute differential operators in order to apply the same arguments as for the first-order energies. However by applying Proposition \ref{important-ineqs}.2, we will be able to bound these commutator terms by the appropriate Sobolev norms and avoid any issues with regularity; c.f. Section 2.1 of \cite{Andersson2003} or Section 5 of \cite{Mondal2021} for similar, detailed calculations.

These higher-order energy estimates, along with the estimate of $\mathcal{R}$ in \eqref{total-R-est}, leads us to the follow theorem.
\begin{theorem}
Fix $\widehat{A}=0$ and $s>\frac{n}{2}+1$. Let $B_\delta(0)$ be a ball of sufficiently small radius $\delta$ containing $(\k,u,\E,A)$ that satisfy the Einstein-Yang-Mills system. Fix $T_0>-\infty$. Then there exists a constant $C>0$ such that for $T\geq T_0$, the time evolution of $E_s$ satisfies the differential inequality
\begin{equation}
\partial_TE_s\leq-2\Lambda E_s+2C(E_s)^\frac{3}{2}.
\end{equation}
\end{theorem}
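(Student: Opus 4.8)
The plan is to assemble the ingredients already in place: the sector-wise differential inequalities \eqref{Ein-time-evol-bound} and \eqref{YM-time-evol-bound}, the positive-definiteness of the total energy (Theorem~\ref{total-E-pos-def}, building on Lemmas~\ref{YM-pos-def} and \ref{Ein-pos-def}), and the promotion of the first-order estimate to all orders $i\le s$.

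First I would combine the two sectors. Adding \eqref{Ein-time-evol-bound} and \eqref{YM-time-evol-bound}, and using that both $E^{(1)}_\mathrm{Ein}+c_E\Gamma^{(1)}_\mathrm{Ein}$ and $E^{(1)}_\mathrm{YM}-c_Y\Gamma^{(1)}_\mathrm{YM}$ are positive-definite, I would replace $\alpha_E$ and $\alpha_Y$ by the common smaller value $2\Lambda=\min\{\alpha_E,\alpha_Y\}$ to obtain $\partial_T E^{(1)}\le -2\Lambda E^{(1)}+\mathcal R$ with $|\mathcal R|\le|\mathcal R_\mathrm{Ein}|+|\mathcal R_\mathrm{YM}|$. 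Next I would control $\mathcal R$: every term in $\mathcal R_\mathrm{Ein}+\mathcal R_\mathrm{YM}$ has degree at least three in $(E_\mathrm{Ein})^{1/2}$ and $(E_\mathrm{YM})^{1/2}$, with every higher-degree term carrying a factor $e^{-2T}$ or $e^{-4T}$, which for $T\ge T_0$ is bounded by $e^{-2T_0}$ or $e^{-4T_0}$; the essential structural point, guaranteed by the rescaling of Section~\ref{rescaling}, is that no positive exponential $e^{+T}$ ever appears. Invoking Theorem~\ref{total-E-pos-def} to bound $E_\mathrm{Ein}+E_\mathrm{YM}\le CE_s$ and absorbing $e^{-2T_0}$, $e^{-4T_0}$ and all remaining constants into a single $C$ then gives $|\mathcal R|\le C(E_s)^{3/2}$, i.e.\ \eqref{total-R-est}.

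To pass to higher orders, for each $1\le i\le s$ I would repeat the computation of $\partial_T E^{(i)}$. The only new feature relative to $i=1$ is that $\partial_T$ must be commuted past the iterated operators $\mathcal L^{i-1}_{g,\gamma}$, $\mathcal L^{i}_{g,\gamma}$, $\mathfrak L^{i-1}_{g,\gamma}$, $\mathfrak L^{i}_{g,\gamma}$ and past $\nabla[\gamma]$; by Proposition~\ref{important-ineqs}.2 the resulting commutator terms are bounded by products of Sobolev norms of $(\k,u,\E,A)$ and of the gauge variables $\omega$, $X$, $\phi$, so that, after feeding in the elliptic estimates of Lemmas~\ref{omega-estimate}--\ref{phi-estimate} together with Theorem~\ref{total-E-pos-def}, they are again $\le C(E_s)^{3/2}$ up to $e^{-2T}$, $e^{-4T}$ factors treated as before. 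This gives $\partial_T E^{(i)}\le -2\Lambda E^{(i)}+C(E_s)^{3/2}$ with $C$ independent of $i$ since $s$ is fixed (cf.\ Section~2.1 of \cite{Andersson2003} and Section~5 of \cite{Mondal2021} for the analogous bookkeeping). Summing over $i=1,\dots,s$, using $\sum_{i=1}^s E^{(i)}=E_s$ from \eqref{total-energy} and relabelling the resulting constant $sC$ as $C$ (with a harmless extra factor of $2$), I obtain $\partial_T E_s\le -2\Lambda E_s+2C(E_s)^{3/2}$ for all $T\ge T_0$.

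The hard part will be the higher-order step: one must verify that \emph{every} commutator generated by moving $\partial_T$ through the iterated Lichnerowicz- and Hodge-type Laplacians --- whose coefficients involve $\partial_T g$, $\partial_T\gamma$ and $R[\gamma]$ --- contributes only terms that are at least quadratic in the small data or carry a decaying exponential, so that after Theorem~\ref{total-E-pos-def} they collapse into the single cubic remainder $C(E_s)^{3/2}$ with a constant uniform in $T$ and in $i\le s$. Because the gravitational and Yang-Mills sectors are coupled through the elliptic estimates for $N$, $X$, and $\phi$, I would in particular have to confirm that no cross term degrades this cubic structure; the requisite algebra mirrors the bookkeeping in the cited references but must be checked to close in the present coupled setting.
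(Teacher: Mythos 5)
Your proposal is correct and follows essentially the same route as the paper: combining the sector-wise bounds \eqref{Ein-time-evol-bound} and \eqref{YM-time-evol-bound} with $2\Lambda=\min\{\alpha_E,\alpha_Y\}$, collapsing the remainder to $C(E_s)^{3/2}$ via Theorem~\ref{total-E-pos-def} and the bound $e^{-T}\leq e^{-T_0}$, and promoting to higher orders through the equivalence $E^{(i)}(\k,u,\E,A)\approx E^{(1)}(\nabla[\gamma]^{i}\k,\dots)$ with commutators controlled by Proposition~\ref{important-ineqs}.2 before summing over $i\leq s$. The ``hard part'' you flag is exactly what the paper also defers to the cited bookkeeping in \cite{Andersson2003} and \cite{Mondal2021}, so no substantive difference remains.
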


\noindent We may now integrate up this inequality. In particular, let $Y=(E_s)^\frac{1}{2}$. We then have the differential inequality
\begin{equation*}
\partial_TY\leq-\Lambda Y+CY^2.
\end{equation*}

\noindent Integrating up and substituting back $E_s$, we find that
\begin{equation}
E_s(T)\leq\Lambda^2\Bigg[C+\bigg(\frac{\Lambda}{\sqrt{E_s(T_0)}}-C\bigg)e^{\Lambda(T-T_0)}\Bigg]^{-2}.
\end{equation}

\noindent When $T>T_0$ and $E_s(T_0)<\frac{\Lambda^2}{C^2}$, we will have that $E_s(T)$ is bounded from above by an exponentially decaying function and from below by $0$. Hence $E_s$ must tend to zero as $T$ tends to infinity. We must simply require that the initial energy $E_s(T_0)$, determined by the initial values of $(\k,u,\E,A)$, is sufficiently small. However, with sufficiently small initial data, we have that the $E_s$ will be bounded for all $T\in[T_0,\infty)$ and thus, in light of Theorem \ref{total-E-pos-def}, the data $(\k,u,\E,A)$ will decay to a stable solution of the vacuum Einstein's equations. This leads us to the main theorem for this paper.
\begin{theorem}\label{main-theorem}
Suppose that $\gamma_0$ has a smooth deformation space $\mathcal{N}$ and that the operator $\mathcal{L}_{\gamma_0,\gamma_0}$ has non-negative spectrum. Fix $s>\frac{n}{2}+1$ for $n>3$. There is then a $\delta>0$ such that if $(\k_0,g_0-\gamma_0,\E_0,A_0)\in B_\delta(0)\subset H^{s-1}\times H^s\times H^{s-1}\times H^s$ with $\gamma_0$ a shadow metric of $g_0$, then the Cauchy problem for the Einstein-Yang-Mills system with initial data $(\k_0,g_0,\gamma_0,\E_0,A_0)$ is globally well-posed to the future.

In addition, if $T\mapsto(\k,g,\gamma,\E,A)$ is the maximal development to the Cauchy problem for the Einstein-Yang-Mills system, then there is a $\gamma_*\in\mathcal{N}$ such that $(\k,g,\gamma,\E,A)\rightarrow(0,\gamma_*,\gamma_*,0,0)$ as $T\rightarrow\infty$.
\end{theorem}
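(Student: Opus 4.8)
The strategy is a continuation argument: feed the energy decay just established into the local well-posedness theory of \cite{Mondal2021} to rule out finite-time breakdown, and then integrate the induced decay of $\norm{\partial_T\gamma}$ to extract convergence of the shadow metric.

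First I would invoke local well-posedness. By \cite{Mondal2021}, for initial data $(\k_0,g_0-\gamma_0,\E_0,A_0)\in H^{s-1}\times H^s\times H^{s-1}\times H^s$ with $s>\frac{n}{2}+1$ satisfying the Einstein-Yang-Mills constraints and with $\gamma_0$ a shadow metric of $g_0$, there is a maximal $T_{\max}\in(T_0,\infty]$ and a unique solution $T\mapsto(\k,g,\gamma,\E,A)$ on $[T_0,T_{\max})$ in the stated regularity class, depending continuously on the data, with all constraint and gauge conditions propagated and the shadow metric $\gamma(T)\in\mathcal{N}$ well defined so long as $g(T)$ remains in a fixed small $H^s$-neighborhood of $\gamma_0$ (Lemma 4.3 of \cite{Andersson2009}). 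The local theory also supplies the usual continuation criterion: if $T_{\max}<\infty$, then as $T\to T_{\max}$ either $\hnorm{\k}{s-1}+\hnorm{u}{s}+\hnorm{\E}{s-1}+\hnorm{A}{s}$ is unbounded or $g$ escapes every fixed neighborhood of $\gamma_0$.

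Next comes the bootstrap. Fix $\delta>0$ small, to be determined, and let $[T_0,T_\star)$ be the maximal subinterval of $[T_0,T_{\max})$ on which $(\k,u,\E,A)(T)\in B_\delta(0)$. On this interval all hypotheses of the energy estimates of Section \ref{energies} hold, so $\partial_TE_s\leq-2\Lambda E_s+2CE_s^{3/2}$; setting $Y=E_s^{1/2}$ and integrating gives
\begin{equation*}
E_s(T)\leq\Lambda^2\left[C+\left(\frac{\Lambda}{\sqrt{E_s(T_0)}}-C\right)e^{\Lambda(T-T_0)}\right]^{-2},\qquad T\in[T_0,T_\star).
\end{equation*}
Since $E_s(T_0)$ is controlled above by the initial Sobolev norms, taking $\delta$ small enough forces $E_s(T_0)<\Lambda^2/C^2$, whence $E_s(T)\leq CE_s(T_0)e^{-2\Lambda(T-T_0)}$ on $[T_0,T_\star)$. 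By Theorem \ref{total-E-pos-def}, $\hnorm{\k}{s-1}^2+\hnorm{u}{s}^2+\hnorm{\E}{s-1}^2+\hnorm{A}{s}^2\leq CE_s(T)$, so these norms stay strictly below $\delta$ once $\delta$ is chosen relative to the initial data size. This strict improvement contradicts the maximality of $T_\star$ unless $T_\star=T_{\max}$; and in that case the norms remain bounded and $g$ stays near $\gamma_0$, so the continuation criterion forces $T_{\max}=\infty$. This yields global existence, and together with local continuous dependence and local uniqueness patched along the interval, global well-posedness to the future.

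Finally, for the convergence statement: Lemma \ref{gamma-evol} together with the elliptic estimates of Lemmas \ref{omega-estimate} and \ref{X-estimate} gives $\norm{\partial_T\gamma}\leq C(\hnorm{\k}{s-1}+\hnorm{u}{s}+\hnorm{\omega}{s+1}+\hnorm{X}{s+1})\leq CE_s^{1/2}(T)\leq C'e^{-\Lambda(T-T_0)}$, so $T\mapsto\partial_T\gamma$ is integrable on $[T_0,\infty)$. Hence $\gamma(T)$ is Cauchy in the finite-dimensional closed manifold $\mathcal{N}$ and converges to some $\gamma_\star\in\mathcal{N}$; since $u=g-\gamma\to0$ in $H^s$ we get $g(T)\to\gamma_\star$, and combined with $\k,\E,A\to0$ this is exactly $(\k,g,\gamma,\E,A)\to(0,\gamma_\star,\gamma_\star,0,0)$ as $T\to\infty$. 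The main obstacle is the uniformity of constants through the bootstrap: one must verify that $C$ and the threshold $\Lambda^2/C^2$ do not degenerate as $\gamma$ ranges over the part of $\mathcal{N}$ reached by the flow, i.e.\ that the spectral gaps $\lambda_{\mathcal{L}},\lambda_{\mathfrak{L}}$ stay bounded below by $\lambda_E,\lambda_Y>0$ uniformly and that the shadow metric remains well defined throughout, both of which hold for $\delta$ small by the continuity of the relevant spectra noted at the start of Section \ref{energies}. Once this uniformity is secured, the continuation argument is routine.
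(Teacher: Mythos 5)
Your proposal is correct and follows essentially the same route as the paper: the heart of the argument is the differential inequality $\partial_TE_s\leq-2\Lambda E_s+2C(E_s)^{3/2}$, its integration under the smallness condition $E_s(T_0)<\Lambda^2/C^2$, and Theorem \ref{total-E-pos-def} to convert energy decay into decay of the Sobolev norms of $(\k,u,\E,A)$. You simply make explicit the local-existence/continuation bootstrap and the convergence of the shadow metric via integrability of $\norm{\partial_T\gamma}$, steps the paper leaves implicit, so there is nothing substantive to flag.
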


The main theorem also yields the following corollary, with an argument following that in \cite{Andersson2004}.
\begin{corollary}
Let $(\k_0,g_0,\gamma_0,\E_0,A_0)$ be as in Theorem \ref{main-theorem}, and let $(\tilde{\k}_0,\tilde{g}_0,\tilde{\gamma}_0,\tilde{\E}_0,\tilde{A}_0)$ be the corresponding unscaled Cauchy data. Let $(\tilde{\k},\tilde{g},\tilde{\gamma},\tilde{\E},\tilde{A})$ be the maximal development to the Cauchy problem for the unscaled Einstein-Yang-Mills system. Then the spacetime corresponding to $(\tilde{\k},\tilde{g},\tilde{\gamma},\tilde{\E},\tilde{A})$ is geodesically complete to the future.
\end{corollary}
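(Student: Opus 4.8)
The plan is to follow the argument of Andersson and Moncrief \cite{Andersson2004}: one converts the exponential decay of the rescaled energy $E_s$ obtained in Theorem~\ref{main-theorem} into uniform-in-$T$ control of the unscaled lapse and second fundamental form along a causal geodesic, and then estimates the affine length directly. Recall that the unscaled spacetime is $M=[T_0,\infty)\times\Sigma$, globally hyperbolic with the CMC slices $\Sigma_T$ as Cauchy surfaces, that the unscaled fields are $\tilde g_{ij}=\tau^{-2}g_{ij}$, $\tilde N=\tau^{-2}N$, $\tilde X^i=\tau^{-1}X^i$, and that $\tau=\tau_0e^{-T}$ with the physical mean curvature of $\Sigma_T$ equal to $\tau$. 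First I would note that a future-directed, future-inextendible causal geodesic $c$ must cross every Cauchy slice $\Sigma_T$, so $T\circ c\to\infty$; hence it suffices to show that the affine parameter of such a $c$ diverges as $T\circ c\to\infty$.

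Next I would parametrise $c$ by $T$ and set $\Theta=-\tilde g(\dot c,\mathbf n)>0$, the energy of $c$ with respect to the foliation ($\Theta\ge1$ in the timelike case with $s$ proper time, $s$ affine in the null case). From $\partial_\tau=\tilde N\mathbf n+\tilde X$ one gets $\mathrm d\tau/\mathrm ds=\Theta/\tilde N$ along $c$, hence, up to explicit positive constants,
\begin{equation*}
\frac{\mathrm ds}{\mathrm dT}=\frac{\tilde N}{\Theta}\,\Big|\frac{\mathrm d\tau}{\mathrm dT}\Big|=\frac{\tau^{-2}N}{\Theta}\,|\tau|\ \sim\ \frac{n}{\Theta\,|\tau_0|}\,e^{T},
\end{equation*}
so it is enough to prove that $\Theta$ stays bounded above along $c$. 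For that I would start from the first-variation identity
\begin{equation*}
\frac{\mathrm d\Theta}{\mathrm ds}=k(\dot c^{\top},\dot c^{\top})-\Theta\,\dot c^{\top}\!\cdot\nabla\ln\tilde N,
\end{equation*}
with $\dot c^{\top}$ the $\Sigma_\tau$-tangential part of $\dot c$ and $k$ the physical second fundamental form, and rewrite it using $k_{ij}=\tfrac{\tau}{n}\tilde g_{ij}+\widehat k_{ij}$ and $|\dot c^{\top}|_{\tilde g}^2=\Theta^2-1$ (resp.\ $\Theta^2$). The rescaling is designed so that $|\widehat k|_{\tilde g}$ and $|\nabla\ln\tilde N|_{\tilde g}$ are bounded, via $H^{s-1}\hookrightarrow L^\infty$, by decaying powers of $\tau$ times $\hnorm{\k}{s-1}$ and $\hnorm{\omega}{s+1}$; tracking these powers, the factors $\tau=\tau_0e^{-T}$ combine with $\mathrm ds/\mathrm dT$ to leave an autonomous dissipative principal part,
\begin{equation*}
\frac{\mathrm d\Theta}{\mathrm dT}=-\frac{\Theta^2-1}{\Theta}+\mathcal R_c\quad(\text{timelike}),\qquad \frac{\mathrm d\Theta}{\mathrm dT}=-\Theta+\mathcal R_c\quad(\text{null}),
\end{equation*}
with $|\mathcal R_c|\le C\,\psi(T)\,(1+\Theta^2)$ and $\psi(T)=\hnorm{\k}{s-1}+\hnorm{u}{s}+\hnorm{\omega}{s+1}+e^{-2T}$. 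By the decay $E_s(T)\lesssim e^{-2\Lambda(T-T_0)}$ and the elliptic estimate of Lemma~\ref{omega-estimate}, $\psi\in L^1([T_0,\infty))$, and since the principal part is dissipative a Gr\"onwall/comparison argument bounds $\Theta$ uniformly on $[T_0,\infty)$ (in fact $\Theta\to1$, resp.\ $\Theta\to0$).

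With $\Theta$ bounded above the first display gives $\mathrm ds/\mathrm dT\ge c_0e^{T}$ for $T$ large, so $\int^{\infty}(\mathrm ds/\mathrm dT)\,\mathrm dT=\infty$ and the affine parameter of $c$ is unbounded; thus every future-directed, future-inextendible causal geodesic of $\tilde g$ has infinite future affine length, i.e.\ the spacetime is future geodesically complete, which is the claim. The substantive step, and the only place real work is required, is the derivation of the $\Theta$-equation: one must carry out the conformal rescaling carefully enough that the powers $\tau^{\pm k}$ cancel exactly against those in $\mathrm ds/\mathrm dT$, leaving a dissipative principal term plus an error $\mathcal R_c$ that carries at least one factor of $\k$, $u$, $\omega$ or $e^{-2T}$ and is therefore integrable in $T$ by the energy estimates of Section~\ref{energies}. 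Everything else---global hyperbolicity forcing $T\to\infty$, the $L^1$-bound on $\psi$, and the final integration---is elementary.
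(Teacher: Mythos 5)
Your proposal is correct and is essentially the argument the paper intends: the paper gives no details and simply invokes the geodesic-completeness argument of Andersson--Moncrief \cite{Andersson2004}, which is precisely what you reconstruct (bounding $\Theta=-\tilde g(\dot c,\mathbf n)$ along causal geodesics via the $k$/$\nabla\ln\tilde N$ variation identity, using the integrability in $T$ of the perturbation norms supplied by the decay of $E_s$, and then integrating $\mathrm ds/\mathrm dT\sim e^{T}/\Theta$ to get infinite affine length). The scalings $\tilde N=\tau^{-2}N$, $\tau=\tau_0e^{-T}$ and the cancellation of exponentials in the principal dissipative term check out, so no gap beyond the routine bookkeeping you already flag.
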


\section{Gauge Covariant $3+1$ Yang-Mills Equations}\label{gauge-covariant}
As mentioned earlier, the energy estimates found in this paper will fail for $n=3$ spatial dimensions. This appeared to be a result of the choice of gauge made. However, we will presernt a gauge-covariant argument as to why such energy estimates will fail to yield the desired existence theorem.

First, let us define the gauge-covariant derivatives as follows 
\begin{align*}
\widehat{\mathcal{L}}_{\partial_{t}}&:=\partial_{t}+[A_{0},\cdot], \\
\widehat{\mathcal{L}}_X&:=\mathcal{L}_{X}+[X\cdot A,\cdot], \\ \widehat{\nabla}_i&:=\nabla_{i}+[A_{i},\cdot].
\end{align*}

\noindent Now, we define the chromo-electric and chromo-magnetic fields associated to the Yang-Mills field strength $F^a{}_{b\mu\nu}\mathrm{d}x^{\mu}\wedge\mathrm{d}x^{\nu}$ as
\begin{align*}
\E_i&:=F(\mathbf{n},\partial_{i}), \\
\mathcal{H}_i&:={}^{*}F(\mathbf{n},\partial_{i})=\frac{1}{2}\epsilon_{\mathbf{n}i}{}^{kl}F_{kl}=\frac{1}{2}\epsilon_i{}^{kl}F_{kl},
\end{align*}

\noindent where $\epsilon_{\mu\nu\alpha\beta}$ is the $4-$dimensional volume form. In terms of the chromo-electric and chromo-magnetic fields $\E$ and $\mathcal{H}$, the Yang-Mills equations take the following form in local ADM coordinates
\begin{align*}
\widehat{\mathcal{L}}_{\partial_t}\E_i&=\widehat{\mathcal{L}}_X\E_i-N\epsilon_i{}^{jk}\widehat{\nabla}_j\mathcal{H}_k-2Nk_i{}^j\E_j+N\tr(k)\E_i-\epsilon_i{}^{jk}\nabla_jN\mathcal{H}_k,\\
\widehat{\mathcal{L}}_{\partial_t}\mathcal{H}_i&=\widehat{\mathcal{L}}_X\mathcal{H}_i+N\epsilon_i{}^{jk}\widehat{\nabla}_j\E_k-2Nk_i{}^j\mathcal{H}_j+N\tr(k)\mathcal{H}_i+\epsilon_i{}^{jk}\nabla_jN\E_k
\end{align*}

\noindent which after re-scaling as in section \ref{rescale-eqns-section} reads in the dimensionless time $T$,
\begin{align*}
\widehat{\mathcal{L}}_{\partial_T}\E_i&=-\widehat{\mathcal{L}}_X\E_i+N\epsilon_i{}^{jk}\widehat{\nabla}[\gamma]_j\mathcal{H}_k+2N\k_i{}^{j}\E_j{}-{}\underbrace{\bigg(\frac{N}{3}-1\bigg)\E_i}_{\mathrm{I}}{}+{}\epsilon_i{}^{jk}\nabla_jN\mathcal{H}_k \\
&\qquad-N\epsilon_i{}^{jk}\big(\Gamma[g]^\ell{}_{jk}-\Gamma[\gamma]^\ell{}_{jk}\big)H_\ell, \\
\widehat{\mathcal{L}}_{\partial_T}\mathcal{H}_i&=-\widehat{\mathcal{L}}_X\mathcal{H}_i-N\epsilon_i{}^{jk}\widehat{\nabla}[\gamma]_j\E_k+2N\k_i{}^{j}\mathcal{H}_j{}-{}\underbrace{\bigg(\frac{N}{3}-1\bigg)\mathcal{H}_i}_{\mathrm{II}}{}-{}\epsilon_i{}^{jk}\nabla_jN\E_k \\
&\qquad+N\epsilon_i{}^{jk}\big(\Gamma[g]\ell{}_{jk}-\Gamma[\gamma]\ell{}_{jk}\big)\E_\ell.
\end{align*}

\noindent Notice that, in $3+1$ dimensions, we have $N=3+(\text{second order terms})$. Consequently, one loses the decay that may potentially arise from the terms $\mathrm{I}$ and $\mathrm{II}$. We define a gauge-invariant energy for the Yang-Mills fields
\begin{equation*}
E_\mathrm{YM}:=\frac{1}{2}\sum_{|I|\leq s-1}\langle\widehat{\nabla}^I\mathcal{E},\widehat{\nabla}^I\mathcal{E}\rangle+\langle \widehat{\nabla}^I\mathcal{H},\widehat{\nabla}^I\mathcal{H}\rangle.
\end{equation*}

\noindent Note that this is indeed gauge-invariant and for $s=1$, reduces to the standard Yang-Mills energy $\int_M\mathfrak{T}(\mathbf{n},\mathbf{n})$. We need compute up to $s=3$. This requires that we obtain several commutation relations, as terms of the form $[\widehat{\mathcal{L}}_{\partial_T},\widehat{\nabla}[\gamma]_i]$ will be present in the higher-order energies. Explicitly, we compute
\begin{align*}
[\widehat{\mathcal{L}}_{\partial_T},\widehat{\nabla}[\gamma]_i]\mathcal{R}_j&=[\widehat{\mathcal{L}}_{\partial_T},\widehat{\mathcal{L}}_{\partial_i}]\mathcal{R}_j+\partial_T\Gamma[\gamma]^\ell{}_{ij}\mathcal{R}_\ell \\
&=F(\partial_T,\partial_i)\mathcal{R}_j+\partial_T\Gamma[\gamma]^\ell{}_{ij}\mathcal{R}_\ell \\
&=(N\mathcal{E}_i+X^\ell\epsilon_{i\ell}{}^{k}\mathcal{H}_k)\mathcal{R}_j+\partial_T\Gamma[\gamma]^\ell{}_{ij}\mathcal{R}_\ell.
\end{align*}

\noindent Notice that there are no non-linear terms present at the level of the $L^2$ energy. However, due to the commutators $[\widehat{\mathcal{L}}_{\partial_T},\widehat{\nabla}[\gamma]_i]\E_j$ and $[\widehat{\mathcal{L}}_{\partial_T},\widehat{\nabla}[\gamma]_i]\mathcal{H}_j$, we find that at the level of the $H^1$ energy we have the non-linear terms $N\E^i\E^j\widehat{\nabla}_i\E_j$ and $N\E^i\mathcal{H}^j\widehat{\nabla}_i\mathcal{H}_j$, among others. In order to control these, we would require a uniform decay term that, as we have seen, is absent in $3+1$ dimensions. Thus we are unable to close the argument simply by means of energy estimates. This is essentially a consequence of the conformal invariance of $3+1$ Yang-Mills equations. Since the Milne model is conformal to a cylinder spacetime that exhibits no decay, one would naturally expect a loss of decay for Yang-Mills fields on a Milne background as well. To this end, the light cone estimate technique of \cite{Eardley1982B,Vazquez2022} should help closing the argument with different asymptotics for the Yang-Mills curvature.

\section{Concluding Remarks}
In this paper we gave a small data global existence result for the $n+1$, $n\geq 4$, dimensional Milne universe under coupled Einstein-Yang-Mills perturbations. One of the main difficulties is the choice of an appropriate gauge that both extracts the hyperbolic characteristics of the Einstein-Yang-Mills equations and is also suitable for long time evolution. We adapted the constant mean curvature spatial harmonic gauge introduced by \cite{Andersson2003} for the gravitational sector and a generalized Coulomb gauge introduced by \cite{Mondal2021}. A Coulomb gauge was previously utilized in the context of proving the long-time existence problem. The most notable use was by \cite{Klainerman1995} to prove the long-time existence of rough solutions of the Yang-Mills equations. However, to our knowledge, the generalized Coulomb gauge has not been utilized in previous studies. The particular reason for choosing this gauge is the fact that the divergence of the connection of non-abelian gauge theory is not a gauge covariant object, and as such one ought to subtract a reference connection in order to apply the divergence operator. This gauge choice, however, encounters the issue of a Gribov ambiguity. However, since we are working with small data, such a problem can be avoided. In fact, as mentioned previously, there does not exist a globally `good' gauge for Yang-Mills theory because of the topology and geometry of the orbit space of connections. As discussed in the main body of the article, a similar problem arises in the spatial harmonic gauge of gravity sector \cite{Fischer1996}, but, once again, a small data assumption remedies the problem.

In \cite{Friedrich1991} the Einstein-Yang-Mills equations were studied on the de-Sitter spacetime by utilizing a conformal method in $3+1$ dimensions. Recently, \cite{Liu2022} extended the result to higher dimensions. A notable difference between our study is that the de-Sitter spacetime contains a positive cosmological constant that induces a rapid (exponential) expansion. The rapid expansion of spacetime should generate sufficient decay to yield a global existence, but even though the proof of such a result is intuitive, it is a monumental task to prove it in a rigorous way. On the other hand, the Milne spacetime that we work with is a borderline case, as it exhibits a polynomial expansion instead of an exponential one. In such a scenario, it is in general difficult to obtain a uniform decay estimate for either the Einstein or Yang-Mills system. Nevertheless, \cite{Andersson2009} proved the uniform asymptotic decay property for vacuum gravity by constructing a modified energy. This worked for all dimensions $n+1$ with $n\geq 3$. In $3+1$ vacuum gravity, however, \cite{Andersson2004} provided a more geometric argument by working with the Bel-Robinson energy directly instead of the ad-hoc wave equation type of energy in \cite{Andersson2009}. Remarkably, we noted that both of these approaches fail for the Yang-Mills sector since the appropriately scaled Yang-Mills evolution equation loses the decay precisely for $3+1$ dimensions. Through a gauge-covariant argument, we showed that such a loss of decay is not an artifact of the choice of gauge. In $n+1$ dimensions for $n\geq 4$, however, one obtains the necessary decay factor in the electric part of the energy. The construction of a suitable corrected energy yields an overall uniform decay factor, allowing us to control the non-linearities. Contrary to the Yang-Mills system, this loss of decay does not cause a problem in the Maxwell case since the latter is a linear theory. To this end we note the study by \cite{Branding2019} that considers a Kaluza-Klein spacetime ${}^4M\times\mathbb{T}^q$, where ${}^4M$ is the $3+1$ dimensional Milne spacetime and $\mathbb{T}^q$ is the flat $q-$torus. A standard Kaluza-Klein reduction led to the Einstein-Maxwell-dilaton system due to $\mathbb{T}^{q}$ being abelian. In such a case, \cite{Branding2019} was able to prove the global future non-linear stability result purely by means of the energy estimates. Instead of $\mathbb{T}^q$ being the extra compact dimension, if one chooses $\mathbb{S}^q$, then the reduced system would be an Einstein-dilaton-Yang-Mills type. On the basis of our observation, a pure energy argument would not work in such a case. However, we are optimistic in the sense that we believe a stability result holds which we are simply not able to prove at present. Perhaps a more refined light-cone estimate type argument \cite{Vazquez2022} can be utilized to obtain the desired result. On the other hand, if the $3+1$ Milne model were to be truly unstable under coupled Einstein-Yang-Mills perturbations, it would lead to new physics. These issues are under intense investigation.

\bibliography{bib}
\bibliographystyle{ieeetr}

\end{document}